\newcommand{\poly}{\operatorname{poly}}
\newcommand{\polylog}{\operatorname{polylog}}
\newcommand{\Ostar}{O^\star}
\newcommand{\reals}{\mathbb{R}}
\newcommand{\OPT}{{\sf OPT}\xspace}
\newcommand{\cost}{{\sf COST} \xspace}
\newcommand{\sol}{{\sf SOL} \xspace}
\newcommand{\goal}{{\sf GOAL} \xspace}
\newcommand{\mset}{{\mathcal M}}
\newcommand{\aset}{{\mathcal A}}
\newcommand{\sset}{{\mathcal S}} 
\newcommand{\uset}{{\mathcal U}}
\newcommand{\iset}{{\mathcal I}}
\newcommand{\A}{{\mathbb A}}
\newcommand{\B}{{\mathbb B}}
\newcommand{\N}{{\mathbb N}}
\newcommand{\bG}{\bar{G}}
\newcommand{\MaxCov}{{\sf MaxCov}\xspace}
\newcommand{\MinLab}{{\sf MinLab}\xspace}
\newcommand{\MinRep}{{\sf MinRep}\xspace}
\newcommand{\opt}{{\sf OPT}\xspace}
\newcommand{\Clique}{\mbox{\sf Clique}\xspace}
\newcommand{\DomSet}{\mbox{\sf DomSet}\xspace}
\newcommand{\SetCov}{\mbox{\sf SetCov}\xspace}
\newcommand{\SAT}{\mbox{\sf SAT}\xspace}
\newcommand{\CSP}{\mbox{\sf CSP}\xspace}
\newcommand{\MIS}{\mbox{\sf MIS}\xspace}
\newcommand{\BiClique}{\mbox{\sf Biclique}\xspace}
\newcommand{\IPath}{\mbox{\sf InducedPath}\xspace}
\newcommand{\DkS}{\mbox{\sf DkS}\xspace}
\newcommand{\IM}{\mbox{\sf IM}\xspace}
\newcommand{\Den}{\mbox{\sf Den}\xspace}
\newcommand{\Denk}{\mbox{\sf Den}\textsubscript{$k$}\xspace}
\renewcommand{\P}{\mbox{\sf P}}
\newcommand{\NP}{\mbox{\sf NP}}
\newcommand{\BPP}{\mbox{\sf BPP}}
\newcommand{\ZPP}{\mbox{\sf ZPP}}
\newcommand{\DTIME}{\mbox{\sf DTIME}}
\newcommand{\BPTIME}{\mbox{\sf BPTIME}}
\newcommand{\FPT}{\mbox{\sf FPT}\xspace}
\newcommand{\W}{\mbox{\sf W}\xspace}
\newcommand{\E}{\mathop{\mathbb{E}}}
\newcommand{\rand}{\text{rand}}
\def\danupon#1{\marginpar{$\leftarrow$\fbox{D}}\footnote{$\Rightarrow$~{\sf #1 --Danupon}}}
\def\parinya#1{\marginpar{$\leftarrow$\fbox{P}}\footnote{$\Rightarrow$~{\sf #1 --Parinya}}}
\def\danupon#1{}
\def\parinya#1{}
\renewcommand{\paragraph}[1]{\medskip\noindent{\bf #1}\xspace}
\colorlet{DarkRed}{red!50!black}
\colorlet{DarkGreen}{green!50!black}
\colorlet{DarkBlue}{blue!50!black}
\declaretheorem[numberwithin=section]{theorem}
\declaretheorem[numberlike=theorem]{lemma}
\declaretheorem[numberlike=theorem]{proposition}
\declaretheorem[numberlike=theorem]{corollary}
\declaretheorem[numberlike=theorem]{definition}
\declaretheorem[numberlike=theorem]{claim}
\declaretheorem[numberlike=theorem]{conjecture}
\crefname{algorithm}{Algorithm}{Algorithms}
\Crefname{algorithm}{Algorithm}{Algorithms}
\newcommand{\cX}{\mathcal{X}}
\newcommand{\cI}{\mathcal{I}}
\newcommand{\wtGamma}{\widetilde{\Gamma}}
\newcommand{\wtU}{\widetilde{U}}
\newcommand{\wtPi}{\widetilde{\Pi}}
\newcommand{\wtG}{\widetilde{G}}
\newcommand{\wtV}{\widetilde{V}}
\newcommand{\wtE}{\widetilde{E}}
\begin{document}

\title{From Gap-ETH to FPT-Inapproximability:\\ Clique, Dominating Set, and More} 
\author{
	Parinya Chalermsook\thanks{Aalto University, Finland. Email: \texttt{parinya.chalermsook@aalto.fi}.}
	\and
	Marek Cygan\thanks{Institute of Informatics, University of Warsaw, Poland. 
	Email: \texttt{cygan@mimuw.edu.pl}.
	} 
	\and
	Guy Kortsarz\thanks{Rutgers University-Camden, New Jersey, USA.
		Email:\texttt{guyk@camden.rutgers.edu}.
        }
	\and
        Bundit Laekhanukit\thanks{
        Weizmann Institute of Science, Israel \&
        Shanghai University of Finance and Economics.
        Email: \texttt{bundit.laekhanukit@weizmann.ac.il}
        }
	\and
	Pasin Manurangsi\thanks{University of California, Berkeley, USA.
	Email: \texttt{pasin@berkeley.edu}.}
	\and
	Danupon Nanongkai\thanks{KTH Royal Institute of Technology, Sweden. Email: \texttt{danupon@kth.se}
	}	
	\and
	Luca Trevisan\thanks{University of California, Berkeley, USA.
	Email: \texttt{luca@berkeley.edu}.}
}

\date{\today}

\hypersetup{
	pdftitle = {FPT-Inapproximability for Clique, Dominating Set, and More},
	pdfauthor = {}
}

%
%

\begin{titlepage}
	\maketitle
	\pagenumbering{roman}
	\begin{abstract}
We consider questions that arise from the intersection between the
areas of polynomial-time approximation algorithms, subexponential-time
algorithms, and fixed-parameter tractable algorithms.
The questions, which have been asked several times 
(e.g., \cite{Marx08,FellowGMS12,DowneyF13}), are whether there is a
non-trivial {\em FPT-approximation} algorithm for the Maximum Clique
(\Clique) and Minimum Dominating Set (\DomSet) problems parameterized
by the size of the optimal solution. 
In particular, letting $\opt$ be the optimum and $N$ be the size
of the input, is  there an algorithm that runs in
$t(\opt)\poly(N)$ time and outputs a solution of size $f(\opt)$, for
any functions $t$ and $f$ that are independent of $N$ 
(for \Clique, we want $f(\opt)=\omega(1)$)?
%
	
In this paper, we show that both \Clique and \DomSet 
admit no non-trivial FPT-approximation algorithm, i.e.,
there is no $o(\opt)$-FPT-approximation algorithm for \Clique and 
no $f(\opt)$-FPT-approximation algorithm for \DomSet, for any function
$f$ (e.g., this holds even if $f$ is an exponential or the Ackermann
function). In fact, our results imply something even stronger: 
The best way to solve \Clique and \DomSet, even approximately, is to
essentially enumerate all possibilities.
Our results hold under the {\em Gap Exponential Time Hypothesis}
(Gap-ETH)~\cite{Dinur16,ManR16}, which states that no $2^{o(n)}$-time
algorithm can distinguish between a satisfiable 3\SAT formula and 
one which is not even $(1 - \varepsilon)$-satisfiable for some constant $\varepsilon > 0$.

Besides \Clique and \DomSet, we also rule out non-trivial
FPT-approximation for Maximum Biclique, the problem of finding
maximum subgraphs with hereditary properties (e.g., Maximum Induced Planar Subgraph),
and Maximum Induced Matching in bipartite graphs. 
Previously only exact versions of these problems were known to be
$\W[1]$-hard~\cite{Lin15,KhotR00,MoserS09}. 
Additionally, we rule out $k^{o(1)}$-FPT-approximation algorithm for
Densest $k$-Subgraph although this ratio does not yet match the trivial $O(k)$-approximation algorithm.

To the best of our knowledge, prior results only rule out constant
factor approximation for \Clique \cite{HajiaghayiKK13,BonnetE0P15} and 
$\log^{1/4+\epsilon}(\opt)$ approximation for \DomSet for any constant
$\epsilon > 0$ \cite{ChenL16}. 
Our result on \Clique significantly improves on
\cite{HajiaghayiKK13,BonnetE0P15}. 
However, our result on \DomSet is incomparable to \cite{ChenL16} since
their results hold under ETH while our results hold under Gap-ETH, 
which is a stronger assumption. 
\end{abstract}

	\newpage
	\setcounter{tocdepth}{2}
	\tableofcontents
\end{titlepage}

\newpage
\pagenumbering{arabic}

\section{Introduction}
\label{sec:intro}



{\em Fixed-parameter approximation algorithm} (in short, FPT-approximation algorithm) is a new concept emerging from a cross-fertilization between
two trends in coping with NP-hard problems: {\em approximation
  algorithms} and {\em fixed-parameter tractable (FPT) algorithms}.
Roughly speaking, an FPT-approximation algorithm is similar to an FPT
algorithm in that its running time can be of the form
$t(\opt)\poly(N)$ time (called the {\em FPT time}), where $t$ is
any function (possibly super exponentially growing), $N$ is the input
size, and $\opt$ is the value of the optimal solution\footnote{There
are many ways to parameterize a problem. In this paper we focus on the
{\em standard parameterization} which parameterizes the optimal
solution.}. 
It is similar to an approximation algorithm in that its output is an 
{\em approximation} of the optimal solution; however, the approximation
factor is analyzed in terms of the optimal solution (\opt) and 
{\em not} the input size ($N$). Thus, an algorithm for a maximization
(respectively, minimization) problem is said to be 
{\em $f(\opt)$-FPT-approximation} for some function $f$ if it outputs
a solution of size at least $\opt/f(\opt)$ (respectively, at most
$\opt\cdot f(\opt)$). \danupon{CHECK: Correct formulation?} 
For a maximization problem, such an algorithm is {\em non-trivial} when $f(\opt)$ is $o(\opt)$, while for a minimization problem, it is non-trivial for any computable function $f$.


The notion of FPT-approximation is useful when we are interested in a
small optimal solution, and in particular its existence connects to a
fundamental question {\em whether there is a non-trivial approximation
algorithm when the optimal solution is small}.
Consider, for example, the {\em Maximum Clique} (\Clique) problem,
where the goal is to find a clique (complete subgraph) with maximum
number of vertices in an $n$-vertex graph $G$. 
%
By outputting any single vertex, we get a trivial polynomial-time 
$n$-approximation algorithm.  
The bound can be improved to $O(\frac{n}{\log n})$ and even to 
$O(\frac{n (\log\log n)^2}{\log^3 n})$ with clever
ideas~\cite{Feige04}. 
Observe, however, that these bounds are quite meaningless when 
$\opt=O(\frac{n (\log\log n)^2}{\log^3 n})$ 
since outputting a single vertex already guarantees such bounds. 
In this case, a bound such as $O(\frac{\opt}{\log\log \opt})$ would be  
more meaningful.
Unfortunately, no approximation ratio of the form $o(\opt)$ is known
even when FPT-time is allowed\footnote{In fact, for maximization
problems, it can be shown that a problem 
admits an $f(\opt)$-FPT-approximation algorithm for some function $f = o(\opt)$
if and only if it admits a polynomial-time algorithm 
with approximation ratio $f'(\opt)$ 
for some function $f' = o(\opt)$ \cite{GroheG07,Marx08} 
(also see \cite{Marx13}). 
So, it does not matter whether the running time is polynomial on
the size of the input or depends on $\opt$.}
(Note that outputting a single vertex gives an $\opt$-approximation
guarantee.)

Similar questions can be asked for a minimization problem. 
Consider for instance, {\em Minimum Dominating Set} (\DomSet): Find the smallest set of vertices $S$ such
that every vertex in an $n$-vertex input graph $G$ has a neighbor in
$S$. 
\DomSet admits an $O(\log n)$-approximation algorithm via
a basic greedy method. 
However, if we want the approximation ratio to depend on $\opt$ and
not $n$, no $f(\opt)$-approximation ratio is known for any 
function $f$ (not even $2^{2^\opt}$).

In fact, the existence of  non-trivial
FPT-approximation algorithms for \Clique and \DomSet has been raised several times
in the literature  (e.g., \cite{Marx08,FellowGMS12,DowneyF13}). 
So far, the progress towards these questions can only rule out
$O(1)$-FPT-approximation algorithms for \Clique.  
This was shown
independently by Hajiaghayi~et~al.~\cite{HajiaghayiKK13} and
Bonnet~et~al.~\cite{BonnetE0P15}, assuming the Exponential Time
Hypothesis (ETH) and that a linear-size PCP exists. 
%
%
%
Alternatively, Khot and Shinkar \cite{KhotS16} proved this under a
rather non-standard assumption that solving quadratic
equations over a finite field under a certain regime of parameters is not in $\FPT$;
unfortunately, this assumption was later shown to be false~\cite{Kayal2014}.
For \DomSet, Chen and Li~\cite{ChenL16} could rule out
$O(1)$-FPT-approximation algorithms assuming $\FPT\neq \W[1]$.
Moreover, they improved the inapproximability ratio to
$\log^{1/4+\epsilon}(\OPT)$ for any constant $\epsilon > 0$
under the exponential time hypothesis (ETH), which asserts that
no subexponential time algorithms can decide whether a given $3$\SAT formula 
is satisfiable. Remark that ETH implies that $\FPT \neq \W[1]$.

%
%
%
%
%

\paragraph{Our Results and Techniques.} 
We show that there is no non-trivial FPT-approximation algorithm for
both \Clique and \DomSet. That is, there is no
$o(\opt)$-FPT-approximation algorithm for \Clique and no
$f(\opt)$-FPT-approximation algorithm for \DomSet, for any function
$f$. Our results hold under the {\em Gap Exponential Time Hypothesis}
(Gap-ETH), which states that distinguishing between a satisfiable $3$\SAT formula 
and one which is not even $(1-\epsilon)$-satisfiable requires 
exponential time for some constant $\epsilon > 0$ (see \Cref{sec:prelim}).
\danupon{TO DO: Make sure that this is stated in prelim}

Gap-ETH, first formalized in \cite{Dinur16,ManR16}, is a
stronger version of the aforementioned ETH. 
It has recently been shown to be useful in proving fine-grained 
hardness of approximation for
problems such as dense CSP with large alphabets~\cite{ManR16} and
Densest-$k$-Subgraph with perfect completeness~\cite{Man17}.

Note that Gap-ETH is implied by ETH if we additionally
assume that a linear-size PCP exists.
So, our result for \Clique significantly improves the results in
\cite{HajiaghayiKK13,BonnetLP16} under the same (in fact, weaker)
assumption. Our result for \DomSet also significantly improves the
results in \cite{ChenL16}, but our assumption is stronger.

In fact, we can show even stronger results: the best way to solve
\Clique and \DomSet, even approximately, is to {\em enumerate all
  possibilities} in the following sense. 
Finding a clique of size
$r$ can be trivially done in $n^{r}\poly(n)$ time by checking
whether any among all possible ${n \choose r}=O(n^r)$ sets of vertices
forms a clique. 
It was known under ETH that this is essentially the
best one can do~\cite{ChenHKX06,ChenHKX06b}. 
We show further that this running time is still needed, even when we
know that a clique of size much larger than $r$ exists in the graph (e.g.,
$\opt \geq 2^{2^r}$), assuming Gap-ETH.  
Similarly, for \DomSet,  we can  always find a dominating set of size $r$ in $n^{r}\poly(n)$ time. Under Gap-ETH, we show that there is no better way even when we just want to find a dominating set of size $q\gg r$. 


We now give an overview of our techniques. 
The main challenge in showing our results is that we want them to hold for the case where the optimal solution is {\em arbitrarily smaller} than the input size.
(This is important to get the FPT-inapproximability results.) To this end, (i) reductions cannot blow up the optimal solution by a function of the input size, and (ii) our reductions must start from problems with a large hardness gap, while having small $\opt$. Fortunately, Property (i) holds for the known reductions we employ.
%
%


The challenge of (ii) is that existing gap amplifying techniques (e.g., the parallel repetition theorem~\cite{Raz98} or the randomized graph product~\cite{BermanS92}), while amplifying the gap to arbitrarily large, cause the input size to be too large that existing $\opt$ reduction techniques (e.g., \cite{ChenHKX06,PatrascuW10}) cannot be applied efficiently (in particular, in subexponential time). 
We circumvent this by a step that amplifies the gap and reduce $\opt$ at the same time. 
In more detail, this step takes a 3SAT formula $\phi$ as an input and produces a ``label cover''\footnote{Our problem is an optimization problem on Label Cover instance, with a slightly different objective from the standard Label Cover. Please refer to Section~\ref{sec:label cover} for more detail.} instance $J$ (roughly, a bipartite graph with constraints on edges) such that: For any $c>0$, (i) If $\phi$ is satisfiable, then $J$ is satisfiable, and (ii) if $\phi$ is at most $0.99$ satisfiable, then less than $c$-fraction of constraints of $J$ can be satisfied. 
Moreover, our reduction allows us to ``compress'' either the the left-hand-side or the right-hand-side vertices to be arbitrarily small. 
This label cover instance is a starting point  for all our problems. 
To derive our result for \Clique, we would need the left-hand-side to be arbitrarily small, while for \DomSet, we would need the small right-hand-side. 

The left-hand-side vertex compression is similar to the randomized graph product~\cite{BermanS92} and, in fact, the reduction itself has been studied before~\cite{Zuck96,Zuck96-unapprox} but in a very different regime of parameters. For a more detailed discussion, please refer to Subsection~\ref{subsec:smallr}.

Once the inapproximability results for label cover problems with small left-hand-side and right-hand-side vertex set are established, we can simply reduce it to \Clique and \DomSet using the standard reductions from~\cite{FGLSS96} and~\cite{Feige98} respectively.

Besides the above results for \Clique and \DomSet, we also show that no non-trivial FPT-approximation algorithm exists for a few other problems, including Maximum Biclique, the problem of finding maximum subgraphs with hereditary properties (e.g., maximum planar induced subgraph) and Maximum Induced Matching in bipartite graphs. Previously only the exact versions of these problems were only known to be $\W[1]$-hard~\cite{Lin15,KhotR00,MoserS09}. Additionally, we rule out $k^{o(1)}$-FPT-approximation algorithm for
Densest $k$-Subgraph, although this ratio does not yet match the trivial $O(k)$-approximation algorithm. Finally, we remark that, while our result for maximum subgraphs with hereditary properties follows from a reduction from \Clique, the FPT inapproximability of other problems are shown not through the label cover problems, but instead from a modification of the hardness of approximation of Densest $k$-Subgraph in~\cite{Man17}. \danupon{If time permits, we may want to expand this paragraph and include more details about previous works.}

\paragraph{Previous Works.}
Our results are based on the method of compressing 
(or reducing the size of) the optimal solution, 
which was first introduced by Chen, Huang, Kanj and  Xia in \cite{ChenHKX04} 
(the journal version appears in \cite{ChenHKX06}).
Assuming ETH, they showed that finding both \Clique and \DomSet cannot be solved in time $n^{o(\OPT)}$,
where $n$ is the number of vertices in an input graph.
Later, P\u{a}tra\cb{s}cu and Williams \cite{PatrascuW10} applied similar techniques to sharpen the running time lower bound of \DomSet to $n^{(1 - \varepsilon)\OPT}$, for any constant $\varepsilon > 0$, assuming
the {\em Strong Exponential Time Hypothesis} (SETH).
The technique of compressing the optimal solution was also used in hardness of approximation 
by Hajiaghayi, Khandekar and Kortsarz in \cite{HajiaghayiKK13}
and by Bonnet, Lampis and Paschos in \cite{BonnetE0P15}.
Our techniques can be seen as introducing gap amplification to the reductions in \cite{ChenHKX06}.
We emphasize that while \cite{ChenHKX06},\cite{PatrascuW10},\cite{HajiaghayiKK13} and \cite{BonnetE0P15} 
(and also the reductions in this paper) are all based on the technique of compressing the optimal solution,
Hajiaghayi~et~al.~\cite{HajiaghayiKK13} compress the optimal solution after reducing \SAT to the 
designated problems, i.e., \Clique and \DomSet. 
\cite{ChenHKX06}, \cite{PatrascuW10}, \cite{BonnetE0P15} and our reductions, on the other hand,
compress the optimal solution of \SAT prior to feeding it to the standard reductions (with small adjustment).
While this difference does not affect the reduction for \Clique,
it has a huge effect on \DomSet.
Specifically, compressing the optimal solution at the post-reduction results in a huge blow-up because the blow-up in the first step (i.e., from \SAT to \DomSet) becomes exponential after compressing the optimal solution.
Our proof for \Clique and the one in \cite{HajiaghayiKK13} bear a similarity in that both apply graph product to amplify 
approximation hardness.
The key different is that we use randomized graph product instead
of the deterministic graph product used in \cite{HajiaghayiKK13}.

Very recently, Chen and Lin \cite{ChenL16} showed that \DomSet admits no
constant approximation algorithm unless $\mathrm{FPT}=\mathrm{W[1]}$.
Their hardness result was derived from the seminal result of Lin
\cite{Lin15}, which shows that the {\em Maximum $k$-Intersection} problem
(a.k.a, {\sf One-side Gap-\BiClique}) has no FPT approximation algorithm. 
Furthermore, they showed that, when assuming ETH, their result can be strengthened to rule out $\log^{1/4+\epsilon}(\OPT)$ FPT-approximation algorithm, for any constant $\epsilon > 0$.
The result of Chen and Lin follows from the W[1]-hardness 
of \BiClique \cite{Lin15} and 
the proof of the ETH-hardness of \Clique \cite{ChenHKX04}.
Note that while Chen and Lin did not discuss the size of the optimal solution in their paper,
the method of compressing the optimal solution was implicitly used there.
This is due to the running-time lower bound of \Clique that they quoted from \cite{ChenHKX04}.

Our method for proving the FPT inapproximability of \DomSet
is similar to that in \cite{PatrascuW10}.
However, the original construction in \cite{PatrascuW10} does not
require a ``partition system''.
This is because P\u{a}tra\cb{s}cu and Williams reduction starts from
\SAT, which can be casted as \DomSet.
In our construction, the reduction starts from an instance of 
the {\em Constraint Satisfaction} problem (\CSP) that is
more general than \SAT (because of the gap-amplification step)
and hence requires the construction of a partition system.
(Note that the partition system has been used in standard hardness reductions for
\DomSet \cite{LundY94,Feige98}.)

We remark that our proof does not imply
FPT-inapproximability for \DomSet under ETH
whereas Chen and Lin were able to prove the inapproximability
result under ETH because their reduction can be applied directly to \SAT 
via the result in \cite{ChenHKX06}.
If ones introduced the Gap-ETH to the previous works,
then the proofs in \cite{ChenHKX06,HajiaghayiKK13,BonnetE0P15}
yield the constant FPT-inapproximability of \Clique,
and the proof in \cite{ChenHKX06} yields
the constant FPT-inapproximability of \DomSet.

The summaries of previous works on \Clique and \DomSet
are presented in \Cref{tab:prev-works}.

\begin{table}[h]
\begin{center}
\begin{tabular}{c|c|c|c}
\multicolumn{4}{c}{\bf Summary of Works on \Clique}\\
\hline
    {\bf \small{Inapprox Factor}}
  & {\bf \small{Running Time Lower Bound}}
  & {\bf \small{Assumption}}
  & {\bf \small{References}}\\
\hline
    any constant
  & $t(OPT) \cdot n^{o(\opt)}$
  & ETH + LPCP
  & \cite{BonnetE0P15}\\
  $\opt^{1-\epsilon}$ 
  & $\exp(\opt^{\rho(\epsilon)})$ 
  & ETH 
  & \cite{ChitnisHK13}\\
  $1/(1-\epsilon)$
  & $\exp(\exp(\opt^{\rho(\epsilon)}))$\footnotemark 
  & ETH 
  & \cite{HajiaghayiKK13}\\
    No $\omega(\opt)$ 
  & $t(OPT) \cdot n^{o(\opt)}$ 
  & Gap-ETH & This paper\\
\hline
\multicolumn{4}{c}{}\\
\multicolumn{4}{c}{\bf Summary of Works on \DomSet}\\
\hline
    {\bf \small{Inapprox Factor}}
  & {\bf \small{Running Time Lower Bound}}
  & {\bf \small{Assumption}}
  & {\bf \small{References}}\\
\hline
  $\opt^{1-\gamma}$ 
  & $\exp(\opt^{1-\rho(\gamma)})$ & ETH & \cite{ChitnisHK13}\\
  $(\log \opt)^{\delta}$ 
  & $\exp(\exp((\log\opt)^{\delta - 1}))$ 
  & ETH + PGC
  & \cite{HajiaghayiKK13}\\
    any constant
  & $t(OPT) \cdot n^{O(1)}$ (i.e. no FPT)
  & $\mathrm{W}[1]\neq\mathrm{FPT}$
  & \cite{ChenL16}\\
    $(\log\opt)^{1/4+\epsilon}$
  & $t(OPT) \cdot n^{o(\sqrt{\opt})}$
  & ETH
  & \cite{ChenL16}+\cite{ChenHKX06}\\
    $f(\opt)$
  & $t(OPT) \cdot n^{o(\opt)}$
  & Gap-ETH
  & This paper\\
\hline
\end{tabular}
\end{center}
\caption{The summaries of works on \Clique and \DomSet.
Here $t$ denotes any computable function $t: \N \to \N$,
	 $\epsilon$ denotes any constant $0<\varepsilon<1$, 
     $\gamma$ denotes some constant $0<\epsilon<1$, 
     $\rho$ denotes some non-decreasing function $\rho:(0,1)\rightarrow(0,1)$, 
     $\delta$ denotes some constant $\delta>1$.
PGC stands for the {\em Projection Game Conjecture} \cite{Moshkovitz15},
and LPCP stands for the {\em Linear-Size PCP Conjecture} \cite{BonnetE0P15}.
}
\label{tab:prev-works}
\end{table}
\footnotetext{Constant FPT-inapproximability of \Clique under ETH is claimed in \cite{HajiaghayiKK13} (arXiv version). However, as we investigated, the Gap-ETH is assumed there.}

\paragraph{Other Related Works.}
All problems considered in this work are also well-studied in terms of hardness of approximation beyond the aforementioned parameterized regimes; indeed many techniques used here are borrowed from or inspired by the non-parameterized settings.

{\bf Maximum Clique.} Maximum Clique is arguably the first natural combinatorial optimization problem studied in the context of hardness of approximation; in a seminal work of Feige, Goldwasser, Lov{\'a}sz, Safra and Szegedy (henceforth FGLSS)~\cite{FGLSS96}, a connection was made between interactive proofs and hardness of approximating \Clique. This connection paves the way for later works on \Clique and other developments in the field of hardness of approximations; indeed, the FGLSS reduction will serve as part of our proof as well. The FGLSS reduction, together with the PCP theorem~\cite{AroraS98,AroraLMSS98} and gap amplification via randomized graph products~\cite{BermanS92}, immediately implies $n^{\varepsilon}$ ratio inapproximability of \Clique for some constant $\varepsilon > 0$ under the assumption that \NP $\subseteq$ \BPP. Following Feige~et~al.'s work, there had been a long line of research on approximability of \Clique~\cite{BellareGLR93,FeigeK00,BellareGS98,BellareS94}, which culminated in H{\aa}stad's work~\cite{Hastad96}. In~\cite{Hastad96}, it was shown that \Clique cannot be approximated to within a factor of $n^{1 - \varepsilon}$ in polynomial time unless \NP $\subseteq$ \ZPP; this was later derandomized by Zuckerman who showed a similar hardness under the assumption \NP $\nsubseteq$ \P~\cite{Zuckerman07}. Since then, better inapproximability ratios are known~\cite{EngebretsenH00,Khot01,KhotP06}, with the best ratio being $n/2^{(\log n)^{3/4 + \varepsilon}}$ for every $\varepsilon > 0$ (assuming \NP $\nsubseteq$ \BPTIME($2^{(\log n)^{O(1)}}$)) due to Khot and Ponnuswami~\cite{KhotP06}. We note here that the best known polynomial time algorithm for \Clique achieves $O\left(\frac{n(\log\log n)^2}{(\log n)^3}\right)$-approximation for the problem~\cite{Feige04}.

{\bf Set Cover.} Minimum Set Cover, which is equivalent to Minimum Dominating Set, is also among the first problems studied in hardness of approximation. Lund and Yannakakis proved that, unless \NP $\subseteq$ \DTIME($2^{(\log n)^{O(1)}}$), \SetCov cannot be efficiently approximated to within $c \log n$ factor of the optimum for some constant $c > 0$~\cite{LundY94}. Not long after, Feige~\cite{Feige98} both improved the approximation ratio and weaken the assumption by showing an $(1 - \varepsilon)\ln n$-ratio inapproximability for every $\varepsilon > 0$ assuming only that \NP $\nsubseteq$ \DTIME($n^{O(\log \log n)}$). Recently, a similar inapproximability has been achieved under the weaker \NP $\nsubseteq$ \P~assumption~\cite{Moshkovitz15,DinurS14}. Since a simple greedy algorithm is known to yield $(\ln n + 1)$-approximation for \SetCov~\cite{Chvatal79}, the aforementioned hardness result is essentially tight. A common feature in all previous works on hardness of \SetCov\cite{LundY94,Feige98,Moshkovitz15} is that the constructions involve composing certain variants of CSPs with partition systems. As touched upon briefly earlier, our construction will also follow this approach; for the exact definition of CSPs and the partition system used in our work, please refer to Subsection~\ref{subsec:domset}.

{\bf Maximum Subgraph with Hereditary Properties.} The complexity of finding and approximating maximum subgraph with hereditary properties have also been studied since the 1980s~\cite{LewisY80,LundY93,feige2005hardness}; specifically, Feige and Kogan showed that, for every non-trivial property $\Pi$ (i.e., $\Pi$ such that infinite many subgraphs satisfy $\Pi$ and infinitely many subgraphs do not satisfy $\Pi$), the problem is hard to approximate to within $n^{1 - \varepsilon}$ factor for every $\varepsilon > 0$ unless \NP $\subseteq$ \ZPP~\cite{feige2005hardness}. We also note that non-trivial approximation algorithms for the problem are known; for instance, when the property fails for some clique or some independent set, a polynomial time $O\left(\frac{n(\log \log n)^2}{(\log n)^2}\right)$-approximation algorithm is known~\cite{Halldorsson00}.

{\bf Maximum Balanced Biclique.} While the Maximum Balanced Biclique problem bears a strong resemblance to the Maximum Clique Problem, inapproximability of the latter cannot be directly translated to that of the former; in fact, despite numerous attempts, not even constant factor NP-hardness of approximation of the Maximum Balanced Biclique problem is known. Fortunately, under stronger assumptions, hardness of approximation for the problem is known: $n^{\varepsilon}$-factor hardness of approximation is known under Feige's random 3\SAT hypothesis~\cite{Feige02} or \NP $\nsubseteq$ $\bigcap_{\varepsilon > 0} $\BPTIME($2^{n^\varepsilon}$)~\cite{Khot06}, and $n^{1 - \varepsilon}$-factor hardness of approximation is known under strengthening of the Unique Games Conjecture~\cite{BhangaleGHKK16,Man17-ICALP}. To the best of our knowledge, no non-trivial approximation algorithm for the problem is known.

{\bf Densest $k$-Subgraph.} The Densest $k$-Subgraph problem has received considerable attention from the approximation algorithm community~\cite{KP93,FPK01,BCCFV10}; the best known polynomial time algorithm due to Bhaskara~et~al.~\cite{BCCFV10} achieves $O(n^{1/4 + \varepsilon})$-approximation for every $\varepsilon > 0$. On the other hand, similar to \BiClique, NP-hardness of approximating Densest $k$-Subgraph, even to some constant ratio, has so far eluded researchers. Nevertheless, in the same works that provide hardness results for \BiClique~\cite{Feige02,Khot06}, \DkS is shown to be hard to approximate to some constant factor under random 3-\SAT hypothesis or \NP $\nsubseteq$ $\bigcap_{\varepsilon > 0} $\BPTIME($2^{n^\varepsilon}$). Furthermore, $2^{\Omega(\log^{2/3} n)}$-factor inapproximability is known under the planted clique hypothesis~\cite{AAMMW11} and, under ETH (resp., Gap-ETH), $n^{1/\poly\log\log n}$ (resp., $n^{o(1)}$) factor inapproximabilities are known~\cite{Man17}. (See also~\cite{BravermanKRW17} in which a constant ratio ETH-hardness of approximating \DkS was shown.) In addition to these hardness results, polynomial ratio integrality gaps for strong LP and SDP relaxations of the problem are also known~\cite{BCVGZ12,M-thesis,ChlamtacMMV17}.

{\bf Maximum Induced Matching on Bipartite Graphs.} The problem was proved to be NP-hard independently by Stockmeyer and Vazirani~\cite{StockmeyerV82} and Cameron~\cite{Cameron89}. The approximability of the problem was first studied by Duckworth~et~al.~\cite{DuckworthMZ05} who showed that the problem is APX-hard, even on bipartite graphs of degree three. Elbassioni~et~al.~\cite{ElbassioniRRS09} then showed that the problem is hard to approximate to within $n^{1/3 - \varepsilon}$ factor for every $\varepsilon > 0$, unless \NP $\subseteq$ \ZPP. Chalermsook~et~al.~\cite{ChalermsookLN13} later improved the ratio to $n^{1 - \varepsilon}$ for every $\varepsilon > 0$.

\paragraph{Organization.} We define basic notations in \Cref{sec:prelim}. In \Cref{sec:inherent}, we define the notion of inherently enumerative, which captures the fact that nothing better than enumerating all possibilities can be done. We show that a problem admits no non-trivial FPT-approximation algorithm by showing that it is inherently enumerative.
In \Cref{sec:label cover}, we define and prove results about our intermediate problems on label cover instances. Finally, in \Cref{sec:hardness-combopt} we derive results for \Clique, \DomSet, and other problems.

\section{Preliminaries}
\label{sec:prelim}

We use standard terminology. For any graph $G$, we denote by $V(G)$
and $E(G)$ the vertex and edge sets of $G$, respectively. For each vertex $u \in V(G)$, we denote the set of its neighbors by $N_G(v)$; when the graph $G$ is clear from the context, we sometimes drop it from the notation.
A {\em clique} of $G$ is a complete subgraph of $G$.
Sometime we refer to a clique as a subset $S \subseteq V(G)$ 
such that there is an edge joining every pair of vertices in $S$.
A {\em biclique} of $G$ is a balanced complete bipartite subgraph of $G$
(i.e., the graph $K_{k,k}$).
By $k$-biclique, we mean the graph $K_{k,k}$
(i.e., the number of vertices in each partition is $k$).
An {\em independent set} of $G$ is a subset of vertices $S\subseteq V(G)$
such there is no edge joining any pair of vertices in $S$.
A {\em dominating set} of $G$ is a subset of vertices 
$S\subseteq V(G)$ such that every vertex in $G$ is either in $S$ or
has a neighbor in $S$.
The {\em clique number} (resp., {\em independent number}) of $G$ is
the size of the largest clique (resp., independent set) in $G$.
The {\em biclique number} of $G$ is the largest integer $k$ such that $G$ contains $K_{k, k}$ as a subgraph.
The {\em domination number} of $G$ is defined similarly as 
the size of the smallest dominating set in $G$.
The clique, independent and domination numbers of $G$ are usually
denoted by $\omega(G)$, $\alpha(G)$ and $\gamma(G)$, respectively.
However, in this paper, we will refer to these numbers by
$\Clique(G), \MIS(G), \DomSet(G)$. Additionally, we denote the biclique number of $G$ by $\BiClique(G)$

\subsection{FPT Approximation}

Let us start by formalizing the the notation of optimization problems; here we follow the  notation due to Chen et al.~\cite{ChenGG06}. An \emph{optimization problem} $\Pi$ is defined by three components: (1) for each input instance $I$ of $\Pi$, a set of valid solutions of $I$ denoted by $\sol_\Pi(I)$, (2) for each instance $I$ of $\Pi$ and each $y \in \sol_\Pi(I)$, the cost of $y$ with respect to $I$ denoted by $\cost_\Pi(I, y)$, and (3) the goal of the problem $\goal_\Pi \in \{\min, \max\}$ which specifies whether $\Pi$ is a minimization or maximization problem. Throughout this work, we will assume that $\cost_\Pi(I, y)$ can be computed in time $|I|^{O(1)}$. Finally, we denote by $\opt_{\Pi}(I)$ the optimal value of each instance $I$, i.e. $\opt_\Pi(I) = \goal_\Pi~\cost(I, y)$ where $y$ is taken over $\sol_\Pi(I)$. 

We now continue on to define parameterized approximation algorithms. While our discussion so far has been on optimization problems, we will instead work with ``gap versions'' of these problems. 
Roughly speaking, for a maximization problem $\Pi$, the gap version of $\Pi$ takes in an additional input $k$ and the goal is to decide whether $\OPT_\Pi(I) \geq k$ or $\OPT_\Pi(I) < k / f(k)$. As we will elaborate below, the gap versions are weaker (i.e. easier) than the optimization versions and, hence, our impossibility results for gap versions translate to those of optimization versions as well.


\begin{definition}[FPT gap approximation]\label{def:FPT gap approx}
For any optimization problem $\Pi$ and any computable function $f: \N \rightarrow [1, \infty)$, an algorithm $\A$, which takes as input an instance $I$ of $\Pi$ and a positive integer $k$, is said to be an \emph{$f$-FPT gap approximation algorithm} for $\Pi$ if the following conditions hold on every input $(I, k)$:
\begin{itemize}
\item $\A$ runs in time $t(k) \cdot |I|^{O(1)}$ for some computable function $t: \N \rightarrow \N$.
\item If $\goal_\Pi = \max$, $\A$ must output 1 if $\OPT_\Pi(I) \geq k$ and output 0 if $\OPT_\Pi(I) < k / f(k)$.

If $\goal_\Pi = \min$, $\A$ must output 1 if $\OPT_\Pi(I) \leq k$ and output 0 if $\OPT_\Pi(I) > k \cdot f(k)$.
\end{itemize}
$\Pi$ is said to be \emph{$f$-FPT gap approximable} if there is an $f$-FPT gap approximation algorithm for $\Pi$.
\end{definition}

Next, we formalize the concept of \emph{totally FPT inapproximable}, which encapsulates the non-existence of non-trivial FPT approximations discussed earlier in the introduction.

\begin{definition} \label{def:totalinapprox}
A minimization problem $\Pi$ is said to be \emph{totally FPT inapproximable} if, for every computable function $f: \N \rightarrow [1, \infty)$, $\Pi$ is not $f$-FPT gap approximable. 

A maximization problem $\Pi$ is said to be \emph{totally FPT inapproximable} if, for every computable function $f: \N \rightarrow [1, \infty)$ such that $f(k) = o(k)$ (i.e. $\lim_{k \to \infty} k/f(k) = \infty$), $\Pi$ is not $f$-FPT gap approximable.
\end{definition}

With the exception of Densest $k$-Subgraph, every problem considered in this work will be shown to be totally FPT inapproximable. To this end, we remark that totally FPT inapproximable as defined above through gap problems imply the non-existence of non-trivial FPT approximation algorithm that was discussed in the introduction. These implications are stated more precisely in the two propositions below; their proofs are given in Appendix~\ref{app:gapvapprox}.

\begin{proposition} \label{prop:gapvapprox-min}
Let $\Pi$ be any minimization problem. Then, (1) implies (2) where (1) and (2) are as defined below.
\begin{enumerate}[(1)]
\item $\Pi$ is totally FPT inapproximable.
\item For all computable functions $t: \N \rightarrow \N$ and $f: \N \rightarrow [1, \infty)$, there is no algorithm that, on every instance $I$ of $\Pi$, runs in time $t(\opt_{\Pi}(I)) \cdot |I|^{O(1)}$ and outputs a solution $y \in \sol_\Pi(I)$ such that $\cost_\Pi(I, y) \leq \opt_{\Pi}(I) \cdot f(\opt_{\Pi}(I))$.
\end{enumerate}
\end{proposition}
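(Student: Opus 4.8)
The statement to prove is Proposition~\ref{prop:gapvapprox-min}: that ``totally FPT inapproximable'' (a statement about gap problems) implies the non-existence of non-trivial FPT approximation algorithms (a statement about actual approximation algorithms that output solutions). The plan is to prove the contrapositive: assume there is an algorithm $\mathcal{B}$ that on every instance $I$ runs in time $t(\opt_\Pi(I)) \cdot |I|^{O(1)}$ and outputs $y \in \sol_\Pi(I)$ with $\cost_\Pi(I,y) \leq \opt_\Pi(I) \cdot f(\opt_\Pi(I))$, and from it construct, for a suitable computable $g: \N \to [1,\infty)$, a $g$-FPT gap approximation algorithm $\mathcal{A}$ for $\Pi$.

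\textbf{Main steps.} First I would take $\mathcal{B}$ as above and build $\mathcal{A}$ that on input $(I,k)$ runs $\mathcal{B}$ on $I$ and compares $\cost_\Pi(I,y)$ (computable in $|I|^{O(1)}$ time by assumption) against a threshold that depends only on $k$. The natural choice is: output $1$ if $\cost_\Pi(I,y) \leq k \cdot f(k)$, and output $0$ otherwise. I need to verify the two required properties of Definition~\ref{def:FPT gap approx}. (i) \emph{Completeness:} if $\opt_\Pi(I) \leq k$, then $\cost_\Pi(I,y) \leq \opt_\Pi(I) \cdot f(\opt_\Pi(I)) \leq k \cdot f(k)$, using that $f$ is non-decreasing — wait, $f$ is only assumed computable and $[1,\infty)$-valued, not monotone, so here is the first subtlety: I should replace $f$ everywhere by its ``running maximum'' $\bar f(k) := \max_{1 \le j \le k} f(j)$, which is computable, non-decreasing, $[1,\infty)$-valued, and dominates $f$; then $\opt_\Pi(I) \cdot f(\opt_\Pi(I)) \leq k \cdot \bar f(k)$ holds whenever $\opt_\Pi(I) \le k$. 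So $\mathcal{A}$ correctly outputs $1$. (ii) \emph{Soundness:} if $\opt_\Pi(I) > k \cdot \bar f(k)$, then since $\cost_\Pi(I,y) \geq \opt_\Pi(I)$ (it is the cost of a valid solution to a minimization problem), we get $\cost_\Pi(I,y) > k \cdot \bar f(k) \geq k \cdot f(k)$, so $\mathcal{A}$ outputs $0$. Hence $\mathcal{A}$ is a $\bar f$-FPT gap approximation algorithm, contradicting total FPT inapproximability applied with the computable function $\bar f$.

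\textbf{The running time bookkeeping.} The one genuinely fiddly point — and the step I expect to be the main obstacle — is the running time. Definition~\ref{def:FPT gap approx} demands that $\mathcal{A}$ run in time $s(k) \cdot |I|^{O(1)}$ for some computable $s$, but $\mathcal{B}$'s running time is $t(\opt_\Pi(I)) \cdot |I|^{O(1)}$, which depends on $\opt_\Pi(I)$, a quantity not bounded in terms of $k$. The resolution is a standard ``time-out'' trick: run $\mathcal{B}$ for only $s(k) \cdot |I|^{c}$ steps for an appropriate computable $s$ (e.g.\ $s(k) = \max_{1 \le j \le k} t(j)$ composed suitably, absorbing constants); if $\mathcal{B}$ halts within the budget, proceed as above; if it does not halt, we may safely conclude $\opt_\Pi(I) > k$ — because if $\opt_\Pi(I) \le k$ then $\mathcal{B}$ would have finished in $t(\opt_\Pi(I))\cdot|I|^{O(1)} \le s(k)\cdot|I|^{O(1)}$ steps — and in particular $\opt_\Pi(I) > k$, which is \emph{not} the ``$\opt \le k$'' case, so $\mathcal{A}$ is free to output $0$ (it is only constrained in the two cases $\opt \le k$ and $\opt > k\bar f(k)$; in the timeout branch we are not in the first case). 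This keeps $\mathcal{A}$'s running time $s(k) \cdot |I|^{O(1)}$ as required. Assembling these pieces — the $\bar f$ monotonization, the threshold test on $\cost$, and the timeout argument for the running time — completes the proof of the contrapositive, hence of the proposition. (The analogous Proposition for maximization problems, which presumably follows, is symmetric, with the extra care that the non-triviality condition $f(k) = o(k)$ must be preserved under monotonization, which it is since $\bar f \le f$ need not hold but $k/\bar f(k) \to \infty$ can be arranged by working with $\min$ of running maxima, or more simply by noting $\bar f(k) = o(k)$ when $f(k) = o(k)$ is not automatic and requires the standard trick of capping growth — I would handle that separately.)
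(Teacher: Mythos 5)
Your proposal is correct and follows essentially the same route as the paper's proof: the paper likewise argues the contrapositive, monotonizes both $t$ and $f$ via running maxima ($t'(k)=\max_{i\le k}t(i)$, $f'(k)=\max_{i\le k}f(i)$), simulates $\mathcal{B}$ with a timeout of $t'(k)\cdot|I|^{D}$ steps (returning $0$ on timeout), and thresholds the returned cost against $k\cdot f'(k)$. The three subtleties you flagged (non-monotone $f$, the $\opt$-dependent running time, and the safety of outputting $0$ on timeout) are exactly the points the paper's proof handles, and in the same way.
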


\begin{proposition} \label{prop:gapvapprox-max}
Let $\Pi$ be any maximization problem. Then, (1) implies (2) where (1) and (2) are as defined below.
\begin{enumerate}[(1)]
\item $\Pi$ is totally FPT inapproximable.
\item For all computable functions $t: \N \rightarrow \N$ and $f: \N \rightarrow [1, \infty)$ such that $f(k) = o(k)$ and $k/f(k)$ is non-decreasing, there is no algorithm that, on every instance $I$ of $\Pi$, runs in time $t(\opt_{\Pi}(I)) \cdot |I|^{O(1)}$ and outputs a solution $y \in \sol_\Pi(I)$ such that $\cost_\Pi(I, y) \geq \opt_{\Pi}(I) / f(\opt_{\Pi}(I))$.
\end{enumerate}
\end{proposition}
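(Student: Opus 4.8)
The statement to prove is Proposition~\ref{prop:gapvapprox-max}: that total FPT inapproximability of a maximization problem $\Pi$ (in the gap sense) implies the non-existence of FPT approximation algorithms (in the solution-outputting sense), under the natural caveats $f(k) = o(k)$ and $k/f(k)$ non-decreasing.

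\medskip
\noindent\textbf{Proof proposal.} The plan is to prove the contrapositive: assume statement (2) fails, i.e.\ there \emph{is} an algorithm $\B$ that on every instance $I$ runs in time $t(\opt_\Pi(I)) \cdot |I|^{O(1)}$ and outputs a valid solution $y \in \sol_\Pi(I)$ with $\cost_\Pi(I,y) \geq \opt_\Pi(I)/f(\opt_\Pi(I))$; I will build from $\B$ an $f'$-FPT gap approximation algorithm for some computable $f' = o(k)$, contradicting total FPT inapproximability. The natural choice is simply $f' = f$ (using the same $f$), and the gap algorithm $\A$ on input $(I,k)$ proceeds as follows: run $\B$ on $I$, obtain $y$, compute $c := \cost_\Pi(I,y)$ (which takes $|I|^{O(1)}$ time by assumption), and output $1$ if $c \geq k/f(k)$ and $0$ otherwise.

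\medskip
First I would verify correctness on the two promised cases. If $\opt_\Pi(I) \geq k$: since $k/f(k)$ is non-decreasing and $\cost_\Pi(I,y) \geq \opt_\Pi(I)/f(\opt_\Pi(I)) \geq k/f(k)$, the algorithm $\A$ correctly outputs $1$. If $\opt_\Pi(I) < k/f(k)$: then since $y$ is a valid solution, $c = \cost_\Pi(I,y) \leq \opt_\Pi(I) < k/f(k)$, so $\A$ correctly outputs $0$. (Here I would note that the monotonicity hypothesis on $k/f(k)$ is exactly what makes the first case go through, and validity of the output solution $y$ is what makes the second case go through.) The only genuine subtlety is the running-time bound: $\B$ runs in time $t(\opt_\Pi(I)) \cdot |I|^{O(1)}$, but an FPT gap approximation algorithm must run in time bounded by a function of the \emph{input parameter} $k$, not of $\opt_\Pi(I)$.

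\medskip
This is the step I expect to be the main (mild) obstacle, and the fix is a standard truncation/timeout trick. On input $(I,k)$, run $\B$ but halt it after $T(k) \cdot |I|^{c}$ steps, where $T(k) := \max_{k' \le \lceil k\rceil} t(k')$ — a computable non-decreasing function since $t$ is computable — and $c$ is the constant from $\B$'s polynomial factor. If $\opt_\Pi(I) \le k$, then $t(\opt_\Pi(I)) \le T(k)$, so $\B$ finishes within the budget and we are in the analyzed case above (the ``output $1$'' case requires $\opt_\Pi(I) \ge k$, which is a sub-case, and the ``output $0$'' case requires $\opt_\Pi(I) < k/f(k) \le k$, also covered). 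If $\B$ is halted before finishing, then necessarily $\opt_\Pi(I) > k$, hence in particular $\opt_\Pi(I) \ge k > k/f(k)$ (using $f(k)\ge 1$), so we are \emph{not} in the ``must output $0$'' promise case, and $\A$ may safely output anything — say $1$. Thus on every input $(I,k)$ satisfying either promise, $\A$ answers correctly, and it runs in time $T(k)\cdot|I|^{O(1)}$, which is of the required FPT form. Finally, since $f$ is a computable function with $f(k) = o(k)$, this contradicts total FPT inapproximability of $\Pi$, completing the proof. (The proof of Proposition~\ref{prop:gapvapprox-min} is the same argument without the $o(k)$ bookkeeping and without needing monotonicity of $k/f(k)$.)
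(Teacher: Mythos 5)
Your proposal is correct and follows essentially the same route as the paper's proof: the contrapositive, the truncated simulation of $\B$ with time budget $\max_{k'\le k} t(k')\cdot|I|^{O(1)}$, outputting $1$ on timeout (safe because timeout forces $\opt_\Pi(I)>k\ge k/f(k)$), and using monotonicity of $k/f(k)$ in the completeness case and validity of the output solution in the soundness case. No gaps.
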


\subsection{List of Problems}
We will now list the problems studied in this work. While all the problems here can be defined in terms of optimization problems as defined the previous subsection, we will omit the terms $\sol, \cost$ and $\goal$ whenever they are clear from the context.

\medskip

\noindent{\bf The Maximum Clique Problem (\Clique).}
In $k$-\Clique, we are given a graph $G$ together with an integer $k$, and the goal is to decide 
whether $G$ has a clique of size $k$.
The maximization version of \Clique, called \mbox{\sf Max-Clique},
asks to compute the maximum size of a clique in $G$.
We will abuse \Clique to mean the {\sf Max-Clique} problem,
and we will denote by $\Clique(G)$ the clique number of $G$,
which is the value of the optimal solution to \Clique.

The problem that is (computationally) equivalent to \Clique is 
the {\em maximum independent set} problem (\MIS) which asks to
compute the size of the maximum independent set in $G$. 
The two problems are equivalent since any clique in $G$ is
an independent set in the complement graph $\bar{G}$.

\medskip

\noindent{\bf The Minimum Dominating Set Problem (\DomSet).}
In $k$-\DomSet, we are given a graph $G$ together with an integer $k$, and the goal is to decide
whether $G$ has a dominating set of size $k$. 
The minimization version of $k$-\DomSet is called the \DomSet,
which asks to compute the size of the minimum dominating set in~$G$. 

The problem that is equivalent to \DomSet is 
the {\em minimum set cover} problem (\SetCov):
Given a universe $\mathcal{U}$ of $n$ elements and a collection 
$\mathcal{S}$ of $m$ subsets $S_1,\ldots,S_m\subseteq\mathcal{U}$, 
the goal is to find the minimum number of subsets of $\mathcal{S}$ 
whose union equals $\mathcal{U}$.
It is a standard fact that \DomSet is equivalent to \SetCov. See Appendix~\ref{app:trivial-eq} for more detail.



\medskip

\noindent{\bf Maximum Induced Subgraph with Hereditary Properties:} 
A {\em property} $\Pi$ is simply a subset of all graphs. 
We say that $\Pi$ is a {\em hereditary property} if whenever $G \in \Pi$, all induced subgraphs of $G$ are in $\Pi$. 
The Maximum Induced Subgraph problem with Property $\Pi$ asks for a maximum cardinality set $S \subseteq V(G)$ such that $G[S] \in \Pi$. Here $G[S]$ denotes the subgraph of $G$ induced on $S$.
Notice that both \Clique and \MIS belong to this class of problems. 
For more discussions on problems that belong to this class, see Appendix~\ref{app:trivial-eq}. 

\medskip

\noindent{\bf Maximum Induced Matching on Bipartite Graphs:} 
An induced matching $\mset$ of a graph $G = (V, E)$ is a subset of edges $\{(u_1, v_1), \dots, (u_{|\mset|}, v_{|\mset|})\}$ such that there is no cross edge, i.e., $(u_i, u_j), (v_i, v_j), (u_i, v_j) \notin E$ for all $i \ne j$. The induced matching number $\IM(G)$ of graph $G$ is simply the maximum value of $|\mset|$ among all induced matchings $\mset$'s of $G$. In this work, we will be interested in the problem of approximating $\IM(G)$ in bipartite graphs; this is because, for general graphs, the problem is as hard to approximate as \Clique.
(See Appendix~\ref{app:trivial-eq} for more details.) 

\medskip

\noindent{\bf Maximum Balanced Biclique (\BiClique).}
In $k$-\BiClique, we are given a bipartite graph $G$ together with an integer $k$. The goal is to decide 
whether $G$ contains a complete bipartite subgraph (biclique) with $k$ vertices on each side. In other words, we are asked to decide whether $G$ contains $K_{k,k}$ as a subgraph.
The maximization version of \BiClique, called Maximum Balanced Biclique,
asks to compute the maximum size of a balanced biclique in $G$.

\medskip

\noindent{\bf Densest $k$-Subgraph (\DkS).}
In the Densest $k$-Subgraph problem, we are given an integer $k$ and a graph $G = (V, E)$. The goal is to find a subset $S \subseteq V$ of $k$ vertices that induces maximum number of edges. For convenience, we define density of an induced subgraph $G[S]$ to be $\Den(G[S]) \triangleq \frac{E(G[S])}{\binom{|S|}{2}} \in [0, 1]$ and we define the optimal density of \DkS to be $\Denk(G) = \max_{S \subseteq V, |S| = k} \Den(S)$.

\subsection{Gap Exponential Time Hypothesis}

Our results are based on the Gap Exponential Time Hypothesis (Gap-ETH). Before we state the hypothesis, let us recall the definition of 3-\SAT. In $q$-\SAT, we are given a CNF formula $\phi$ in which each clause consists of at most $q$ literals, and the goal is to decide whether $\phi$ is satisfiable. 

\mbox{\sf Max $q$-SAT} is a maximization version of $q$-\SAT which 
asks to compute the maximum number of clauses in $\phi$ that can 
be simultaneously satisfied.
We will abuse $q$-\SAT to mean \mbox{\sf Max $q$-SAT},
and for a formula $\phi$, we use $\SAT(\phi)$ to denote the maximum number of clauses satisfied by any assignment.

The Gap Exponential Time Hypothesis can now be stated in terms of \SAT as follows.

\begin{conjecture}[(randomized) Gap Exponential-Time Hypothesis (Gap-ETH)
\cite{Dinur16,ManR16}]
\label{conj:gap-ETH}
For some constant $\delta,\epsilon > 0$, no algorithm can,  
given a $3$-\SAT formula $\phi$ on $n$ variables and $m=O(n)$ clauses, 
distinguishes between the following cases correctly with probability $\geq 2/3$ in $O(2^{\delta n})$ time:
\begin{itemize}
	\item $\SAT(\phi) = m$ and  
	\item $\SAT(\phi) < (1-\epsilon) m$.
\end{itemize}
\end{conjecture}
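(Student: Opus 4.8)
\Cref{conj:gap-ETH} is a hardness \emph{hypothesis}, so there is nothing to prove unconditionally; what I would write instead is a \emph{conditional} derivation, showing that Gap-ETH follows from the (plain) Exponential Time Hypothesis together with the existence of a linear-size \PCP{} with constant soundness gap. The plan is as follows. ETH asserts that $3$-\SAT{} on $n$ variables and $m = O(n)$ clauses has no $2^{o(n)}$-time algorithm. A linear-size \PCP{} for \SAT{} (equivalently for \CSP{} or circuit satisfiability, composed back to $3$-\SAT{}) is a polynomial-time map sending a formula $\phi$ of size $s$ to a bounded-arity constraint satisfaction instance $\psi$ of size $O(s)$ with perfect completeness --- $\phi$ satisfiable implies all constraints of $\psi$ are simultaneously satisfiable --- and soundness: whenever $\phi$ is unsatisfiable, no assignment satisfies more than a $(1 - \epsilon_0)$-fraction of the constraints of $\psi$, for an absolute constant $\epsilon_0 > 0$.

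First I would convert $\psi$ into a $3$-CNF: since the verifier reads $O(1)$ bits, each constraint is a Boolean function on $O(1)$ variables, and splitting it into width-$3$ clauses using $O(1)$ auxiliary variables per constraint increases the size by only a constant factor and degrades the gap from $\epsilon_0$ to some $\epsilon_1 = \Omega(\epsilon_0)$. This yields a polynomial-time reduction $\phi \mapsto \psi'$, with $\psi'$ a $3$-\SAT{} instance on $n' = O(n)$ variables and $m' = O(n)$ clauses, such that $\SAT(\psi') = m'$ when $\phi$ is satisfiable and $\SAT(\psi') < (1 - \epsilon_1) m'$ otherwise. Then I would argue by contradiction: if Gap-ETH is false, then for every $\delta > 0$ there is a $2^{\delta n'}$-time algorithm (randomized, error at most $1/3$) distinguishing $\SAT(\psi') = m'$ from $\SAT(\psi') < (1 - \epsilon_1) m'$. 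Composing, on input $\phi$ I build $\psi'$ in $\poly(n)$ time and run this algorithm; by completeness and soundness it decides satisfiability of $\phi$ with probability at least $2/3$, in time $\poly(n) + 2^{\delta n'} = 2^{O(\delta) \cdot n}$. Choosing $\delta$ small enough relative to the hidden constant makes the total running time $2^{o(n)}$, contradicting (randomized) ETH; hence Gap-ETH holds.

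The single non-routine ingredient --- and the reason Gap-ETH is isolated as its own hypothesis rather than derived from ETH --- is precisely the linear-size \PCP{}. All known \PCP{} constructions (Dinur-style gap amplification, or the algebraic low-degree route) produce instances of quasi-linear size $n \cdot \polylog(n)$ rather than $O(n)$; with $n' = n \cdot \polylog(n)$ a hypothetical $2^{\delta n'}$-time gap algorithm only yields a $2^{\delta \cdot n \polylog(n)}$-time algorithm for $3$-\SAT{}, which is super-exponential in $n$ and therefore contradicts nothing. So the hard part is exactly the (open) Linear-Size PCP Conjecture. I would close by remarking that Gap-ETH should in any case be regarded as a mild strengthening of ETH: it only posits that \emph{some} constant inapproximability gap for $3$-\SAT{} persists into the subexponential-time regime, a statement consistent with every known algorithm and, as sketched, implied by ETH plus a linear \PCP{}.
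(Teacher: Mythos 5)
You have correctly recognized that \Cref{conj:gap-ETH} is a hypothesis rather than a theorem, and your conditional derivation from ETH plus a linear-size PCP is exactly the justification the paper itself offers (see the first bullet of Appendix~\ref{app:gap-eth} and the remark in the introduction that Gap-ETH is implied by ETH together with a linear-size PCP). Your discussion of why quasi-linear PCPs ($n\polylog n$ size) fall just short, and hence why Gap-ETH must be postulated separately, also matches the paper's account, so there is nothing to correct.
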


Note that the case where $\epsilon=1/m$ (that is, the algorithm only needs to distinguish between the cases that $\SAT(\phi)=m$ and $\SAT(\phi)<m$)
is known as ETH \cite{IPZ01}. 
Another related conjecture is the strengthened version of ETH is called 
{\em the Strong Exponential-Time Hypothesis} (SETH) \cite{IP01-ETH}:
for any $\epsilon>0$, there is 
an integer $k \geq 3$ such that there is no $2^{(1-\epsilon)n}$-time
algorithm for $k$-SAT.
Gap-ETH of course implies ETH, but, to the best of our knowledge, no formal relationship is known between Gap-ETH and SETH. While Gap-ETH may seem strong due to the gap between the two cases, there are evidences suggesting that it may indeed be true, or, at the very least, refuting it is beyond the reach of our current techniques. We discuss some of these evidences in Appendix~\ref{app:gap-eth}.

While Gap-ETH as stated above rules out not only deterministic but also randomized algorithms, the deterministic version of Gap-ETH suffices for some of our results, including inapproximability of \Clique and \DomSet. The reduction for \DomSet as stated below will already be deterministic, but the reduction for \Clique will be randomized. However, it can be easily derandomized and we sketch the idea behind this in in Subsection~\ref{subsec:derandomization}. Note that, on the other hand, we do not know how to derandomize some of our other results, including those of \BiClique and \DkS.
\section{FPT Inapproximability via Inherently Enumerative Concept}\label{sec:inherent}

Throughout the paper, we will prove FPT inapproximability through the concept of {\em inherently enumerative} problems, which will be formalized shortly.

To motivate the concept, note that all problems $\Pi$ considered in this paper admit an exact algorithm that runs in time\footnote{Recall that $O^{\star}(\cdot)$ hides terms that are polynomial in the input size.} $O^{\star}(|I|^{\opt_{\Pi}(I)})$; For instance, to find a clique of size $k$ in $G$, one can enumerate all ${|V(G)| \choose k} = |V(G)|^{O(k)}$ possibilities\footnote{A faster algorithm runs in time $|V(G)|^{\omega k/3}$ can be done by a reduction to matrix multiplication.}. 
For many W[1]-hard problems (e.g. \Clique), this running time is nearly the best possible assuming ETH: Any algorithm that finds a $k$-clique in time $|V(G)|^{o(k)}$ would break ETH. In the light of such result, it is natural to ask the following question.

\begin{quote}
Assume that $\Clique(G) \geq 2^{2^k}$, can we find a clique of size $k$ in time $|V(G)|^{o(k)}$? 
\end{quote}

In other words, can we exploit a prior knowledge that there is a clique of size much larger than $k$ to help us find a $k$-clique faster? Roughly speaking, we will show later that, assuming Gap-ETH, the answer of this question is also negative, even when $2^{2^k}$ is replaced by any constant independent of $k$. This is encapsulated in the inherently enumerative concept as defined below.

\begin{definition}[Inherently Enumerative] A problem $\Pi$ is said to be {\em inherently enumerative} if there exist constants $\delta, r_0 >0$ such that, for any integers $q \geq r \geq r_0$, no algorithm can decide, on every input instance $I$ of $\Pi$, whether (i) $\opt_{\Pi}(I) < r$ or (ii) $\opt_{\Pi}(I) \geq q$ in time\footnote{$O_{q, r}(\cdot)$ here and in Definition~\ref{def:weakine} hides any multiplicative term that is a function of $q$ and $r$.} $O_{q, r}(|I|^{\delta r})$.  
\end{definition}

While we will show that \Clique and \DomSet are inherently enumerative, we cannot do the same for some other problems, such as \BiClique. Even for the exact version of \BiClique, the best running time lower bound known is only $|V(G)|^{\Omega(\sqrt{k})}$~\cite{Lin15} assuming ETH. 
In order to succinctly categorize such lower bounds, we define a similar but weaker notation of \emph{weakly} inherently enumerative:

\begin{definition}[Weakly Inherently Enumerative] \label{def:weakine}
For any function $\beta = \omega(1)$ (i.e. $\lim_{r \to \infty} \beta(r) = \infty$), a problem $\Pi$ is said to be {\em $\beta$-weakly inherently enumerative} if there exists a constant $r_0 >0$ such that, for any integers $q \geq r \geq r_0$, no algorithm can decide, on every input instance $I$ of $\Pi$, whether (i) $\opt_{\Pi}(I) < r$ or (ii) $\opt_{\Pi}(I) \geq q$ in time $O_{q, r}(|I|^{\beta(r)})$.

$\Pi$ is said to be {\em weakly inherently enumerative} if it is $\beta$-weakly inherently enumerative for some $\beta = \omega(1)$.
\end{definition}

It follows from the definitions that any inherently enumerative problem is also weakly inherently enumerative. As stated earlier, we will prove total FPT inapproximability through inherently enumerative; the proposition below formally establishes a connection between the two. 

\begin{proposition}
If $\Pi$ is weakly inherently enumerative, then $\Pi$ is totally FPT inapproximable.  
\end{proposition}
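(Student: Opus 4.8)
The plan is to prove the contrapositive: assuming $\Pi$ admits an $f$-FPT gap approximation algorithm $\A$ for some computable $f$ (with $f(k) = o(k)$ in the maximization case), I will construct, for every pair of integers $q \geq r$ (with $r$ large), an algorithm that distinguishes $\opt_\Pi(I) < r$ from $\opt_\Pi(I) \geq q$ in time $O_{q,r}(|I|^{\beta(r)})$ for \emph{every} $\beta = \omega(1)$, contradicting weak inherent enumerativeness. The key observation is that the FPT running time $t(k) \cdot |I|^{O(1)}$ of $\A$ has exponent in $|I|$ that is a \emph{fixed constant} $c$, independent of $k$; so once $r$ is large enough that $\beta(r) > c$, the $|I|^{\beta(r)}$ budget dominates the polynomial factor of $\A$, and the $t(k)$ factor is absorbed into the $O_{q,r}(\cdot)$ notation since $k$ will be bounded by a function of $q$ and $r$.

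First I would treat the maximization case. Given $q \geq r$, I want to pick a threshold $k$ for $\A$ so that: if $\opt_\Pi(I) \geq q$ then $\opt_\Pi(I) \geq k$ (so $\A$ outputs $1$), and if $\opt_\Pi(I) < r$ then $\opt_\Pi(I) < k/f(k)$ (so $\A$ outputs $0$). The first condition needs $k \leq q$. The second condition needs $k/f(k) \geq r$, i.e. $k$ large enough relative to $f$; since $k/f(k) \to \infty$ (because $f(k) = o(k)$) and $k/f(k)$ can be taken non-decreasing, there is a smallest such $k = k(r,f)$, and it is a function of $r$ only. If this $k(r,f) \leq q$, then setting $\A$'s parameter to $k(r,f)$ does the job directly. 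If $k(r,f) > q$, the cleaner route is instead to first note that we may assume $q \geq k(r,f)$ without loss: the hardness in the definition is over \emph{all} $q \geq r$, so it suffices to refute it for the subfamily with $q$ large; equivalently, I will just show the algorithm works whenever $q \geq k(r,f)$, run $\A$ with parameter $k := k(r,f)$, and output its answer. Its running time is $t(k) \cdot |I|^{c} = O_{q,r}(|I|^{c})$, and choosing $r_0$ so that $\beta(r) \geq c$ for all $r \geq r_0$ (possible since $\beta = \omega(1)$) gives time $O_{q,r}(|I|^{\beta(r)})$ — contradicting $\beta$-weak inherent enumerativeness for every $\beta = \omega(1)$, hence weak inherent enumerativeness. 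Correctness: if $\opt_\Pi(I) \geq q \geq k$, then $\opt_\Pi(I) \geq k$ so $\A$ outputs $1$; if $\opt_\Pi(I) < r \leq k/f(k)$, then $\A$ outputs $0$.

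The minimization case is analogous and slightly simpler: given $q \geq r$, run $\A$ with parameter $k := r$. If $\opt_\Pi(I) < r$ then $\opt_\Pi(I) \leq r - 1 \leq r = k$ wait — I need $\opt_\Pi(I) \leq k$, so take $k := r-1$ (or handle the boundary by a harmless shift), giving $\A$ outputs $1$; and I need: if $\opt_\Pi(I) \geq q$ then $\opt_\Pi(I) > k \cdot f(k)$, which holds provided $q > k \cdot f(k) = (r-1) f(r-1)$. Again, since this is a bound depending only on $r$ and $f$, I restrict attention to $q$ exceeding it (the definition quantifies over all $q \geq r$, so refuting it for large $q$ suffices), and the running time bound goes through exactly as before.

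I expect the main obstacle to be purely bookkeeping: making sure the reduction of the distinguishing problem to a \emph{single} call of $\A$ is legitimate given the quantifier structure ($q \geq r \geq r_0$, with $r_0, \delta$ or $\beta$ fixed first). The subtle point is that $\A$'s parameter $k$ must be chosen as a function of $r$ and $f$ only (not of $|I|$), which is exactly what the monotonicity of $k \mapsto k/f(k)$ (resp. the triviality of the minimization bound) provides; and the $t(k)$ factor, being a function of this $k$ hence of $r$ hence ultimately bounded in terms of $q$ and $r$, is swallowed by $O_{q,r}(\cdot)$. No display math is needed, so no calculation-grinding is required — the whole argument is a one-call reduction plus a choice of $r_0$ making $\beta(r)$ beat the fixed polynomial exponent of $\A$.
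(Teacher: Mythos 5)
Your proposal is correct and follows essentially the same route as the paper's proof: argue the contrapositive, make a single call to the $f$-FPT gap approximation algorithm with a threshold parameter chosen as a function of $r$ and $f$ alone (via the monotone growth of $k/f(k)$ for maximization, resp.\ $k\cdot f(k)$ for minimization), and pick $r_0$ so that $\beta(r)$ exceeds the fixed polynomial exponent of the algorithm, absorbing $t(k)$ into $O_{q,r}(\cdot)$. The paper instantiates your "restrict to $q\ge k(r,f)$" step by simply taking $q$ to be the smallest integer with $q/f(q)>r$ (which is legitimate since refuting weak inherent enumerativeness only requires exhibiting one admissible pair $(q,r)$ per $r_0$), so the two arguments coincide.
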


\begin{proof}
We first consider maximization problems. We will prove the contrapositive of the statement. Assume that a maximization problem $\Pi$ is not totally FPT inapproximable, i.e., $\Pi$ admits an $f$-FPT gap approximation algorithm $\A$ for some computable function $f$ such that $\lim_{k \to \infty} k/f(k) = \infty$. Suppose that the running time of $\A$ on every input $(I, k)$ is $t(k) \cdot |I|^D$ for some constant $D$ and some function $t$. We will show that $\Pi$ is not weakly inherently enumerative.


Let $r_0 > 0$ be any constant and let $\beta: \N \rightarrow \mathbb{R}^+$ be any function such that $\beta = \omega(1)$. Let $r$ be the smallest integer such that $r > r_0$ and $\beta(r) \geq D$ and let $q$ be the smallest integer such that $q/f(q) > r$. Note that $r$ and $q$ exists since $\lim_{r \to \infty} \beta(r) = \infty$ and $\lim_{q \to \infty} q/f(q) = \infty$.

Given any instance $I$ of $\Pi$. From the definition of $f$-FPT gap approximation algorithms (\Cref{def:FPT gap approx}) and from the fact that $q/f(q) > r$, $\A$ on the input $(I, q)$ can  distinguish between $\opt_\Pi(I) \geq q$ and $\opt_\Pi(I) < r$ in $t(q) \cdot |I|^{D} \leq t(q) \cdot |I|^{\beta(r)} = O_{q, r}(|I|^{\beta(r)})$ time. Hence, $\Pi$ is not weakly inherently enumerative, concluding our proof for maximization problems.

%


For any minimization problem $\Pi$, assume again that $\Pi$ is not totally FPT inapproximable, i.e., $\Pi$ admits an $f$-FPT gap approximation algorithm $\A$ for some computable function $f$. Suppose that the running time of $\A$ on every input $(I, k)$ is $t(k) \cdot |I|^D$ for some constant $D$.

Let $r_0 > 0$ be any constant and let $\beta: \N \rightarrow \mathbb{R}^+$ be any function such that $\beta = \omega(1)$. Let $r$ be the smallest integer such that $r > r_0$ and $\beta(r) \geq D$ and let $q = \lceil r \cdot f(r) \rceil$. 

Given any instance $I$ of $\Pi$. From definition of $f$-FPT gap approximation algorithms and from $q \geq r \cdot f(r)$, $\A$ on the input $(I, r)$ can distinguish between $\opt_\Pi(I) \geq q$ and $\opt_\Pi(I) < r$ in $t(r) \cdot |I|^{D} \leq t(r) \cdot |I|^{\beta(r)} = O_{q, r}(|I|^{\beta(r)})$ time. Hence, $\Pi$ is not weakly inherently enumerative.
\end{proof}

An important tool in almost any branch of complexity theory, including parameterized complexity, is a notion of reductions. For the purpose of facilitating proofs of totally FPT inapproximability, we define the following reduction, which we call \emph{FPT gap reductions}.

\begin{definition}[FPT gap reduction]\label{def:FPT gap reduction}
For any functions $f, g = \omega(1)$, a problem $\Pi_0$ is said to be $(f, g)$-\emph{FPT gap reducible} to a problem $\Pi_1$ if there exists an algorithm $\A$ which takes in an instance $I_0$ of $\Pi_0$ and integers $q, r$ and produce an instance $I_1$ of $\Pi_1$ such that the following conditions hold.

\begin{itemize}
\item $\A$ runs in time $t(q, r) \cdot |I_0|^{O(1)}$ for some computable function $t: \N \times \N \to \N$. 
\item For every positive integer $q$, if $\opt_{\Pi_0}(I_0) \geq q$, then $\opt_{\Pi_1}(I_1) \geq f(q)$.
\item For every positive integer $r$, if $\opt_{\Pi_0}(I_0) < g(r)$, then $\opt_{\Pi_1}(I_1) < r$.
\end{itemize}
\end{definition}


It is not hard to see that FPT gap reduction indeed preserves totally FPT inapproximability, as formalized in Proposition~\ref{prop:gapred-totallyinapprox} below. The proof of the proposition can be found in Appendix~\ref{app:gapreduction}.

\begin{proposition} \label{prop:gapred-totallyinapprox}
If a problem $\Pi_0$ is (i) $(f, g)$-FPT gap reducible to $\Pi_1$ for some computable non-decreasing functions $f, g = \omega(1)$, 
and (ii) totally FPT inapproximable, then $\Pi_1$ is also totally FPT inapproximable. 
\end{proposition}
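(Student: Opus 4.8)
I would argue the contrapositive. Assume $\Pi_1$ is not totally FPT inapproximable. By \Cref{def:totalinapprox} this yields a computable function $h$ — satisfying in addition $h(k)=o(k)$ when $\Pi_1$ is a maximization problem, and with no restriction when $\Pi_1$ is a minimization problem — together with an $h$-FPT gap approximation algorithm $\mathcal{B}$ for $\Pi_1$. The goal is to produce, for a suitable computable $f'$ (which must in addition be $o(k)$ if $\Pi_0$ is a maximization problem), an $f'$-FPT gap approximation algorithm for $\Pi_0$; this contradicts hypothesis (ii). The algorithm is the obvious composition: on a gap instance of $\Pi_0$ with query parameter, choose reduction parameters $q$ and $r$ as computable functions of the query parameter, run the reduction of \Cref{def:FPT gap reduction} to get an instance $I_1$ of $\Pi_1$ of size $t_{\mathcal{R}}(q,r)\cdot|I_0|^{O(1)}$, and run $\mathcal{B}$ on $I_1$ with a suitable parameter. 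Because $q$ and $r$ depend only on the query parameter, the running time is automatically of the form $t'(\cdot)\cdot|I_0|^{O(1)}$, so the only real task is to pick $q$, $r$, and the parameter passed to $\mathcal{B}$ so that the ``completeness'' and ``soundness'' sides are transported correctly through the two implications of \Cref{def:FPT gap reduction}.

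\textbf{Maximization case.} On input $(I_0,q)$ I would set $r:=\max\{1,\lfloor f(q)/h(f(q))\rfloor\}$, run the reduction on $(I_0,q,r)$ to get $I_1$, and return $\mathcal{B}(I_1,f(q))$. If $\opt_{\Pi_0}(I_0)\ge q$ then the reduction gives $\opt_{\Pi_1}(I_1)\ge f(q)$, so $\mathcal{B}$ outputs $1$. Putting $f'(q):=\lceil q/g(r)\rceil$, if $\opt_{\Pi_0}(I_0)<q/f'(q)$ then $\opt_{\Pi_0}(I_0)<g(r)$, hence $\opt_{\Pi_1}(I_1)<r\le f(q)/h(f(q))$, so $\mathcal{B}$ outputs $0$. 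It remains to verify $f'=o(q)$: since $f=\omega(1)$ we have $f(q)\to\infty$, and since $h(k)=o(k)$ we get $f(q)/h(f(q))\to\infty$, hence $r\to\infty$, hence (as $g=\omega(1)$) $g(r)\to\infty$, and a one-line estimate gives $q/f'(q)\ge\min\{q,g(r)/2\}\to\infty$; moreover $f'$ is computable because $f,g,h$ are. This contradicts total FPT inapproximability of the maximization problem $\Pi_0$.

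\textbf{Minimization case.} This is the mirror image. On input $(I_0,k)$ I would let $r:=r(k)$ be the least integer with $g(r)>k$, then let $q:=q(k)$ be large enough — which exists and is computable since $f=\omega(1)$ — that $f(q)>\max\{1,r-1\}\cdot h(\max\{1,r-1\})$, run the reduction on $(I_0,q,r)$, and return $\mathcal{B}(I_1,\max\{1,r-1\})$. Choosing $q(k)$ determines $f'$ through the relation $q(k)=\lfloor k\,f'(k)\rfloor+1$; concretely $f'(k)=\max\{1,\lceil(Q(k)-1)/k\rceil\}$, where $Q(k)$ is the threshold above, which is computable. If $\opt_{\Pi_0}(I_0)\le k<g(r)$ then $\opt_{\Pi_1}(I_1)<r$, i.e.\ $\opt_{\Pi_1}(I_1)\le\max\{1,r-1\}$, so $\mathcal{B}$ outputs $1$; if $\opt_{\Pi_0}(I_0)>k\,f'(k)$ then $\opt_{\Pi_0}(I_0)\ge q$, so $\opt_{\Pi_1}(I_1)\ge f(q)>\max\{1,r-1\}\cdot h(\max\{1,r-1\})$ and $\mathcal{B}$ outputs $0$. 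Since for a minimization problem $f'$ only needs to be computable (there is no $o(k)$ requirement in \Cref{def:totalinapprox}), no further asymptotic check is needed. This again contradicts total FPT inapproximability of $\Pi_0$.

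\textbf{The main obstacle.} There is no conceptually hard step — it is just ``reduce, then approximate'' — so the difficulty is entirely parameter bookkeeping: one must choose $r$ as a function of the query parameter so that the relevant window (namely $r\le f(q)/h(f(q))$ in the maximization case, and $q$ large enough relative to $r$ and $h$ in the minimization case) is nonempty, and so that the derived $f'$ is computable and, in the maximization case, $o(k)$. This is precisely where the hypotheses $f,g=\omega(1)$ enter — keeping the windows nonempty and keeping $f'$ sublinear as the query parameter grows — and where the asymmetry in \Cref{def:totalinapprox} bites: in the maximization case one must use that ``not totally FPT inapproximable'' supplies an $h$ with $h(k)=o(k)$, without which $f(q)/h(f(q))$ need not tend to infinity. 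Keeping all chosen quantities integral and at least $1$ for the finitely many small parameter values (as done above via the $\max\{1,\cdot\}$'s) is the only other thing to watch.
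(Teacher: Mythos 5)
Your proof is correct and follows essentially the same route as the paper's: argue the contrapositive, compose the reduction (with $q$ equal to the query parameter and $r \approx f(q)/h(f(q))$) with the assumed gap-approximation algorithm run at parameter $f(q)$, and check that the resulting ratio is computable and $o(q)$. The only differences are cosmetic: you spell out the minimization case that the paper dismisses as analogous, and by defining $f'(q)=\lceil q/g(r)\rceil$ you sidestep the paper's auxiliary claim that $g(f(x))\le x$, which it proves only to guarantee that its gap $[\,g(f(k)/h(f(k))),\,k)$ is nonempty.
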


As stated earlier, we mainly work with inherently enumerative concepts instead of working directly with totally FPT inapproximability; indeed, we will never use the above proposition and we alternatively use FPT gap reductions to prove that problems are weakly inherently enumeratives. For this purpose, we will need the following proposition.

\begin{proposition} \label{prop:gapred-enum}
If a problem $\Pi_0$ is (i) $(f, g)$-FPT gap reducible to $\Pi_1$ and (ii) $\beta$-weakly inherently enumerative for some $f, g, \beta = \omega(1)$, then $\Pi_1$ is $\Omega(\beta \circ g)$-weakly inherently enumerative. 
\end{proposition}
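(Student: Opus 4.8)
The plan is to compose the assumed FPT gap reduction with any hypothetical fast algorithm for $\Pi_1$ to obtain a fast algorithm that distinguishes $\opt_{\Pi_0} < g(r)$ from $\opt_{\Pi_0} \geq q'$ for suitable parameters, thereby contradicting the fact that $\Pi_0$ is $\beta$-weakly inherently enumerative. First I would fix the constant $r_0$ witnessing that $\Pi_0$ is $\beta$-weakly inherently enumerative, and let $\A_{\mathrm{red}}$ be the $(f,g)$-FPT gap reduction running in time $t(q,r)\cdot |I_0|^{O(1)}$, say $t(q,r)\cdot|I_0|^{c}$. Since the reduction runs in this time, the output instance $I_1$ has size $|I_1| \leq t(q,r)\cdot|I_0|^{c}$. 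I would set $r_0' := \max\{r_0, \text{(threshold making } g(r)\geq r_0)\}$ so that for all integers $r \geq r_0'$ we have $g(r) \geq r_0$, and define $\beta'(r) := \tfrac{1}{2c}\,\beta(g(r))$, which is $\omega(1)$ because $g = \omega(1)$ and $\beta = \omega(1)$; this is the function we claim witnesses that $\Pi_1$ is weakly inherently enumerative (up to the $\Omega(\cdot)$ in the statement, absorbing the constant $1/(2c)$).

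The core step: suppose for contradiction that for these $r_0'$ and $\beta'$ there is an algorithm $\B$ that, for some integers $q \geq r \geq r_0'$, decides on every instance $I_1$ of $\Pi_1$ whether $\opt_{\Pi_1}(I_1) < r$ or $\opt_{\Pi_1}(I_1) \geq q$ in time $O_{q,r}(|I_1|^{\beta'(r)})$. I would then build an algorithm for $\Pi_0$ as follows: given an instance $I_0$ of $\Pi_0$, run $\A_{\mathrm{red}}$ with the parameters $q_{\mathrm{red}} := r$ and $r_{\mathrm{red}} := r$ (so that the reduction's guarantees read: $\opt_{\Pi_0}(I_0) \geq r \Rightarrow \opt_{\Pi_1}(I_1) \geq f(r) \geq$ [the relevant threshold], and $\opt_{\Pi_0}(I_0) < g(r) \Rightarrow \opt_{\Pi_1}(I_1) < r$) — actually, to make the two sides of $\B$'s promise line up with the two sides of $\Pi_0$'s promise, I would feed $\A_{\mathrm{red}}$ the parameters chosen so that $f(q_{\mathrm{red}}) \geq q$ and $g(r_{\mathrm{red}}) = g(r)$; concretely, since $f = \omega(1)$ is non-decreasing one picks $q_{\mathrm{red}}$ minimal with $f(q_{\mathrm{red}}) \geq q$, and takes $r_{\mathrm{red}} = r$. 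Then: if $\opt_{\Pi_0}(I_0) \geq q_{\mathrm{red}}$ we get $\opt_{\Pi_1}(I_1) \geq f(q_{\mathrm{red}}) \geq q$, so $\B$ outputs "$\geq q$"; if $\opt_{\Pi_0}(I_0) < g(r)$ we get $\opt_{\Pi_1}(I_1) < r$, so $\B$ outputs "$< r$". Hence the composed algorithm distinguishes $\opt_{\Pi_0}(I_0) \geq q_{\mathrm{red}}$ from $\opt_{\Pi_0}(I_0) < g(r) =: r_{\mathrm{new}}$, where note $q_{\mathrm{red}} \geq r_{\mathrm{new}} \geq r_0$ (using $r \geq r_0'$ so $g(r) \geq r_0$, and that $f(q_{\mathrm{red}}) \geq q \geq r \geq \ldots$ forces $q_{\mathrm{red}}$ large enough — a monotonicity bookkeeping point).

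It remains to bound the running time of the composed algorithm and check it violates $\beta$-weak inherent enumerativeness at parameters $(q_{\mathrm{red}}, r_{\mathrm{new}}) = (q_{\mathrm{red}}, g(r))$. The reduction costs $t(q_{\mathrm{red}},r)\cdot|I_0|^{c}$ and produces $|I_1| \leq t(q_{\mathrm{red}},r)\cdot|I_0|^{c}$; then $\B$ costs $O_{q,r}(|I_1|^{\beta'(r)}) \leq O_{q,r}\big((t(q_{\mathrm{red}},r)\cdot|I_0|^{c})^{\beta'(r)}\big) = C(q,r)\cdot|I_0|^{c\,\beta'(r)}$ for a suitable factor $C(q,r)$ depending only on $q$ and $r$. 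Since $c\,\beta'(r) = \tfrac{1}{2}\beta(g(r)) = \tfrac12\beta(r_{\mathrm{new}}) \leq \beta(r_{\mathrm{new}})$, the whole thing runs in $O_{q_{\mathrm{red}}, r_{\mathrm{new}}}(|I_0|^{\beta(r_{\mathrm{new}})})$ time, contradicting that $\Pi_0$ is $\beta$-weakly inherently enumerative (with witness threshold $r_0$), provided $r_{\mathrm{new}} \geq r_0$, which holds. The main obstacle — really the only delicate point — is the bookkeeping that makes the promises match up: ensuring one can always choose $q_{\mathrm{red}}$ with $f(q_{\mathrm{red}}) \geq q$ (needs $f = \omega(1)$), that $g(r) \geq r_0$ for $r$ large (needs $g = \omega(1)$ and monotonicity), and that the resulting pair $(q_{\mathrm{red}}, g(r))$ is a legal pair of integers with $q_{\mathrm{red}} \geq g(r) \geq r_0$ so that the contradiction is genuinely with the definition of $\beta$-weakly inherently enumerative; the $\Omega$ in "$\Omega(\beta\circ g)$" is exactly what lets us absorb the constant $1/(2c)$ coming from the polynomial blow-up of the reduction.
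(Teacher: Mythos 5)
Your proposal is correct and follows essentially the same route as the paper's proof: compose the $(f,g)$-FPT gap reduction with a hypothetical fast gap algorithm for $\Pi_1$, choose the reduction's completeness parameter minimally so that $f(q_{\mathrm{red}})$ clears the algorithm's threshold, set the new soundness parameter to $g(r)$, and absorb the polynomial blow-up of the reduction into the $\Omega(\cdot)$ via a constant-factor rescaling of $\beta\circ g$. The only (harmless) bookkeeping you leave implicit, which the paper handles by requiring $\beta(r')\geq C$ in its choice of threshold, is that $r_0'$ should also be taken large enough that $\beta(g(r))\geq c$, so that the reduction's own $|I_0|^{c}$ running-time term is dominated by $|I_0|^{\beta(g(r))}$.
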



\begin{proof}
	We assume that (i) holds, and will show that if the ``then'' part does not hold, then (ii) also does not hold. 
	Recall from \Cref{def:FPT gap reduction} that (i) implies that there exists $C, D > 0$ such that the reduction from $\Pi_0$ (with parameters $q$ and $r$) to $\Pi_1$ takes $O_{q, r}(|I_0|^C)$ time and always output an instance $I_1$ of size at most $O_{q, r}(|I_0|^D)$ on every input instance $I_0$. Now assume that  the ``then'' part does {\em not} hold, in particular  $\Pi_1$ is {\em not} $(\beta \circ g)/D$-weakly inherently enumerative.	We will show the following claim which says that (ii) does not hold (by \Cref{def:weakine}). 
	%
\begin{claim}
For every $r_0 > 0$, there exists $q \geq r \geq r_0$ and an $O_{q, r}(|I_0|^{\beta(r)})$-time algorithm $\B$ that can, on every input instance $I_0$ of $\Pi_0$, distinguish between $\opt_{\Pi_0}(I_0) \geq q$ and $\opt_{\Pi_0}(I_0)<r.$
\end{claim}
	



We now prove the claim.	
Consider any $r_0$.	Since $\beta, g =\omega(1)$, there exists $r'_0$ such that $g(r') \geq r_0$ and $\beta(r')\geq C$, for all $r' \geq r'_0$.
	%
	From the assumption that $\Pi_1$ is not $(\beta \circ g)/D$-weakly inherently enumerative, there exist $q' \geq r' \geq r'_0$ such that there is an $O_{q', r'}(|I_1|^{\beta(g(r'))/D})$-time algorithm $\A$ that can, on every input instance $I_1$ of $\Pi_1$, distinguish between $\opt_{\Pi_1}(I_1) \geq q'$ and $\opt_{\Pi_1}(I_1) < r'$.
	
	Let $r = g(r')$, and let $q$ be the smallest integer such that $f(q) \geq q'$ and $q \geq r$; 
	Note that $q$ exists since $\lim_{q \to \infty}f(q) = \infty$, and that $r\geq r_0$. 
	We use $\A$ and the reduction to build an algorithm $\B$ as follows.
	%
	On input $I_0$, algorithm $\B$ runs the reduction on $I_0$ and the previously defined $q, r$. Let us call the output of the reduction $I_1$. $\B$ then runs $\A$ on input $(I_1, q', r')$ and outputs accordingly; i.e. if $\A$ says that $\opt_{\Pi_1}(I_1) \geq q'$, then $\B$ outputs $\opt_{\Pi_0}(I_0) \geq q$, and, otherwise, if $\A$ says that $\opt_{\Pi_1}(I_1) < r'$, then $\B$ outputs $\opt_{\Pi_0}(I_0) < r$.

	Now we show that $\B$ can distinguish whether $\opt_{\Pi_0}(I_0) \geq q$ or $\opt_{\Pi_1}(I_1)<r$ as desired by the claim: From our choice of $q$, if $\opt_{\Pi_0}(I_0) \geq q$, then $\opt_{\Pi_1}(I_1) \geq f(q) \geq  q'$. Similarly, from our choice of $r = g(r')$, if $\opt_{\Pi_0}(I_0) < r$, then $\opt_{\Pi_1}(I_1) < r'$. Since $\A$ can distinguish between the two cases, $\B$ can distinguish between the two cases as well.
	
	The total running time of $\B$ is  $O_{q, r}(|I_0|^C) + O_{q', r'}(|I_1|^{\beta(g(r'))/D})$ (the first term is for running the reduction). Since  $I_1$ of size at most $O_{q, r}(|I_0|^D)$, $\beta(r)\geq C$, and $q'$ and $r'$ depend only on $q$ and $r$, the running time can be bounded by $O_{q, r}(|I_0|^{\beta(r)})$ as desired. 
\end{proof}

\section{Covering Problems on Label Cover Instances}\label{sec:label cover}

In this section, we give  intermediate results for the lower bounds on the running time of approximating variants of the {\em label cover} problem, which will be the source of our inapproximability results for \Clique and \DomSet.

\subsection{Problems and Results}

\paragraph{Label cover instance:} \danupon{We should emphasize that this has no projection property and we don't assume that the graph is regular.}
A label cover instance $\Gamma$ consists of $(G, \Sigma_U, \Sigma_V, \Pi)$, where 

\begin{itemize}
	\item  $G = (U, V, E)$ is a bipartite graph between vertex sets $U$ and $V$ and an edge set $E$,
	\item $\Sigma_U$ and $\Sigma_V$ are sets of {\em alphabets} to be assigned to vertices in $U$ and $V$, respectively, and
	\item $\Pi=\{\Pi_e\}_{e\in E}$ is a set of {\em constraints} $\Pi_e\subseteq \Sigma_U\times \Sigma_V$. \danupon{Use ``Relations'' instead of ``Constraints''?}
\end{itemize}




We say that $\Pi$ (or $\Gamma$) has the {\em projection property} if for every edge $uv\in E$ (where $u\in U$ and $V\in v$) and every $\alpha\in \Sigma_U$, there is exactly one $\beta\in \Sigma_V$ such that $(\alpha, \beta)\in \Pi_{uv}$.  

%
%

We will define two combinatorial  optimization problems on an instance of the label cover problem.
These two problems are defined on the same instance as the standard label cover problem. We will briefly discuss how our problems differ from the standard one. 

\paragraph{Max-Cover Problem:} 
A {\em labeling} of the graph, is a pair of mappings $\sigma_U: U\rightarrow \Sigma_U$ and $\sigma_V: V\rightarrow \Sigma_V$. We say that a labeling $(\sigma_U, \sigma_V)$ {\em covers} edge $uv$ if $(\sigma_U(u), \sigma_V(v))\in \Pi_{uv}$. We say that a labeling covers a vertex $u$ if it covers every edge incident to $u$. 
%
%
For any label cover instance $\Gamma$, let $\MaxCov(\Gamma)$ denote the maximum number of vertices in $U$ that can be covered by a labeling; i.e. 
\begin{align*}
\MaxCov(\Gamma) &:= \max_{\sigma_U: U\rightarrow \Sigma_U,\ \sigma_V: V\rightarrow \Sigma_V} |\{ u\in U \mid \mbox{$(\sigma_U, \sigma_V)$ covers $u$}\}|.
\end{align*}

The goal of the Max-Cover problem is to compute $\MaxCov(\Gamma)$.
We remark that the standard label cover problem (e.g. \cite{WilliamsonShmoy-Book}) would try to maximize the number of covered {\em edges}, as opposed to our Max-Cover problem, which seeks to maximize the number of covered {\em vertices}.

\paragraph{Min-Label Problem:} 
A {\em multi-labeling} of the graph, is a pair of mappings $\sigma_U: U\rightarrow {\Sigma}_U$ and  $\hat\sigma_V: V\rightarrow 2^{\Sigma_V}$. We say that $(\sigma_U, \hat{\sigma}_V)$  {\em covers} an edge $uv$, if there exists $\beta\in \hat{\sigma}_V(v)$ such that $(\sigma(u), \beta)\in \Pi_{uv}$. 
%
For any label cover instance $\Gamma$, let $\MinLab(\Gamma)$ denote the minimum number of labels needed to assign to vertices in $V$ in order to cover {\em all} vertices in $U$; i.e. 
\begin{align*}
\MinLab(\Gamma) &:= \min_{(\sigma_U, \hat\sigma_V)} \sum_{v\in V} |\hat{\sigma}_V(v)|\,
\end{align*}
where the minimization is over multi-labelings $(\sigma_U, \hat{\sigma}_V)$  that covers every edge in $G$.

We emphasize that we can assign multiple labels to nodes in $V$ while each node in $U$ must be assigned a unique label. Note that \MinLab is different from the problem known in the literature as \MinRep (e.g. \cite{CharikarHK11}); in particular, in \MinRep we can assign multiple labels to all nodes.


\paragraph{Results.}
First, note that checking whether $\MaxCov(\Gamma)< r$ or not, for any $r\geq 1$, can be done by the following algorithms. 

\begin{enumerate}
	

	\item It can be done\footnote{Recall that we use $\Ostar(\cdot)$ to hide factors polynomial in the input size.} in $\Ostar({|U| \choose r} (|\Sigma_U|)^r) = \Ostar((|U|\cdot|\Sigma_U|)^r)$ time: First, enumerate all ${|U| \choose r}$ possible subsets $U'$ of $U$ and all $|\Sigma_U|^{|U'|}$ possible labelings on vertices in $U'$. Once we fix the labeling on $U'$, we only need polynomial time to check whether we can label other vertices so that all vertices in $U'$ are covered. 
	
	\item It can be done in $\Ostar(|\Sigma_V|^{|V|})$ time: Enumerate all $\Ostar(|\Sigma_V|^{|V|})$ possible labelings $\sigma_V$ on $V$.  After $\sigma_V$ is fixed, we can find labeling $\sigma_U$ on $U$ that maximizes the number of vertices covered in $U$ in polynomial time. 
\end{enumerate}

ETH can be restated as that these algorithms are the best possible when $|U|=\Theta(|V|)$, $|\Sigma_U|, |\Sigma_V|=O(1)$ and $\Pi$ has the projection property. Gap-ETH asserts further that this is the case even to distinguish between $\MaxCov(\Gamma)=|U|$ and $\MaxCov(\Gamma)\leq (1-\varepsilon)|U|$.



\begin{theorem}\label{thm:Gap-ETH restated}
	Gap-ETH (\Cref{conj:gap-ETH}) is equivalent to the following statement. There exist constants $\varepsilon,\delta > 0$ such that no algorithm can take a label cover instance $\Gamma$ and can distinguish between the following cases in $O(2^{\delta |U|})$ time: 
	\begin{itemize}
		\item $\MaxCov(\Gamma)=|U|$, and 
		\item $\MaxCov(\Gamma)<(1-\varepsilon)|U|$.
	\end{itemize} 
  This holds even when $|\Sigma_U|, |\Sigma_V|=O(1)$, $|U|=\Theta(|V|)$ and $\Pi$ has the projection property.\danupon{Should emphasize that $|U|$ can be arbitrarily large.}
\end{theorem}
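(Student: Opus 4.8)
The plan is to prove both directions of the equivalence through the standard \emph{clause--variable game}, which in fact gives an exact, linear-size correspondence between $\SAT(\phi)$ and $\MaxCov(\Gamma)$. Given a $3$-CNF formula $\phi$ with variable set $X$, $|X|=n$, and clause set $\mathcal{C}$, $|\mathcal{C}|=m=O(n)$, I would build $\Gamma=(G,\Sigma_U,\Sigma_V,\Pi)$ with $G=(U,V,E)$ as follows: $U=\mathcal{C}$ and $V=X$; $\Sigma_V=\{0,1\}$ and $\Sigma_U$ is the set of the (at most $7$) assignments to the $\le 3$ variables of a clause that satisfy that clause; $E$ contains $uv$ whenever variable $v$ occurs in clause $u$; and $\Pi_{uv}=\{(\alpha,\beta):\alpha(v)=\beta\}$, i.e.\ the clause-assignment $\alpha$ and the variable-assignment $\beta$ must agree on $v$. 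This reduction is deterministic and runs in linear time, $|\Sigma_U|\le 7$ and $|\Sigma_V|=2$ are absolute constants, $\Pi$ has the projection property (for an edge $uv$ and $\alpha\in\Sigma_U$, the unique consistent $\beta$ is $\alpha(v)$), and---since we may assume every variable occurs in some clause, so $n\le 3m$---we have $|U|=m$ and $|V|=n$ with $|U|=\Theta(|V|)=\Theta(n)$. So every instance produced satisfies the structural restrictions in the theorem.

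Next I would establish the key identity $\MaxCov(\Gamma)=\SAT(\phi)$. For the ``$\ge$'' direction: an assignment $\tau$ satisfying $k$ clauses yields the labeling $\sigma_V=\tau$, and $\sigma_U(u)=\tau|_{\mathrm{vars}(u)}$ for each satisfied clause $u$ (arbitrary otherwise), which covers exactly the satisfied clauses. For the ``$\le$'' direction: if a labeling $(\sigma_U,\sigma_V)$ covers a vertex $u$, then $\sigma_U(u)\in\Sigma_U$ is a satisfying assignment of $u$ agreeing with $\sigma_V$ on every variable of $u$, so $\sigma_V$ satisfies $u$; hence the number of covered $U$-vertices never exceeds the number of clauses satisfied by $\sigma_V$, which is at most $\SAT(\phi)$. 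Since $|U|=m$, the two gap conditions therefore match exactly, for the \emph{same} $\epsilon$: $\SAT(\phi)=m \iff \MaxCov(\Gamma)=|U|$, and $\SAT(\phi)<(1-\epsilon)m \iff \MaxCov(\Gamma)<(1-\epsilon)|U|$.

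From here both directions follow by reading off running times. For ``Gap-ETH $\Rightarrow$ the statement'': take $\delta_0,\epsilon_0$ from Gap-ETH and a constant $C$ with $m\le Cn$; set $\varepsilon=\epsilon_0$ and $\delta=\delta_0/C$. An algorithm deciding the $\varepsilon$-gap label cover problem on the (restricted) instances above in time $O(2^{\delta|U|})$, composed with the linear reduction, would decide $\epsilon_0$-gap-$3$-SAT in time $O(2^{\delta m})\le O(2^{\delta_0 n})$, contradicting Gap-ETH. For ``$\lnot$(the statement) $\Rightarrow$ $\lnot$Gap-ETH'': if the statement fails, then for \emph{every} $\varepsilon,\delta>0$ there is a $2^{\delta|U|}$-time algorithm for the $\varepsilon$-gap label cover problem; given any target $\epsilon_0,\delta_0>0$, apply the reduction and invoke this algorithm with $\varepsilon=\epsilon_0$ and $\delta$ small enough that $2^{\delta m}\le 2^{\delta_0 n}$, yielding a $2^{\delta_0 n}$-time algorithm for $\epsilon_0$-gap-$3$-SAT. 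Since all reductions are deterministic, this handles the randomized formulation of Gap-ETH as well. I do not expect a genuine obstacle here; the only points needing care are the constant bookkeeping relating $n$, $m=|U|$ and $|V|$ so that the exponential bounds line up in both directions (each direction uses only $m=\Theta(n)$), and verifying that the constructed $\Gamma$ really does have the projection property and the claimed constant alphabet sizes.
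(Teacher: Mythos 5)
Your forward direction is fine and matches the paper: the clause--variable game gives a linear-size instance with $\MaxCov(\Gamma)=\SAT(\phi)$, constant alphabets, $|U|=\Theta(|V|)$, and (with your choice of $\Sigma_U$ as the satisfying partial assignments of each clause) the projection property, so a fast algorithm for the gap label cover problem yields a fast algorithm for gap-$3$-\SAT. But your two paragraphs prove only this one implication. ``Gap-ETH $\Rightarrow$ the statement'' and ``$\lnot(\text{the statement})\Rightarrow\lnot\text{Gap-ETH}$'' are contrapositives of each other, so your second argument is a restatement of your first, not the converse. The direction still missing is ``the statement $\Rightarrow$ Gap-ETH,'' i.e.\ $\lnot\text{Gap-ETH}\Rightarrow\lnot(\text{the statement})$: assuming every gap version of $3$-\SAT admits a subexponential algorithm, you must produce, for some fixed $\varepsilon,\delta$ you are trying to rule out, an $O(2^{\delta|U|})$-time algorithm for the label cover gap problem. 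That requires a reduction \emph{from} an arbitrary label cover instance (with constant alphabets and $|U|=\Theta(|V|)$) \emph{to} a linear-size gap-$3$-\SAT instance, and the clause--variable game goes the wrong way for this.

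This missing direction is exactly the part the paper flags as ``not straightforward'': it invokes H{\aa}stad's reduction to convert a \MaxCov instance $\Gamma$ into a $3$-\SAT formula of size $f(|\Sigma_U|+|\Sigma_V|)\cdot O(|U|+|V|)$ with a constant hardness gap, which is linear in $|U|$ precisely because the alphabets are $O(1)$. Composing that with a hypothetical subexponential gap-$3$-\SAT algorithm refutes the label cover statement. So to complete your proof you need to add a gap-preserving, linear-size encoding of label cover (with constant alphabet) back into $3$-\SAT; without it you have established only that Gap-ETH implies the label cover hardness, not the claimed equivalence.
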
	

The proof of \Cref{thm:Gap-ETH restated} is standard.
To avoid distracting the readers, we provide the sketch of the proof
in \Cref{sec:restate-gap-eth}.

We will show that \Cref{thm:Gap-ETH restated} can be extended to several cases, which will be useful later. 
%
First, consider when the first ($\Ostar((|U|\cdot|\Sigma_U|)^{r})$-time) algorithm is faster than the second.  We show that in this case this algorithm is essentially the best even for $r=O(1)$, and this holds even when we know that $\MaxCov(\Gamma)=|U|$. 

%
%

For convenience, in the statements of \Cref{thm:small r,thm:small V,thm:MinLab} below, we will use the notation $|\Gamma|$ to denote the size of the label cover instance; in particular, $|\Gamma| = |\Sigma_U| |\Sigma_V| |U| |V|$. Furthermore, recall that the notation $O_{k, r}(\cdot)$ denotes any multiplicative factor that depends only on $k$ and $r$.

\begin{theorem}[\MaxCov with Small $|U|$]
\label{thm:small r}
Assuming Gap-ETH, there exist constants $\delta, \rho > 0$ such that, for any positive integers $k \geq r \geq \rho$, no algorithm can take a label cover instance $\Gamma$ with $|U| = k$ and distinguish between the following cases in $O_{k, r}(|\Gamma|^{\delta r})$ time: 
	\begin{itemize}
		\item 	$\MaxCov(\Gamma) = k$ and  
		\item $\MaxCov(\Gamma)< r$.\danupon{Note ``$<$''.}
	\end{itemize}
This holds even when  $|\Sigma_V|=O(1)$ and $\Pi$ has the projection property.
\end{theorem}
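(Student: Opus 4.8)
The plan is to prove \Cref{thm:small r} via a randomized, gap-amplifying ``left-hand-side compression'' reduction from the Gap-ETH form of label cover in \Cref{thm:Gap-ETH restated} --- essentially a label-cover analogue of the randomized graph product. Fix a hard instance $\Gamma_0 = (G_0, \Sigma_{U_0}, \Sigma_{V_0}, \Pi_0)$ from \Cref{thm:Gap-ETH restated} with $G_0 = (U_0, V_0, E_0)$, $n := |U_0|$, $|V_0| = \Theta(n)$, $|\Sigma_{U_0}|,|\Sigma_{V_0}| = O(1)$, the projection property, and --- a free bonus, since the instance comes from $3$-\SAT with $U_0$ the clause set --- maximum degree at most a constant $d_0$ on the $U_0$ side. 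Let $\varepsilon,\delta_{\mathrm{GE}} > 0$ be the constants of \Cref{thm:Gap-ETH restated}. Given targets $k \ge r \ge \rho$ (with the constant $\rho$ fixed at the end), set a ``block size'' $\ell := \min\!\big(|U_0|,\ \Theta(\tfrac{n + r\log k}{\varepsilon r})\big)$, the hidden constant chosen so the soundness union bound below works. For each $i \in [k]$ draw, independently, a uniformly random size-$\ell$ set $S_i \subseteq U_0$, and let $N_i := \bigcup_{u \in S_i} N_{G_0}(u)$, so $|N_i| \le d_0\ell$. The output instance $\Gamma$ has left side $W = \{w_1,\dots,w_k\}$ (so $|U| = k$), right side $V_0$ with alphabet $\Sigma_{V_0}$ (plus a dummy symbol, see below), an edge $(w_i,v)$ whenever $v \in N_i$, and for $w_i$ the alphabet of all \emph{locally realizable} maps $\beta : N_i \to \Sigma_{V_0}$ --- meaning that for every $u \in S_i$ the restriction $\beta|_{N_{G_0}(u)}$ is the projection of some $\alpha \in \Sigma_{U_0}$; the constraint of edge $(w_i,v)$ merely forces $v$'s label to equal $\beta(v)$. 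Per-vertex alphabets are without loss of generality, and a non-realizable $\beta$ can be routed to a dummy right-symbol $\bot$, which keeps $|\Sigma_V| = O(1)$ and, since a label of $w_i$ directly names the labels of its $V_0$-neighbours, keeps the projection property. Note $|\Gamma| = 2^{O(\ell)}\poly(k,n)$ and the reduction runs in time $2^{O(\ell)}\poly(k,n)$.

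For correctness, a labeling $(\sigma_W,\sigma_{V_0})$ covers $w_i$ precisely when $\sigma_{V_0}|_{N_i}$ is locally realizable, i.e.\ when every $u \in S_i$ is \emph{good} for $\sigma_{V_0}$ (there is $\alpha \in \Sigma_{U_0}$ with $(\alpha,\sigma_{V_0}(v)) \in (\Pi_0)_{uv}$ for all $v \in N_{G_0}(u)$); and, taking the best response to a fixed $\sigma_{V_0}$, the set of good vertices has size at most $\MaxCov(\Gamma_0)$. If $\MaxCov(\Gamma_0) = n$, an optimal pair makes every $u$ good, so every $w_i$ is covered and $\MaxCov(\Gamma) = k$. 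If $\MaxCov(\Gamma_0) < (1-\varepsilon)n$, then for every $\sigma_{V_0}$ the ``bad'' set $B := U_0 \setminus \{\text{good vertices}\}$ has $|B| > \varepsilon n$, and the number of covered $w_i$'s is at most $|\{i : S_i \cap B = \emptyset\}|$; since $\Pr[S_i \cap B = \emptyset] < (1-\varepsilon)^\ell$ independently over $i$, we get $\Pr[\,|\{i : S_i\cap B=\emptyset\}| \ge r\,] \le \binom{k}{r}(1-\varepsilon)^{\ell r}$, and a union bound over the $\le 2^n$ subsets $B$ of size $>\varepsilon n$ is below $1/3$ by the choice of $\ell$. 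Hence with probability $\ge 2/3$ (which can be pushed to $1 - o(1)$ by enlarging $\ell$ by a constant factor), every labeling of $\Gamma$ covers fewer than $r$ vertices, i.e.\ $\MaxCov(\Gamma) < r$.

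Composing the reduction with a hypothetical $O_{k,r}(|\Gamma|^{\delta r})$-time gap algorithm $\A$ for $\Gamma$ yields a randomized $2^{O(\ell)}\poly(k,n) + O_{k,r}(|\Gamma|^{\delta r})$-time algorithm distinguishing the two cases of \Cref{thm:Gap-ETH restated} for $\Gamma_0$ with probability $\ge 2/3$ (boosting $\A$ by a constant number of repetitions if it is randomized). With $\ell = O(n/(\varepsilon r)) + O(\log k/\varepsilon)$ one has $|\Gamma|^{\delta r} = 2^{O(\delta n/\varepsilon)}\cdot O_{k,r}(1)\cdot\poly(n)$, so for every fixed $k \ge r \ge \rho$ the total running time is $O_{k,r}(1)\cdot 2^{O(n/(\varepsilon r))}\cdot 2^{O(\delta n/\varepsilon)}\cdot\poly(n)$. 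The crux --- and the reason $r$ must exceed a constant and $\delta$ lie below a constant --- is exactly this budget: soundness forces $\ell = \Omega(n/(\varepsilon r))$, hence $|\Sigma_U| = 2^{\Omega(n/r)}$ is genuinely large, yet both $2^{O(n/(\varepsilon r))}$ and $2^{O(\delta n/\varepsilon)}$ must stay below $2^{\delta_{\mathrm{GE}} n}$. One arranges this by taking $\rho := \Theta(1/(\varepsilon\delta_{\mathrm{GE}}))$ and $\delta := \Theta(\varepsilon\delta_{\mathrm{GE}})$ (note that then, for $r \ge \rho$ and $n$ large, the $\min$ in the definition of $\ell$ is not active), after which the combined algorithm runs in $o(2^{\delta_{\mathrm{GE}} n})$ time and errs with probability $< 1/3$, contradicting \Cref{thm:Gap-ETH restated}, i.e.\ Gap-ETH. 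The remaining points --- bounded $U_0$-degree, the $\bot$/per-vertex bookkeeping that preserves the projection property, and the truncation of $\ell$ --- are routine.
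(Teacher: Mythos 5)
Your proof is correct and follows essentially the same route as the paper's: compress the left side into $k$ independent random $\ell$-element subsets of $U_0$ with $\ell = \Theta(n/(\varepsilon r))$, so that completeness is preserved and soundness holds because, with high probability, no ``bad'' set of more than $\varepsilon n$ original left vertices can be avoided by $r$ of the subsets. The paper packages exactly your union-bound computation as the statement that the random subsets form a disperser (an abstraction it then exploits to derandomize the reduction) and encodes a compressed vertex's label as the tuple of original left labels rather than as a locally realizable map into $\Sigma_{V_0}$, but these are presentational rather than substantive differences.
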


\danupon{NEXT TIME: Say that if we assume ETH then we get similar thing except that $r\geq ...$}

We emphasize that it is important for applications in later sections that $r=O(1)$. In fact, the main challenge in proving the theorem above is to prove it true for $r$ that is arbitrarily small compared to $|U|$. 

Secondly, consider when the second ($\Ostar(|\Sigma_V|^{|V|})$-time) algorithm is faster; in particular when $|V|\ll |U|$. In this case, we cannot make the soundness (i.e. parameter $r$ in  \Cref{thm:small r}) to be arbitrarily small. (Roughly speaking, the first algorithm can become faster otherwise.) 
Instead, we will  show that the second algorithm is essentially the best possible for soundness as small as $\gamma |U|$, for any constant $\gamma >0$. More importantly, this holds for $|V|=O(1)$ (thus independent from the input size). This is the key property of this theorem that we need later. 

\begin{theorem}[\MaxCov with Small $|V|$]
\label{thm:small V}
Assuming Gap-ETH, there exist constants $\delta, \rho > 0$ such that, for any positive integer $q \geq \rho$ and any $1 \geq \gamma > 0$, no algorithm can take a label cover instance $\Gamma$ with $|V|=q$ and distinguish between the following cases in $O_{q, \gamma}(|\Gamma|^{\delta q})$ time: 
	\begin{itemize}
		\item $\MaxCov(\Gamma)= |U|$ and  
		\item $\MaxCov(\Gamma) < \gamma |U|$.
	\end{itemize}
This holds even when $|\Sigma_U|\leq (1/\gamma)^{O(1)}$.
\end{theorem}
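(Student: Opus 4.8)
The plan is to give a single reduction from the label‑cover form of Gap‑ETH (\Cref{thm:Gap-ETH restated}) that simultaneously (a)~\emph{groups} the right‑hand side into exactly $q$ super‑vertices and (b)~\emph{amplifies} the soundness gap by an $\ell$‑fold tensoring of the left‑hand side. Let $\Gamma_0 = (G_0=(U_0,V_0,E_0), \Sigma_{U_0}, \Sigma_{V_0}, \Pi_0)$ be a hard instance from \Cref{thm:Gap-ETH restated}: $n := |U_0|$, $|V_0| = \Theta(n)$, $|\Sigma_{U_0}|, |\Sigma_{V_0}| = O(1)$, projection property, and distinguishing $\MaxCov(\Gamma_0) = n$ from $\MaxCov(\Gamma_0) < (1-\varepsilon)n$ requires time $2^{\Omega(n)}$ for a fixed constant $\varepsilon > 0$. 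We may assume $\gamma \le 1/2$, since replacing $\gamma$ by $\min\{\gamma,1/2\}$ only strengthens the statement.

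Pick $\ell = O(\log(1/\gamma)/\varepsilon)$ large enough that $(1-\varepsilon)^\ell < \gamma$, and partition $V_0$ arbitrarily into $q$ blocks $B_1,\dots,B_q$ each of size at most $\lceil |V_0|/q \rceil$. Define $\Gamma = (G=(U,V,E), \Sigma_U, \Sigma_V, \Pi)$ by: $V = \{v_1,\dots,v_q\}$ with $\Sigma_V = \Sigma_{V_0}^{\lceil |V_0|/q\rceil}$ (a label of $v_j$ encodes an assignment to the whole block $B_j$); $U = U_0^\ell$ (all ordered $\ell$‑tuples, so $|U| = n^\ell$) with $\Sigma_U = \Sigma_{U_0}^\ell$; there is an edge between $\vec u = (u^{(1)},\dots,u^{(\ell)})$ and $v_j$ whenever some $u^{(i)}$ has a $G_0$‑neighbour in $B_j$, and $\Pi_{\vec u v_j}$ accepts $(\vec\alpha,\tau)$ iff $(\alpha^{(i)},\tau_w)\in \Pi_{0,u^{(i)}w}$ for every $i$ and every $w \in B_j$ adjacent to $u^{(i)}$ in $G_0$. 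Then $|\Sigma_U| = |\Sigma_{U_0}|^\ell = (1/\gamma)^{O(1)}$ and $|V| = q$, as required. The key observation is that a labeling of $V$ is literally a labeling $\sigma_{V_0}$ of $V_0$, under which the $\ell$ coordinates of any $\vec u$ may be answered independently; hence $\vec u$ is covered iff every $u^{(i)}$ is covered by $\sigma_{V_0}$ in $\Gamma_0$, and therefore $\MaxCov(\Gamma) = (\MaxCov(\Gamma_0))^\ell$. This gives completeness ($\MaxCov(\Gamma_0)=n \Rightarrow \MaxCov(\Gamma)=n^\ell=|U|$) and soundness ($\MaxCov(\Gamma_0) < (1-\varepsilon)n \Rightarrow \MaxCov(\Gamma) < (1-\varepsilon)^\ell |U| < \gamma|U|$).

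It remains to track sizes. We have $|\Gamma| = |\Sigma_U||\Sigma_V||U||V| = (1/\gamma)^{O(1)} \cdot |\Sigma_{V_0}|^{\lceil|V_0|/q\rceil} \cdot n^\ell \cdot q = 2^{O(n/q)}\cdot \poly_{q,\gamma}(n)$, and the reduction runs in time polynomial in $|\Gamma|$. The exponent $\delta q$ in the claimed running time is chosen precisely so that the factor $q$ cancels the $1/q$ in the exponent of $|\Sigma_V|$: one gets $\log_2(|\Gamma|^{\delta q}) \le \delta\,C_V\,n + o_{q,\gamma}(n)$ for a universal constant $C_V$ (depending only on $|\Sigma_{V_0}|$ and the hidden constant in $|V_0|=\Theta(n)$). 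Hence an $O_{q,\gamma}(|\Gamma|^{\delta q})$‑time algorithm deciding the gap on $\Gamma$, pre‑composed with the reduction, would distinguish the two cases for $\Gamma_0$ in time $2^{\delta C_V n + o_{q,\gamma}(n)}$; taking the universal constant $\delta$ small enough that $\delta C_V$ lies below the Gap‑ETH exponent yields a contradiction. The constant $\rho$ only serves to absorb lower‑order slack — e.g.\ to guarantee $\delta q \ge c$ for the degree $c$ of the reduction's polynomial, so that $|\Gamma|^{\delta q}$ dominates the cost of running the reduction itself — so taking $\rho$ to be a large enough constant suffices.

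\textbf{Where the difficulty lies.} In contrast to \Cref{thm:small r}, there is no delicate compression to perform here: grouping one side of a label‑cover instance preserves $\MaxCov$ exactly, and an $\ell$‑fold tensoring of the other side multiplies both completeness and soundness, so the combinatorial content is minimal. The one thing that requires care is the size accounting in the last paragraph — verifying that the produced instance is genuinely ``hard‑sized'' ($|\Sigma_V|^{|V|} = 2^{\Theta(n)}$, matching the $\Ostar(|\Sigma_V|^{|V|})$ brute‑force algorithm) while $|\Gamma|^{\delta q}$ stays $2^{o(n)}$ for small $\delta$, and checking that the $\poly(n)$ and constant factors coming from $\ell$ and from writing down the exponentially‑sized constraint relations are all swallowed by $O_{q,\gamma}(\cdot)$ together with the slack below the Gap‑ETH exponent. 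This is exactly what dictates the form $|\Gamma|^{\delta q}$ (rather than, say, $|\Gamma|^{\delta}$) of the running‑time lower bound.
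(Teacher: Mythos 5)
Your proposal is correct and follows essentially the same route as the paper's proof: partition the right side into $q$ blocks with block-assignments as labels, blow up the left side by all $\ell$-wise combinations of original left vertices with $\ell = \Theta(\log(1/\gamma)/\varepsilon)$, and observe that completeness is preserved while soundness decays like $(1-\varepsilon)^\ell < \gamma$. The only (cosmetic) difference is that you use ordered $\ell$-tuples, giving the clean identity $\MaxCov(\Gamma) = \MaxCov(\Gamma_0)^\ell$, whereas the paper uses unordered $\ell$-subsets and bounds the number of covered subsets by counting $\ell$-subsets of the $(1-\varepsilon)$-fraction union; both yield the same parameters and the same size accounting.
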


We remark that the above label cover instance does not have the projection property. 

In our final result, we turn to computing $\MinLab(\Gamma)$. Since $\MaxCov(\Gamma)=|U|$ if and only if $\MinLab(\Gamma)=|V|$, a statement similar to \Cref{thm:Gap-ETH restated} intuitively holds for distinguishing between  $\MinLab(\Gamma)\leq |V|$ and $\MinLab(\Gamma) > (1+\varepsilon) |V|$; i.e. we need $\Ostar(|\Sigma_V|^{|V|})$ time. 
In the following theorem, we show that this gap can be substantially amplified, while maintaining the property that $|V|=O(1)$ (thus independent from the input size).


\begin{theorem}[\MinLab Hardness]
\label{thm:MinLab}
Assuming Gap-ETH, there exist constants $\delta, \rho > 0$ such that, for any positive integers $r \geq q \geq \rho$,  no algorithm can take a label cover instance $\Gamma$ with $|V|=q$, and distinguish between the following cases in $O_{q, r}(|\Gamma|^{\delta q})$ time: 
\begin{itemize}
	\item 	$\MinLab(\Gamma)= q$ and  
	\item $\MinLab(\Gamma) >r$.
\end{itemize}
This holds even when $|\Sigma_U|= (r/q)^{O(q)}$. 
\end{theorem}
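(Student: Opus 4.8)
The plan is to reduce from Theorem~\ref{thm:small V} (\MaxCov with small $|V|$), exploiting the fact that \MinLab on a label cover instance is, up to the standard partition-system trick, a covering/set-cover-style reformulation of \MaxCov. Start with a label cover instance $\Gamma_0 = (G_0, \Sigma_{U_0}, \Sigma_{V_0}, \Pi_0)$ given by Theorem~\ref{thm:small V} with $|V_0| = q$, $|\Sigma_{U_0}| \le (1/\gamma)^{O(1)}$, where we will choose $\gamma$ as a function of $r/q$ (roughly $\gamma \approx q/r$), so that either $\MaxCov(\Gamma_0) = |U_0|$ or $\MaxCov(\Gamma_0) < \gamma|U_0|$. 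The goal is to build a new label cover instance $\Gamma$ with $|V| = q$ such that in the completeness case $\MinLab(\Gamma) = q$ and in the soundness case $\MinLab(\Gamma) > r$, and then invoke the running-time lower bound from Theorem~\ref{thm:small V}.

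The key device is a \emph{partition system} (as in the classical \DomSet/\SetCov hardness reductions of Lund--Yannakakis and Feige): a ground set $B$ together with partitions $(B_{v,1}, \ldots)$ — more precisely, for each $v \in V_0$ and each label $\beta \in \Sigma_{V_0}$ a subset $P_{v,\beta} \subseteq B$, arranged so that for each $v$ the sets $\{P_{v,\beta}\}_\beta$ form a partition of $B$ into $|\Sigma_{V_0}|$ blocks, and so that the only way to cover all of $B$ using few sets drawn across all coordinates $v$ is to pick, for each $v$, a single block of its partition — picking fewer than $q$ sets total (one per coordinate) necessarily misses a point, while the "cheating" way of covering $B$ with a sublinear-in-$r$ number of sets across different partitions can be ruled out for an appropriate $(m, t)$-partition system with $t \approx r$. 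One then forms $\Gamma$ by taking the $U$-side to encode (element of $B$) $\times$ (edge of $G_0$) gadgets and the $V$-side to be $V_0$ with alphabet $\Sigma_{V_0}$ enriched so that assigning a multi-label to $v \in V$ corresponds to choosing a union of blocks $P_{v,\beta}$; a $U$-vertex is covered iff its $B$-point lies in a chosen block consistent with covering the corresponding $G_0$-edge. With this setup: if $\MaxCov(\Gamma_0) = |U_0|$, the covering labeling of $\Gamma_0$ translates to a single label per $v \in V$, giving $\MinLab(\Gamma) = q$; if every labeling of $\Gamma_0$ leaves $(1-\gamma)$-fraction of $U_0$ uncovered, then any multi-labeling of $\Gamma$ must, on a large fraction of the partition-system coordinates, use at least two blocks, forcing total label count $> r$ — this is where the partition-system parameters and the choice of $\gamma$ relative to $r/q$ get pinned down, and it is also where the bound $|\Sigma_U| = (r/q)^{O(q)}$ comes from (the $U$-alphabet has to range over the possible block-structure, whose size is governed by the partition system of size $\approx (r/q)^{O(1)}$ raised to the $q$ coordinates, or similar bookkeeping).

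For the running-time conclusion: the reduction runs in time $O_{q,r}(|\Gamma_0|^{O(1)})$ — the partition system of the required parameters can be constructed deterministically in time depending only on $q$ and $r$ (explicit constructions exist, or one can afford $|\Gamma|$-dependent but input-size-independent brute force since $|\Sigma_{V_0}|, |V_0| = O_{q,\gamma}(1)$), and $|\Gamma| = O_{q,r}(|\Gamma_0|^{O(1)})$, in particular $|U| = O_{q,r}(|U_0| \cdot |E(G_0)|)$. Hence an $O_{q,r}(|\Gamma|^{\delta' q})$-time algorithm distinguishing $\MinLab(\Gamma) = q$ from $\MinLab(\Gamma) > r$ would yield an $O_{q,\gamma}(|\Gamma_0|^{\delta q})$-time algorithm for the \MaxCov gap of Theorem~\ref{thm:small V} (for $\delta$ obtained from $\delta'$ by absorbing the polynomial blow-up), contradicting Gap-ETH. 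Note $|V| = q$ is preserved exactly, which is the crucial feature.

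The main obstacle I anticipate is \textbf{calibrating the partition system and the soundness parameter $\gamma$ together}: we need a partition system whose "cheating cost" is just above $r$, while its block count matches $|\Sigma_{V_0}|$, its coordinate count matches $q = |V_0|$, and its ground-set size stays polynomial in $|U_0|$ (or at least $O_{q,r}(1)$ times a polynomial), and simultaneously the soundness $\gamma|U_0|$ of the starting \MaxCov instance has to be small enough that the $(1-\gamma)$-uncovered fraction genuinely forces more than $r$ labels across the $q$ coordinates rather than just marginally more than $q$. Threading these inequalities so that the completeness stays exactly $q$, the soundness strictly exceeds $r$, and the alphabet bound comes out as $(r/q)^{O(q)}$ is the delicate part; everything else (the translation of labelings, the running-time accounting, preserving $|V| = q$) is routine once the gadget is fixed.
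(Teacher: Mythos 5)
Your reduction is much heavier than what is needed, and the one step you defer --- ``calibrating the partition system and the soundness parameter $\gamma$ together'' --- is precisely where the whole proof lives; moreover your proposed calibration $\gamma \approx q/r$ is off by an exponential. The difficulty is this: with $|V| = q$ fixed, a multi-labeling of total size $r$ can concentrate many labels on few right-vertices, and a left-vertex $u$ is covered as soon as \emph{some} label in $\hat\sigma_V(v)$ works for each neighbor $v$. Converting such a multi-labeling back to a single labeling therefore loses a factor that is \emph{multiplicative over the $q$ coordinates}: picking one label uniformly at random from each $\hat\sigma_V(v)$ covers each $u$ with probability at least $\prod_{v \in V} |\hat\sigma_V(v)|^{-1} \geq \left(\tfrac{1}{q}\sum_{v} |\hat\sigma_V(v)|\right)^{-q} \geq (r/q)^{-q}$ by AM--GM. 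So the \MaxCov soundness one must start from is $\gamma = (r/q)^{-q}$, not $\approx q/r$; this is also exactly where the stated bound $|\Sigma_U| = (1/\gamma)^{O(1)} = (r/q)^{O(q)}$ comes from. Your partition-system soundness argument (``a large fraction of coordinates must use at least two blocks'') cannot force a total label count exceeding $r$ when there are only $q$ coordinates --- it forces only about $2q$ --- and nothing in your sketch rules out the concentrated cheat in which one vertex receives $\Theta(r)$ labels.

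Once this is seen, no partition system and no new instance are needed at all: the paper's proof takes the instance $\Gamma$ of \Cref{thm:small V} with $\gamma = (r/q)^{-q}$ \emph{unchanged} and observes that (i) completeness is immediate, since a covering labeling is a covering multi-labeling of size $|V| = q$, and (ii) if $\MinLab(\Gamma) \leq r$, then the random single-label selection above covers in expectation at least $(r/q)^{-q}|U| = \gamma|U|$ left-vertices, so $\MaxCov(\Gamma) \geq \gamma|U|$, contradicting the soundness guarantee of \Cref{thm:small V}. The partition-system machinery you invoke is the right tool one level later, namely for the reduction from \MinLab to \SetCov (it is exactly the hypercube system of \Cref{thm:setcov-reduction}); at the \MinLab level the multi-labeling semantics already plays that role for free.
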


The rest of this section is devoted to proving \Cref{thm:small r,thm:small V,thm:MinLab}.


\subsection{Proof of \Cref{thm:small r}} \label{subsec:smallr}

The proof proceeds by {\bf compressing the left vertex set $U$} of a
label cover instance from \Cref{thm:Gap-ETH restated}. More specifically, each new left vertex will be a subset of left vertices in the original instance. In the construction below, these subsets will just be random subsets of the original vertex set of a certain size; however, the only property of random subsets we will need is that they form a \emph{disperser}. To clarify our proof, let us start by stating the definition of dispersers here. Note that, even though dispersers are often described in graph or distribution terminologies in literatures (e.g.~\cite{Vadhan-book}), it is more convenient for us to describe it in terms of subsets.

\begin{definition}
For any positive integers $m, k, \ell, r \in \N$ and any constant $\varepsilon \in (0, 1)$, an \emph{$(m, k, \ell, r, \varepsilon)$-disperser} is a collection $\cI$ of $k$ subsets $I_1, \dots, I_k \subseteq [m]$ each of size $\ell$ such that the union of any $r$ different subsets from the collection has size at least $(1 - \varepsilon)m$. In other words, for any $1 \leq i_1 < \cdots < i_r \leq k$, we have $|I_{i_1} \cup \cdots \cup I_{i_r}| \geq (1 - \varepsilon)m$.
\end{definition}

The idea of using dispersers to amplify gap in hardness of approximation bears a strong resemblance to the classical randomized graph product technique~\cite{BermanS92}. Indeed, similar approaches have been used before, both implicitly (e.g.~\cite{BellareGS98}) and explicitly (e.g.~\cite{Zuck96,Zuck96-unapprox,Zuckerman07}). In fact, even the reduction we use below has been studied before by Zuckerman~\cite{Zuck96,Zuck96-unapprox}!

What differentiates our proof from previous works is the setting of parameters. Since the reduction size (specifically, the left alphabet size $|\Sigma_U|$) blows up exponentially in $\ell$ and previous results aim to prove NP-hardness of approximating \Clique, $\ell$ are chosen to be small (i.e. $O(\log m)$). On the other hand, we will choose our $\ell$ to be $\Theta_{\varepsilon}(m/r)$ since we would like to only prove a running time lower bound of the form $|\Sigma_U|^{\Omega(r)}$. Interestingly, dispersers for our regime of parameters are easier to construct deterministically and we will sketch the construction in Subsection~\ref{subsec:derandomization}. Note that this construction immediately implies derandomization of our reduction.

The exact dependency of parameters can be found in the claim below, which also states that random subsets will be a disperser for such choice of parameters with high probability. Here and throughout the proof, $k$ and $r$ should be thought of as constants where $k \gg r$; these are the same $k, r$ as the ones in the statement of \Cref{thm:small r}.

\begin{claim} \label{claim:random-disperser}
For any positive integers $m, k, r \in \N$ and any constant $\varepsilon \in (0, 1)$, let $\ell = \max\{m, \lceil 3m/(\varepsilon r)\rceil\}$ and let $I_1, \dots, I_k$ be $\ell$-element subsets of $[m]$ drawn uniformly independently at random. If $\ln k \leq m/r$, then $\cI = \{I_1, \dots, I_k\}$ is an $(m, k, \ell, r, \varepsilon)$-disperser with probability at least $1 - e^{-m}$.
\end{claim}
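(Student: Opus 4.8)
This is a standard probabilistic argument. The plan is to use a union bound over all $\binom{k}{r}$ possible $r$-tuples of subsets, and for each fixed $r$-tuple, bound the probability that its union misses more than $\varepsilon m$ elements of $[m]$.

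\medskip

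\textbf{Proof plan.} Fix any $r$ indices $i_1 < \cdots < i_r$. We want to bound the probability that $|I_{i_1} \cup \cdots \cup I_{i_r}| < (1-\varepsilon)m$, i.e., that the complement $[m] \setminus (I_{i_1} \cup \cdots \cup I_{i_r})$ has size more than $\varepsilon m$. First I would observe that this event is contained in the event that there exists a set $T \subseteq [m]$ of size exactly $\lceil \varepsilon m \rceil$ that is disjoint from all of $I_{i_1}, \dots, I_{i_r}$. For a \emph{fixed} such $T$, and for a single uniformly random $\ell$-element subset $I_{i_j}$, the probability that $I_{i_j} \cap T = \emptyset$ is $\binom{m - |T|}{\ell}/\binom{m}{\ell} \leq (1 - |T|/m)^{\ell} \leq (1-\varepsilon)^{\ell} \leq e^{-\varepsilon \ell}$ (using the elementary estimate $\binom{m-t}{\ell}/\binom{m}{\ell} = \prod_{j=0}^{\ell-1}\frac{m-t-j}{m-j} \leq (1 - t/m)^\ell$). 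Since the $I_{i_j}$ are independent, the probability that $T$ avoids all $r$ of them is at most $e^{-\varepsilon \ell r}$.

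\medskip

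Next I would take a union bound over the choice of $T$ (at most $\binom{m}{\lceil \varepsilon m\rceil} \leq 2^m$ choices) and over the choice of the $r$-tuple (at most $\binom{k}{r} \leq k^r$ choices). This gives
\[
\Pr[\cI \text{ is not an } (m,k,\ell,r,\varepsilon)\text{-disperser}] \;\leq\; k^r \cdot 2^m \cdot e^{-\varepsilon \ell r}.
\]
Now I plug in $\ell \geq 3m/(\varepsilon r)$, so $e^{-\varepsilon \ell r} \leq e^{-3m}$, and use the hypothesis $\ln k \leq m/r$, so $k^r \leq e^{m}$. Hence the bound is at most $e^{m} \cdot 2^m \cdot e^{-3m} = 2^m e^{-2m} \leq e^{-m}$ (since $2^m \leq e^{m}$), which is exactly the claimed failure probability $e^{-m}$.

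\medskip

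\textbf{Expected main obstacle.} The argument is routine; the only thing requiring a little care is making the constants line up — in particular checking that the definition of $\ell$ in the statement (which reads $\ell = \max\{m, \lceil 3m/(\varepsilon r)\rceil\}$, though presumably the intended first term is something like $\lceil m/r \rceil$ or simply ensures $\ell \le m$ so that $\ell$-element subsets make sense) gives enough room, and handling the ceiling in $\lceil \varepsilon m \rceil$ versus $\varepsilon m$ cleanly. I would also double-check the direction of the bound $\binom{m-t}{\ell}/\binom{m}{\ell} \le (1-t/m)^\ell$ and the slack in $2^m \le e^m$ so that the final exponent is genuinely $\le -m$. One subtlety worth a sentence: we should note $\ell \le m$ is required for the subsets to exist, which is why the $\max$ with $m$ (or a clamp) appears; if $3m/(\varepsilon r) > m$, i.e. $r < 3/\varepsilon$, the disperser condition with $r$ subsets is trivially satisfiable by taking $\ell$ close to $m$, so that regime is not the interesting one and can be dispatched quickly.
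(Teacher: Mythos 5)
Your proposal is correct and is essentially the paper's own argument: fix an $r$-tuple, union-bound over a witness set of size roughly $\varepsilon m$ (the paper uses a superset $S$ of the union with $|S|<(1-\varepsilon)m$, you use the complementary avoided set $T$ — an equivalent bookkeeping choice giving the same $2^m$ factor), bound the per-tuple probability by $(1-\varepsilon)^{\ell r}\le e^{-3m}$, and finish with $k^r\le e^m$. The paper likewise dispatches the degenerate case by noting the claim is trivial when $\ell=m$ (the $\max$ in the statement is indeed a typo for a clamp ensuring $\ell\le m$), exactly as you anticipated.
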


\begin{proof}
When $\ell = m$, the statement is obviously true; thus, we assume w.l.o.g. that $\ell = \lceil 3m/(\varepsilon r)\rceil$. Consider any indices $i_1, \dots, i_r$ such that $1 \leq i_1 < \cdots < i_r \leq k$. We will first compute the probability that $|I_{i_1} \cup \cdots \cup I_{i_r}| < (1 - \varepsilon)m$ and then take the union bound over all such $(i_1, \dots, i_r)$'s.

Observe that $|I_{i_1} \cup \cdots \cup I_{i_r}| < (1 - \varepsilon)m$ if and only if there exists a set $S \subseteq [m]$ of size less than $(1 - \varepsilon)m$ such that $I_{i_1}, \dots, I_{i_r} \subseteq S$. For a fixed set $S \subseteq [m]$ of size less than $(1 - \varepsilon)m$, since $I_{i_1}, \dots, I_{i_r}$ are independently drawn random $\ell$-element subsets of $[m]$, we have
\begin{align*} \label{eq:}
\Pr[I_{i_1}, \dots, I_{i_r} \subseteq S]
= \prod_{j \in [r]} \Pr[I_j \subseteq S]
= \left(\frac{\binom{|S|}{\ell}}{\binom{m}{\ell}}\right)^r
\leq \left(\frac{|S|}{m}\right)^{\ell r}
< (1 - \varepsilon)^{\ell r}
\leq e^{-\varepsilon \ell r}
< e^{-3m}.
\end{align*}
Taking the union bound over all such $S$'s, we have
\begin{align*}
\Pr[|I_{i_1} \cup \dots \cup I_{i_r}| < (1 - \varepsilon) m]
< \sum_{S \subseteq [m], |S| < (1 - \varepsilon) m} e^{-3m}
< 2^m \cdot e^{-3m}
< e^{-2m}.
\end{align*}
Finally, taking the union bound over all $(i_1, \dots, i_r)$'s gives us the desired probabilistic bound:
\begin{align*}
\Pr[\cI \text{ is not an } (m, k, \ell, r, \varepsilon)\text{-disperser}] \leq \sum_{1 \leq i_1 < \cdot < i_r \leq k} e^{-2m} \leq k^r \cdot e^{-2m} < e^{-m},
\end{align*}
where the last inequality comes from our assumption that $\ln k \leq m / r$.
\end{proof}

With the definition of dispersers and the above claim ready, we move on to prove \Cref{thm:small r}.

\begin{proof}[Proof of \Cref{thm:small r}]
First, we take a label cover instance 
$\wtGamma=(\wtG=(\wtU,\wtV,\wtE),\Sigma_{\wtU},\Sigma_{\wtV},\wtPi)$
as in \Cref{thm:Gap-ETH restated}.
We may assume that $|\Sigma_{\wtU}|,|\Sigma_{\wtV}|=O(1)$,
and $|\wtU|=\Theta(|\wtV|)$. Moreover, let $m = |\wtU|$ and $n = |\wtV|$; for convenience, we rename the vertices in $\wtU$ and $\wtV$ so that $\wtU = [m]$ and $\wtV = [n]$.
Note that it might be useful for the readers to think
of $\wtGamma$ as a $3$-\SAT instance where $\wtU$ is the 
set of clauses and $\wtV$ is the set of variables.

We recall the parameter $\varepsilon$ from 
\Cref{thm:Gap-ETH restated} and the parameters $k, r$
from the statement of \Cref{thm:small r}.
We introduce a new parameter $\ell = 3 m / (\varepsilon r)$ and assume w.l.o.g. that $\ell$ is an integer.

The new label cover (\MaxCov) instance $\Gamma=(G=(U,V,E),\Sigma_U,\Sigma_V,\Pi)$ is defined as follows.
\begin{itemize} \setlength\itemsep{0em}
\item The right vertices and right alphabet set remain unchanged, i.e., $V = \wtV$ and $\Sigma_V = \Sigma_{\wtV}$.
\item There will be $k$ vertices in $U$ where each vertex is a random set of $\ell$ vertices of $\wtU$. More specifically, we define $U = \{I_1,\ldots, I_k\}$ where each $I_i$ is a random $\ell$-element subsets of $[m]$ drawn independently of each other.
\item The  left alphabet set $\Sigma_U$ is $\Sigma_{\wtU}^\ell$. For each $I \in U$, we view each label $\alpha \in \Sigma_U$ as a tuple $(\alpha_u)_{u \in I} \in (\Sigma_{\wtU})^I$; this is a partial assignment to all vertices $u \in I$ in the original instance $\wtGamma$.
\item We create an edge between $I \in U$ and $v \in V$ in $E$ if and only if there exists $u \in I$ such that $uv \in \wtE$. More formally, $E = \{Iv: I \cap N_{\wtG}(v) \ne \emptyset\}$.
\item Finally, we define the constraint $\Pi_{Iv}$ for each $Iv \in E$. As stated above, we view each $\alpha \in \Sigma_U$ as a partial assignment $(\alpha_u)_{u \in I}$ for $I \subseteq \wtU$. The constraint $\Pi_{Iv}$ then contains all $(\alpha, \beta)$ such that $(\alpha_u, \beta)$ satisfies the constraint $\wtPi_{uv}$ for every $u \in I$ that has an edge to $v$ in $\wtGamma$. More precisely, $\Pi_{Iv} = \{(\alpha, \beta) = ((\alpha_u)_{u \in I}, \beta): \forall u \in I \cap N_{\wtG}(v), (\alpha_u, \beta) \in \wtPi_{uv}\}$.
\end{itemize}

Readers who prefer the $3$-\SAT/CSP viewpoint of label cover may think of each $I_i$ as a collection of clauses in the $3$-\SAT instance that are joined by an operator {\bf AND}, i.e., the assignment must satisfy all the clauses in $I_i$ simultaneously in order to satisfy $I_i$.

We remark that, if $\wtPi$ has the projection property, $\Pi$ also has projection property.

\medskip
\noindent{\bf Completeness.} 
Suppose there is a labeling $(\sigma_{\wtU},\sigma_{\wtV})$ of
$\wtGamma$ that covers all $|\wtU|$ left-vertices.
We take $\sigma_V=\sigma_{\wtV}$ and construct $\sigma_U$ by setting $\sigma_{U}(I)=(\sigma_{\wtU}(u))_{u \in I}$ for each $I \in U$.
Since $(\sigma_{\wtU},\sigma_{\wtV})$ covers all the vertices of $\wtU$,
$(\sigma_U,\sigma_V)$ also covers all the vertices of $U$. Therefore, $\MaxCov(\Gamma)=|U|$.

\medskip
\noindent{\bf Soundness.}
To analyze the soundness of the reduction, first recall Claim~\ref{claim:random-disperser}: $\{I_1, \dots, I_k\}$ is an $(m, k, \ell, r, \varepsilon)$-disperser with high probability. Conditioned on this event happening, we will prove the soundness property, i.e., that if $\MaxCov(\wtGamma) < (1 - \varepsilon)|\wtU|$, then $\MaxCov(\Gamma) < r$.

We will prove this by contrapositive; assume that there is a labeling $(\sigma_U, \sigma_V)$ that covers at least $r$ left vertices $I_{i_1}, \cdots, I_{i_r} \in U$. We construct a labeling $(\sigma_{\wtU}, \sigma_{\wtV})$ as follows. First, $\sigma_{\wtV}$ is simply set to $\sigma_V$. Moreover, for each $u \in I_{i_1} \cup \cdots \cup I_{i_r}$, let $\sigma_{\wtU}(u) = (\sigma_U(I_{i_j}))_u$ where $j \in [r]$ is an index such that $u \in I_{i_j}$; if there are multiple such $j$'s, just pick an arbitrary one. Finally, for $u \in U \setminus (I_{i_1} \cup \cdots \cup I_{i_r})$, we set $\sigma_{\wtU}(u)$ arbitrarily.

We claim that, every $u \in I_{i_1} \cup \cdots \cup I_{i_r}$ is covered by $(\sigma_{\wtU}, \sigma_{\wtV})$ in the original instance $\wtGamma$. To see that this is the case, recall that $\sigma_{\wtU}(u) = (\sigma_U(I_{i_j}))_u$ for some $j \in [r]$ such that $u \in I_{i_j}$. For every $v \in V$, if $uv \in E$, then, from how the constraint $\Pi_{I_{i_j}v}$ is defined, we have $(\sigma_{\wtU}(u), \sigma_{\wtV}(v)) = (\sigma_U(I_{i_j})_u, \sigma_V(v)) \in \wtPi_{uv}$. In other words, $u$ is indeed covered by $(\sigma_{\wtU}, \sigma_{\wtV})$.

Hence, $(\sigma_{\wtU}, \sigma_{\wtV})$ covers at least $|I_{i_1} \cup \cdots \cup I_{i_r}| \geq (1 - \varepsilon)m$, where the inequality comes from the definition of dispersers. As a result, $\MaxCov(\wtGamma) \geq (1 - \varepsilon)|\wtU|$, completing the soundness proof.

\medskip

\noindent{\bf Running Time Lower Bound.}
Our construction gives a \MaxCov instance $\Gamma$ with $|U|=k$ and
$|\Sigma_U|=|\Sigma_{\wtU}|^\ell=2^{\Theta(m/(\varepsilon r))}$, whereas $|V|$ and $|\Sigma_V|$ remain $n$ and $O(1)$ respectively. Assume that Gap-ETH holds and let $\delta_0$ be the constant in the running time lower bound in \Cref{thm:Gap-ETH restated}. Let $\delta$ be any constant such that $0 < \delta < \delta_0 \varepsilon / c$ where $c$ is the constant such that $|\Sigma_U| \leq 2^{cm/(\varepsilon r)}$.

Suppose for the sake of contradiction that, for some $k \geq r \geq \rho$, there is an algorithm that distinguishes whether $\MaxCov(\Gamma) = k$ or $\MaxCov(\Gamma) < r$ in $O_{k, r}(|\Gamma|^{\delta r})$ time. Observe that, in our reduction, $|U|, |V|, |\Sigma_V| = |\Sigma_U|^{o(1)}$. Hence, the running time of the algorithm on input $\Gamma$ is at most $O_{k, r}(|\Sigma_U|^{\delta r(1 + o(1))}) \leq O_{k, r}(|\Sigma_U|^{\delta_0 \varepsilon r / c}) \leq O(2^{\delta_0 m})$ where the first inequality comes from our choice of $\delta$ and the second comes from $|\Sigma_U| \leq 2^{cm/(\varepsilon r)}$. Thanks to the completeness and soundness of the reduction, this algorithm can also distinguish whether $\MaxCov(\wtGamma) = |\wtU|$ or $\MaxCov(\wtGamma) < (1 - \varepsilon)|\wtU|$ in time $O(2^{\delta_0 m})$. From \Cref{thm:Gap-ETH restated}, this is indeed a contradiction.
\end{proof}

\subsubsection{Derandomization} \label{subsec:derandomization}

While the reduction in the proof of \Cref{thm:small r} is a randomized reduction, it can be derandomized quite easily. We sketch the ideas behind the derandomization below.

Notice that the only property we need from the random $\ell$-element subsets $I_1, \dots, I_k$ is that it forms an $(m, k, \ell, r, \varepsilon)$-disperser. Hence, to derandomize the reduction, it suffices to deterministically construct such a disperser in $2^{o(n)}$ time.

To do so, let us first note that Lemma~\ref{claim:random-disperser} implies that an $(m', k, \ell', r, \varepsilon)$-disperser exists where $m' = r \ln k$ and $\ell' = 3m'/(\varepsilon r)$. For convenience, we assume w.l.o.g. that $m', \ell'$ are integers and that $m'$ divides $m$. Since $m'$ is now small, we can find such a disperser by just enumerating over every possible collection of $k$ subsets of $[m']$ each of size $\ell'$ and checking whether it has the desired property; this takes only $(2^{m'})^k(k)^r\poly(m') = 2^{O(rk \log k)}$ time, which is acceptable for us since $r$ and $k$ are both constants. Let the $(m', k, \ell', r, \varepsilon)$-disperser that we find be $\{I'_1, \dots, I'_k\}$. Finally, to get from here to the intended $(m, k, \ell, r, \varepsilon)$-disperser, we only need to view $[m]$ as $[m/m'] \times [m']$ and let $I_1 = [m/m'] \times I'_1, \dots, I_k = [m/m'] \times I'_k$. It is not hard to check that $\{I_1, \dots, I_k\}$ is indeed an $(m, k, \ell, r, \varepsilon)$-disperser, which concludes our sketch.


\subsection{Proof of \Cref{thm:small V}}

The proof proceeds by {\bf compressing the right vertex set $V$} of a
label cover instance from \Cref{thm:Gap-ETH restated} plus amplifying the
hardness gap. The gap amplification step is similar to that in the proof of
\Cref{thm:small r} except that, since here $\MaxCov(\Gamma)$ is not required to be constant in the soundness case, we can simply take all subsets of appropriate sizes instead of random subsets as in the previous proof; this also means that our reduction is deterministic and requires no derandomization.

\begin{proof}[Proof of \Cref{thm:small V}]
First, we take a label cover instance 
$\wtGamma = (\wtG=(\wtU,\wtV,\wtE),\Sigma_{\wtU},\Sigma_{\wtV},\wtPi)$
as in \Cref{thm:Gap-ETH restated}.
We may assume that $|\Sigma_{\wtU}|,|\Sigma_{\wtV}|=O(1)$,
and $|\wtU|=\Theta(|\wtV|)$. For convenience, we assume w.l.o.g. that $\wtU = [m]$ and $\wtV = [n]$.
Again, it might be useful for the readers to think
of $\wtGamma$ as a $3$-\SAT instance where $\wtU$ are the 
set of clauses and $\wtV$ are the set of variables.

Recall the parameter $\varepsilon$ from 
\Cref{thm:Gap-ETH restated} and the parameters 
$q, \gamma$ from \Cref{thm:small V}.
Let $\ell=\ln(1/\gamma)/\varepsilon$. We assume w.l.o.g. that $\ell$ is an integer and that $n$ is divisible by $q$.
The new label cover (\MaxCov) instance $\Gamma=(G=(U,V,E),\Sigma_U,\Sigma_V,\Pi)$ is defined as follows.
\begin{itemize} \setlength\itemsep{0em}
\item First, we partition $\wtV = [n]$ into $q$ parts $J_1,\ldots,J_q$, 
each of size $n/q$. We then let $V = \{J_1, \dots, J_q\}$. In other words, we merge $n/q$ vertices of $\wtV$ into a single vertex in $V$.
\item Let $U$ be $\binom{[m]}{\ell}$, the collection of all $\ell$-element subsets of $[m] = \wtU$.
\item The left alphabet set $\Sigma_U$ is $\Sigma_{\wtU}^\ell$. For each $I \in U$, we view each label $\alpha \in \Sigma_U$ as a tuple $(\alpha_u)_{u \in I} \in (\Sigma_{\wtU})^I$; this is a partial assignment to all vertices $u \in I$ in the original instance $\wtGamma$.

\item Our graph $G$ is simply a complete bipartite graph, i.e., for every $I \in U$ and $J \in V$, $IJ \in E(G)$.

\item The label set of $V$ is $\Sigma_V=\Sigma_{\wtV}^{n/q}$,
and the label set of $U$ is $\Sigma_U=\Sigma_{\wtU}^\ell$. For each $I \in U$, we view each label $\alpha \in \Sigma_U$ as a tuple $(\alpha_u)_{u \in I} \in (\Sigma_{\wtU})^I$; this is simply a partial assignment to all vertices $u \in I$ in the original instance $\wtGamma$. Similarly, for each $J \in V$, we view each label $\beta \in \Sigma_V$ as $(\beta_v)_{v \in J} \in (\Sigma_{\wtV})^J$.

\item Finally, we define $\Pi_{IJ}$ for each $IJ \in E$. The constraint $\Pi_{IJ}$ contains all $(\alpha, \beta)$ such that $(\alpha_u, \beta_v)$ satisfies the constraint $\wtPi_{uv}$ for every $u \in I, v \in J$ such that $uv \in \wtE$. More precisely, $\Pi_{IJ} = \{(\alpha, \beta) = ((\alpha_u)_{u \in I}, (\beta_v)_{v \in J}): \forall u \in I, v \in J \text{ such that } uv \in \wtE, (\alpha_u, \beta_v) \in \wtPi_{uv}\}$.
\end{itemize}

We remark that $\Pi$ may not have the projection property even when $\wtPi$ has the property.

\medskip
\noindent{\bf Completeness.} 
Suppose that there is a labeling $(\sigma_{\wtU},\sigma_{\wtV})$ of
$\wtGamma$ that covers all $|\wtU|$ left-vertices.
We construct $(\sigma_U, \sigma_V)$ by setting 
$\sigma_U(I) = (\sigma_{\wtU}(u))_{u \in I}$ for each $I \in U$
and $\sigma_V(J) = (\sigma_{\wtV}(v))_{v \in J}$ for each $J \in V$.  
It is easy to see that $(\sigma_U,\sigma_V)$ covers all the vertices of $U$. Therefore, $\MaxCov(\Gamma)=|U|$.

\medskip
\noindent{\bf Soundness.}
Suppose that $\MaxCov(\wtGamma) < (1-\varepsilon)|\wtU|$. Consider any labeling $(\sigma_U,\sigma_V)$ of $\Gamma$; we will show that $(\sigma_U, \sigma_V)$ covers less than $\gamma|U|$ left-vertices.

Let $I_1, \dots, I_t \in U$ be the vertices covered by $(\sigma_U, \sigma_V)$. Analogous to the proof of \Cref{thm:small r}, we define a labeling $(\sigma_{\wtU}, \sigma_{\wtV})$ as follows. First, $\sigma_{\wtV}$ is naturally defined from $\sigma_V$ by $\sigma_{\wtV} = \sigma_V(J)_v$ where $J$ is the partition that contains $v$. Moreover, for each $u \in I_{i_1} \cup \cdots \cup I_{i_r}$, let $\sigma_{\wtU}(u) = (\sigma_U(I_{i_j}))_u$ where $j \in [r]$ is an index such that $u \in I_{i_j}$; for $u \in U \setminus (I_{i_1} \cup \cdots \cup I_{i_r})$, we set $\sigma_{\wtU}(u)$ arbitrarily.

Similar to the proof of \Cref{thm:small r}, it is not hard to see that every vertex in $I_1 \cup \cdots \cup I_t$ is covered by $(\sigma_{\wtU}, \sigma_{\wtV})$ in $\wtGamma$. Since $\MaxCov(\wtGamma) < (1-\varepsilon)|\wtU|$, we can conclude that $|I_1 \cup \cdots \cup I_t| < (1 - \varepsilon)|\wtU|$. Since each $I_i$ is simply an $\ell$-size subset of $I_1 \cup \cdots \cup I_t$, we can conclude that
\begin{align*}
t < \binom{(1 - \varepsilon)|\wtU|}{\ell} \leq (1 - \varepsilon)^{\ell}\binom{|\wtU|}{\ell} = (1 - \varepsilon)^{\ell}|U| \leq e^{-\varepsilon \ell} |U| = \gamma|U|.
\end{align*}
Hence, $(\sigma_U, \sigma_V)$ covers less than $\gamma|U|$ left-vertices as desired.

\medskip

\noindent{\bf Running Time Lower Bound.}
Our construction gives a \MaxCov instance $\Gamma$ with $|V| = q$ and
$|\Sigma_V|=|\Sigma_{\wtV}|^{n/q}=2^{\Theta(n/q)}$; note also that $|U| = m^\ell$ and $|\Sigma_U| = |\Sigma_{\wtU}|^{\ell} = (1/\gamma)^{O(1)}$. Assume that Gap-ETH holds and let $\delta_0$ be the constant from \Cref{thm:Gap-ETH restated}. Moreover, let $\delta$ be any positive constant such that $\delta < \delta_0 / c$ where $c$ is the constant such that $|\Sigma_V| \leq 2^{cm/q}$.

Suppose for the sake of contradiction that, for some $q \geq \rho$ and $1 \geq \gamma > 0$, there is an algorithm that distinguishes whether $\MaxCov(\Gamma) = |U|$ or $\MaxCov(\Gamma) < \gamma|U|$ in $O_{q, \gamma}(|\Gamma|^{\delta q})$ time. Observe that, in our reduction, $|U|, |V|, |\Sigma_U| = |\Sigma_V|^{o(1)}$. Hence, the running time of the algorithm on input $\Gamma$ is $O_{q, \gamma}(|\Sigma_V|^{\delta q(1 + o(1))}) \leq O_{q, \gamma}(|\Sigma_V|^{\delta_0 q / c}) \leq O(2^{\delta_0 m})$ where the first inequality comes from our choice of $\delta$ and the second comes from $|\Sigma_V| \leq 2^{cm/q}$. Thanks to the completeness and soundness of the reduction, this algorithm can also distinguish whether $\MaxCov(\wtGamma) = |\wtU|$ or $\MaxCov(\wtGamma) < (1 - \varepsilon)|\wtU|$ in time $O(2^{\delta_0 m})$. From \Cref{thm:Gap-ETH restated}, this is a contradiction.
\end{proof}


\subsection{Proof of \Cref{thm:MinLab}}

We conclude this section with the proof of \Cref{thm:MinLab}. The proof proceeds simply by showing that, if an algorithm can distinguish between the two cases in the statement of \Cref{thm:MinLab}, it can also distinguish between the two cases in \Cref{thm:small V} (with an appropriate value of $\gamma$).

\begin{proof}[Proof of \Cref{thm:MinLab}]
Consider the label cover instance $\Gamma = (G = (U, V, E), \Sigma_U, \Sigma_V, \Pi)$ given by~\Cref{thm:small V} when $\gamma = (r/q)^{-q}$. Let us assume w.l.o.g. that there is no isolated vertex in $G$.

\medskip
\noindent{\bf Completeness.} If $\MaxCov(\Gamma) = |U|$, then there is a labeling $\sigma_U: U \to \Sigma_U$ and $\sigma_V: V \to \Sigma_V$ that covers every edge; this also induces a multi-labeling that covers every edge. Hence, $\MinLab(\Gamma) = |V|$.

\medskip
\noindent{\bf Soundness.}
We will prove by contrapositive; suppose that $\MinLab(\Gamma) \leq r$. This implies that there exists a multi-labeling $\sigma_U: U \to \Sigma_U$ and $\sigma_V: V \to 2^{\Sigma_V}$ such that $\sum_{v \in V} |\sigma_V(v)| \leq r$ and every vertex is covered. Since there is no isolated vertex in $G$, $\sigma_V(v) \ne \emptyset$ for all $v \in V$.

Consider $\sigma_V^{\rand}: V \to \Sigma_V$ sampled randomly by, for each $v \in V$, independently pick a random element of $\sigma_V(v)$ and let $\sigma_V^{\rand}(v)$ be this element. Let us consider the expected number of $u \in U$ that are covered by the labeling $(\sigma_U, \sigma_V^{\rand})$. From linearity of expectation, we can write this as
\begin{align*}
\E_{\sigma_V^{\rand}} |\{u \in U \mid (\sigma_U, \sigma_V^{\rand}) \text{ covers } u\}|
&= \sum_{u \in U} \Pr_{\sigma_V^{\rand}}\left[(\sigma_U, \sigma_V^{\rand}) \text{ covers } u\right] \\
&= \sum_{u \in U} \prod_{v \in N(u)} \Pr\left[(\sigma_U(u), \sigma_V^{\rand}(v)) \in \Pi_{uv}\right] \\
&\geq \sum_{u \in U} \prod_{v \in N(u)} |\sigma_V(v)|^{-1} \\
&\geq \sum_{u \in U} \prod_{v \in V} |\sigma_V(v)|^{-1} \\
(\text{From AM-GM inequality}) &\geq \sum_{u \in U} \left(\frac{1}{q} \sum_{v \in V} |\sigma_V(v)|\right)^{-q} \\
&\geq \sum_{u \in U} |U| (r/q)^{-q} \\
&= \gamma |U|.
\end{align*}
where the first inequality comes from the fact that there exists $\beta \in \sigma_V(v)$ such that $(\sigma_U(u), \beta) \in \Pi_{uv}$. This implies that $\MaxCov(\Gamma) \geq \gamma|U|$, which concludes our proof.
\end{proof}

\section{Hardness for Combinatorial Problems}
\label{sec:hardness-combopt}

\subsection{Maximum Clique}

Recall that, for any graph $G$, $\Clique(G)$ denotes the maximum size of any clique in $G$.  
Observe that we can check if there is a clique of size $r$ by 
checking if any subset of $r$ vertices forms a clique, 
and there are ${|V(G)| \choose r}=O(|V(G)|^r)$ possible such subsets. 
We show that this is essentially the best we can do
even when we are given a promise that a clique of size 
$q \gg r$ exists:

\begin{theorem}\label{thm:clique}
	Assuming Gap-ETH, there exist constants $\delta, r_0 > 0$ such that, for any positive integers $q\geq r\geq r_0$, no algorithm can take a graph $G$ and distinguish between the following cases in $O_{q, r}(|V(G)|^{\delta r})$ time: 
	\begin{itemize}
		\item 	$\Clique(G)\geq q$ and  
		\item $\Clique(G)< r$.
	\end{itemize}
\end{theorem}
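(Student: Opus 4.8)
The plan is to combine \Cref{thm:small r} with the classical FGLSS reduction~\cite{FGLSS96} from \MaxCov to \Clique, which enlarges the instance only polynomially and, in the projection-property regime, preserves the optimum \emph{exactly}. First I would invoke \Cref{thm:small r} to fix constants $\delta', \rho > 0$ such that, for all integers $k \ge r \ge \rho$, no $O_{k,r}(|\Gamma|^{\delta' r})$-time algorithm distinguishes a label cover instance $\Gamma$ with $|U| = k$ and $\MaxCov(\Gamma) = k$ from one with $\MaxCov(\Gamma) < r$; here I would crucially use that this hardness holds when $\Pi$ has the projection property, writing $\pi_{uv}: \Sigma_U \to \Sigma_V$ for the (unique) map with $(\alpha, \pi_{uv}(\alpha)) \in \Pi_{uv}$ for every edge $uv \in E$ and every $\alpha \in \Sigma_U$.

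Next I would set up the reduction. Given such a $\Gamma = (G = (U,V,E), \Sigma_U, \Sigma_V, \Pi)$, build a graph $H$ with vertex set $U \times \Sigma_U$, putting an edge between $(u, \alpha)$ and $(u', \alpha')$ iff $u \ne u'$ and $\pi_{uv}(\alpha) = \pi_{u'v}(\alpha')$ for every common neighbor $v \in N_G(u) \cap N_G(u')$. Any clique in $H$ uses pairwise-distinct first coordinates $u_1, \dots, u_t$, and its pairwise-consistency condition lets one read off a labeling $(\sigma_U, \sigma_V)$ of $\Gamma$ (put $\sigma_U(u_i) = \alpha_i$, put $\sigma_V(v) = \pi_{u_i v}(\alpha_i)$ for any $u_i \in N_G(v)$, and set both arbitrarily elsewhere) that covers $u_1, \dots, u_t$; conversely a labeling covering vertices $u_1, \dots, u_t$ yields the clique $\{(u_i, \sigma_U(u_i))\}_i$ in $H$, since the projection property forces $\sigma_V(v) = \pi_{u_i v}(\sigma_U(u_i))$ whenever $u_i$ is covered and $v \in N_G(u_i)$. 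Hence $\Clique(H) = \MaxCov(\Gamma)$. Moreover $|V(H)| = |U| \cdot |\Sigma_U| \le |\Gamma|$, and $H$ is constructible from $\Gamma$ in time $\poly(|\Gamma|) \le |\Gamma|^{c_0}$ for some absolute constant $c_0$ (independent of the parameters).

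Finally I would assemble the contradiction. Put $\delta := \delta'$ and $r_0 := \max\{\rho, \lceil c_0/\delta' \rceil\}$, and fix arbitrary integers $q \ge r \ge r_0$. Suppose some algorithm distinguishes $\Clique(G) \ge q$ from $\Clique(G) < r$ in $O_{q,r}(|V(G)|^{\delta r})$ time. Running the reduction with $k := q$ and then this algorithm on $H$ distinguishes $\MaxCov(\Gamma) = k$ from $\MaxCov(\Gamma) < r$, because $\Clique(H) = \MaxCov(\Gamma)$ gives $\MaxCov(\Gamma) = k \Rightarrow \Clique(H) = k \ge q$ and $\MaxCov(\Gamma) < r \Rightarrow \Clique(H) < r$. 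The total running time is $|\Gamma|^{c_0} + O_{k,r}(|V(H)|^{\delta' r}) \le |\Gamma|^{c_0} + O_{k,r}(|\Gamma|^{\delta' r}) \le O_{k,r}(|\Gamma|^{\delta' r})$, where the last bound uses $\delta' r \ge \delta' r_0 \ge c_0$. Since $k = q \ge r \ge r_0 \ge \rho$, this contradicts \Cref{thm:small r}, proving the theorem with these $\delta$ and $r_0$.

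I do not expect a genuine obstacle here: every step is either a direct application of \Cref{thm:small r} or a size/optimum bookkeeping check on the FGLSS graph. The only points needing a little care are (i) verifying that the FGLSS reduction preserves the optimum exactly rather than merely approximately, which is why it is convenient to invoke the projection-property version of \Cref{thm:small r}; and (ii) choosing $r_0$ large enough that the fixed polynomial cost of building $H$ is absorbed into the $|\Gamma|^{\delta r}$ budget even at the smallest admissible $r = r_0$.
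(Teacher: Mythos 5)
Your proposal is correct and follows essentially the same route as the paper: invoke \Cref{thm:small r} with the projection property, apply the FGLSS reduction (which the paper cites as \Cref{thm:fglss} with $\Clique(H_\Gamma)=\MaxCov(\Gamma)$ and $|V(H_\Gamma)|=|U||\Sigma_U|$), and choose $r_0$ large enough to absorb the polynomial reduction cost into the $O_{q,r}(|\Gamma|^{\delta r})$ budget. The only difference is that you verify the equality $\Clique(H)=\MaxCov(\Gamma)$ from scratch (and state the edge condition slightly more carefully), whereas the paper treats it as a black box.
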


%

The above theorem simply follows from plugging the FGLSS reduction below to \Cref{thm:small r}.

\begin{theorem}[\cite{FGLSS96}] \label{thm:fglss}
	Given a label cover instance $\Gamma=(G = (U, V, E), \Sigma_U, \Sigma_V, \Pi)$ with projection property as in \Cref{sec:label cover}, there is a reduction that produces a graph $H_{\Gamma}$ such that $|V(H_{\Gamma})| = |U||\Sigma_U|$ and $\Clique(H_{\Gamma})  = \MaxCov(\Gamma)$.  The reduction takes $O(|V(H_{\Gamma}))|^2|V|)$ time. 
\end{theorem}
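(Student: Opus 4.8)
The plan is to implement the classical FGLSS construction~\cite{FGLSS96} directly for the vertex-covering variant \MaxCov of label cover. Given $\Gamma = (G = (U, V, E), \Sigma_U, \Sigma_V, \Pi)$ with the projection property, for each edge $uv \in E$ and each $\alpha \in \Sigma_U$ let $\pi_{uv}(\alpha)$ denote the unique $\beta \in \Sigma_V$ with $(\alpha, \beta) \in \Pi_{uv}$, which exists by the projection property. I would define $H_\Gamma$ on vertex set $V(H_\Gamma) = \{(u, \alpha) : u \in U, \alpha \in \Sigma_U\}$, so $|V(H_\Gamma)| = |U||\Sigma_U|$, and place an edge between two distinct vertices $(u, \alpha)$ and $(u', \alpha')$ precisely when $u \ne u'$ and $\pi_{uv}(\alpha) = \pi_{u'v}(\alpha')$ for every common neighbour $v \in N_G(u) \cap N_G(u')$. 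Crucially, vertices sharing the same first coordinate are left non-adjacent, so any clique of $H_\Gamma$ uses at most one label per vertex of $U$.

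For the inequality $\Clique(H_\Gamma) \ge \MaxCov(\Gamma)$, I would take a labeling $(\sigma_U, \sigma_V)$ covering a set $S \subseteq U$ with $|S| = \MaxCov(\Gamma)$ and consider $C = \{(u, \sigma_U(u)) : u \in S\}$. For distinct $u, u' \in S$ and any $v \in N_G(u) \cap N_G(u')$, the fact that both $u$ and $u'$ are covered gives $(\sigma_U(u), \sigma_V(v)) \in \Pi_{uv}$ and $(\sigma_U(u'), \sigma_V(v)) \in \Pi_{u'v}$, so uniqueness yields $\pi_{uv}(\sigma_U(u)) = \sigma_V(v) = \pi_{u'v}(\sigma_U(u'))$; hence $C$ is a clique of size $|S|$.

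For the reverse inequality $\Clique(H_\Gamma) \le \MaxCov(\Gamma)$, I would take a clique $C = \{(u_1, \alpha_1), \dots, (u_t, \alpha_t)\}$, which by construction has $u_1, \dots, u_t$ pairwise distinct, and set $S = \{u_1, \dots, u_t\}$, $\sigma_U(u_i) = \alpha_i$ (arbitrary elsewhere). The main point is to build $\sigma_V$ consistently: for $v \in V$ having some neighbour $u_i \in S$, set $\sigma_V(v) = \pi_{u_i v}(\alpha_i)$; this is well-defined because for any other $u_j \in S$ with $v \in N_G(u_j)$, the edge of $C$ between $(u_i, \alpha_i)$ and $(u_j, \alpha_j)$ forces $\pi_{u_i v}(\alpha_i) = \pi_{u_j v}(\alpha_j)$. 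For $v$ with no neighbour in $S$, set $\sigma_V(v)$ arbitrarily. Then for every $u_i \in S$ and every $v \in N_G(u_i)$ we have $(\sigma_U(u_i), \sigma_V(v)) = (\alpha_i, \pi_{u_i v}(\alpha_i)) \in \Pi_{u_i v}$, so $(\sigma_U, \sigma_V)$ covers all of $S$, giving $\MaxCov(\Gamma) \ge t = |C|$. Combining both directions gives $\Clique(H_\Gamma) = \MaxCov(\Gamma)$.

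For the running time, $H_\Gamma$ has $|V(H_\Gamma)| = |U||\Sigma_U|$ vertices, and testing adjacency of a fixed pair amounts to scanning the common neighbourhood in $V$ and comparing projections, i.e.\ $O(|V|)$ work (treating a projection lookup as $O(1)$), so the whole graph is produced in $O(|V(H_\Gamma)|^2 |V|)$ time. I do not anticipate a real obstacle: the only points requiring care are (i) omitting edges between vertices with equal first coordinate, so that a clique encodes a \emph{partial} labeling of $U$ with no over-count, and (ii) checking that the projection property is exactly what makes the extracted $\sigma_V$ well-defined on overlapping neighbourhoods — which is why the hypothesis of \Cref{thm:fglss} demands it. Isolated vertices of $U$ (if present) cause no issue, since such a $u$ is vacuously covered by every labeling and $(u, \cdot)$ is adjacent to every vertex with a different first coordinate, consistent with the count.
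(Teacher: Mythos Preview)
Your proposal is correct and follows essentially the same approach as the paper: the paper only sketches the construction (vertex set $U \times \Sigma_U$, edges determined by agreement of projections on common neighbours) without spelling out the two-direction argument or the running-time count, and you have filled in exactly those details in the standard way. The only cosmetic difference is that the paper's edge description leaves the quantifier over common neighbours $v$ and the requirement $u \ne u'$ implicit, whereas you state them explicitly, which is the right thing to do.
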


For clarity, we would like to note that, while the original graph defined in~\cite{FGLSS96} is for multi-prover interactive proof, analogous graphs can be constructed for CSPs and label cover instances as well. In particular, in our case, the graph can be defined as follows:
\begin{itemize}
\item The vertex set $V(H_\Gamma)$ is simply $U \times \Sigma_U$.
\item 
There is an edge between two vertices $(u, \alpha), (u', \alpha') \in V(H_\Gamma)$ if and only if, $\Pi_{uv}(\alpha) = \Pi_{u'v}(\alpha')$ (i.e., recall that we have a projection constraint, so we can represent the constraint $\Pi_{uv}$ as a function $\Pi_{uv}: \Sigma_U \rightarrow \Sigma_V$.) 
\end{itemize}

\begin{proof}[Proof of \Cref{thm:clique}] 
Assume that Gap-ETH holds and let $\delta, \rho$ be the constants from \Cref{thm:small r}. Let $r_0 = \max\{\rho, 2/\delta\}$. Suppose for the sake of contradiction that, for some $q \geq r \geq r_0$, there is an algorithm $\A$ that distinguishes between $\Clique(G) \geq q$ and $\Clique(G) < r$ in $O_{q, r}(|V(G)|^{\delta r})$ time.

Given a label cover instance $\Gamma$ with projection property, we can use $\A$ to distinguish whether $\MaxCov(\Gamma) \geq q$ or $\MaxCov(\Gamma) < r$ as follows. First, we run the FGLSS reduction to produce a graph $H_\Gamma$ and we then use $\A$ to decide whether $\Clique(H_\Gamma) \geq q$ or $\Clique(H_\Gamma) < r$. From $\Clique(H_{\Gamma})  = \MaxCov(\Gamma)$, this indeed correctly distinguishes between $\MaxCov(\Gamma) \geq q$ and $\MaxCov(\Gamma) < r$; moreover, the running time of the algorithm is $O_{q, r}(|V(H_\Gamma)|^{\delta r}) + O(|V(H_{\Gamma}))|^2|V|) \leq O_{q, r}(|\Gamma|^{\delta r})$ where the term $O(|V(H_{\Gamma}))|^2|V|)$ comes from the running time used to produce $H_\Gamma$. From \Cref{thm:small r}, this is a contradiction, which concludes our proof.
\end{proof}

As a corollary of \Cref{thm:clique}, we immediately arrive at FPT inapproximability of Maximum Independent Set and Maximum Clique.

\begin{corollary}[Clique is inherently enumerative]
Assuming Gap-ETH, Maximum Clique and Maximum Independent Set are inherently enumerative and thus FPT inapproximable.
\end{corollary}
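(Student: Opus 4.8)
The plan is to observe that \Cref{thm:clique} already \emph{is} the assertion that \Clique is inherently enumerative, and then to carry this over to \MIS via the trivial complementation reduction, after which total FPT inapproximability follows from the general machinery of \Cref{sec:inherent}.

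For \Clique: instantiate the definition of inherently enumerative with the problem $\Pi = \Clique$, so that an instance $I$ is a graph $G$ with $|I| = |V(G)|$ and $\opt_\Pi(I) = \Clique(G)$. Taking $\delta$ and $r_0$ to be exactly the constants produced by \Cref{thm:clique}, the conclusion of that theorem --- that for every pair of integers $q \geq r \geq r_0$ no algorithm can distinguish $\Clique(G) \geq q$ from $\Clique(G) < r$ in time $O_{q,r}(|V(G)|^{\delta r})$ --- is verbatim the definition of inherently enumerative. Here one only checks that the $O_{q,r}(\cdot)$ notation in \Cref{thm:clique} and in \Cref{def:weakine} (and the definition of inherently enumerative) hide the same $q,r$-dependent multiplicative factors, so nothing is lost in the translation.

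For \MIS: recall the standard fact that $S \subseteq V(G)$ is an independent set of $G$ if and only if $S$ is a clique of the complement graph $\bG$, whence $\MIS(G) = \Clique(\bG)$, while $|V(\bG)| = |V(G)|$ and $\bG$ is computable from $G$ in $O(|V(G)|^2)$ time. Hence, if some algorithm distinguished $\MIS(G) \geq q$ from $\MIS(G) < r$ in $O_{q,r}(|V(G)|^{\delta r})$ time for all large enough $q \geq r$, then running it on $\bG$ --- the $O(|V(G)|^2)$ cost of building $\bG$ being absorbed into $O_{q,r}(|V(G)|^{\delta r})$ once $r \geq 2/\delta$ --- would distinguish $\Clique(\bG) \geq q$ from $\Clique(\bG) < r$ in the same time, contradicting \Cref{thm:clique}. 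Thus \MIS is inherently enumerative, with the same $\delta$ and a possibly enlarged $r_0$. (Equivalently, complementation is an $(\mathrm{id},\mathrm{id})$-FPT gap reduction from \Clique to \MIS, so one could instead invoke \Cref{prop:gapred-enum}; I prefer the direct argument because it preserves the \emph{strong} form of the property, matching the statement.)

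Finally, for the ``thus FPT inapproximable'' part: an inherently enumerative problem is in particular $\beta$-weakly inherently enumerative with $\beta(r) = \delta r = \omega(1)$, so by the proposition stating that every weakly inherently enumerative problem is totally FPT inapproximable, both \Clique and \MIS are totally FPT inapproximable; combined with \Cref{prop:gapvapprox-max} this rules out $f(\opt)$-FPT-approximation for every computable $f$ with $f = o(\opt)$. There is essentially no obstacle here: the substance of the corollary is entirely contained in \Cref{thm:clique}, and all that remains is bookkeeping --- matching the definitional quantifiers and absorbing the polynomial cost of complementation.
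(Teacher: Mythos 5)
Your proposal is correct and matches the paper's (implicit) argument: the paper treats the corollary as immediate from \Cref{thm:clique}, which is indeed verbatim the definition of inherently enumerative for \Clique, with \MIS handled by complementation and the ``thus FPT inapproximable'' part following from the proposition that weakly inherently enumerative problems are totally FPT inapproximable. Your additional bookkeeping (absorbing the $O(|V(G)|^2)$ complementation cost once $r \geq 2/\delta$) is a correct and harmless elaboration of what the paper leaves unstated.
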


\subsection{Set Cover, Dominating Set, and Hitting Set} \label{subsec:domset}
For convenience, we will be working with the Set Cover problem, which is computationally equivalent to Dominating Set (see Appendix~\ref{app:trivial-eq}).

Let $\uset$ be a ground set (or a universe).
A set system $\sset$ over $\uset$ is a collection of subsets $\sset = \{S_1,\ldots, S_m\}$  where $S_i \subseteq \uset$ for all $i \in [m]$. 
We say that $\sset' \subseteq \sset$ is a feasible set cover of $(\uset, \sset)$ if $\bigcup_{X \in \sset'} X = \uset$. 
In the Set Cover problem ($\SetCov$), we are given such a set system $(\uset, \sset)$ and we are interested in finding a set cover $\sset'$ with minimum cardinality $|\sset'|$.  
Let $\SetCov(\uset, \sset)$ denote the value of the optimal set cover for $(\uset, \sset)$. 

Note that for any set cover instance  $(\uset, \sset)$, checking whether there is a set cover of size at most  $q$ can be done in $\Ostar(|\sset|^q)$ time by enumerating all ${|\sset| \choose q}$ subsets of $\sset$ of size $q$. We show that this is more or less the best we can do: Even when the algorithm is promised the existence of a set cover of size $q$ (for some constant $q$), it cannot find a set cover of size $f(q)$ for any computable function $f$ in time $O_q(|\sset| |\uset|)^{\delta q}$ for some constant $\delta > 0$ independent of $q$ and $f$.

\subsubsection{Results} 
Our main technical contribution in this section is summarized in the following theorem: 

\begin{theorem}
\label{thm:setcov-reduction}
There is a reduction that on input $\Gamma= (G=(U,V,E),\Sigma_U, \Sigma_V, \Pi)$ of $\MinLab$ instance, produces a set cover instance $(\uset,\sset)$ such that 

\begin{itemize}

\item $\MinLab(\Gamma) = \SetCov(\uset,\sset)$  

\item $|\uset| = |U| |V|^{|\Sigma_U|}$ and $|\sset| = |V| |\Sigma_V|$ 

\item The reductions runs in time $poly(|\uset|, |\sset|)$ 
\end{itemize}

\end{theorem}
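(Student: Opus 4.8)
I would prove \Cref{thm:setcov-reduction} by the classical trick of composing the \MinLab instance with a \emph{partition system}, exactly as in the Lund--Yannakakis and Feige reductions for \SetCov --- but, crucially, keeping the reduction \emph{exact}, i.e. $\SetCov(\uset,\sset)=\MinLab(\Gamma)$ with no loss. Since the hardness gap we need is already present in the \MinLab instance produced by \Cref{thm:MinLab}, the partition system here needs no quantitative soundness and can be completely elementary: a ``product'' gadget built out of functions into a neighbourhood. We may assume $G$ has no isolated left vertices (such a vertex is covered vacuously and can be deleted), and for the instances of \Cref{thm:small V}/\Cref{thm:MinLab} to which this reduction is ultimately applied $G$ is a complete bipartite graph, so one may simply read $N_G(u)=V$ throughout; the general case costs only a cosmetic padding step.

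\textbf{The construction.} For each $u\in U$ we put into the universe a private block $\uset_u$, tagged by $u$, consisting of all functions $g:\Sigma_U\to N_G(u)$; when $G$ is complete bipartite this gives $|\uset_u|=|V|^{|\Sigma_U|}$ and hence $|\uset|=\sum_{u}|\uset_u|=|U|\,|V|^{|\Sigma_U|}$ on the nose (in general, pad each block to this size with dummy elements that are placed in \emph{every} set). The sets are indexed by $V\times\Sigma_V$: for $v\in V$, $\beta\in\Sigma_V$ define
\[
 S_{v,\beta}\;:=\;\bigl\{\,(u,g)\;:\;v\in N_G(u)\ \text{ and }\ \exists\,\alpha\in\Sigma_U\ \text{ with }\ g(\alpha)=v\ \text{ and }\ (\alpha,\beta)\in\Pi_{uv}\,\bigr\}.
\]
Thus $|\sset|=|V|\,|\Sigma_V|$, and both $\uset$ and $\sset$ are computable in $\poly(|\uset|,|\sset|)$ time: for each of the $|\sset|$ pairs $(v,\beta)$ and each of the at most $|\uset|$ pairs $(u,g)$, deciding membership is an $O(|\Sigma_U|)$ lookup in a precomputed table for $\Pi_{uv}$.

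\textbf{Correctness.} The heart of the argument is the following ``partition'' property. Fix $T\subseteq V\times\Sigma_V$, and for a non-isolated $u$ and $v\in N_G(u)$ let $G^u_v:=\{\alpha\in\Sigma_U:\exists\beta,\ (v,\beta)\in T\ \text{and}\ (\alpha,\beta)\in\Pi_{uv}\}$ be the set of labels of $u$ that are ``consistent at $v$ under $T$''. Then $\{S_{v,\beta}:(v,\beta)\in T\}$ covers $\uset_u$ if and only if $\bigcap_{v\in N_G(u)}G^u_v\neq\emptyset$. For the ``if'' direction: if $\alpha^\star$ lies in every $G^u_v$, the parts $\{g:g(\alpha^\star)=v\}$, for $v$ ranging over $N_G(u)$, tile $\uset_u$, and each such part is contained in some $S_{v,\beta}$ with $(v,\beta)\in T$ (witnessed by the index $\alpha=\alpha^\star$). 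For ``only if'' we argue the contrapositive: if $\bigcap_v G^u_v=\emptyset$, pick for each $\alpha$ a vertex $v_\alpha\in N_G(u)$ with $\alpha\notin G^u_{v_\alpha}$ and set $g(\alpha):=v_\alpha$; then any $(v,\beta)\in T$ with $(u,g)\in S_{v,\beta}$ would force $v=g(\alpha)=v_\alpha$ and $\alpha\in G^u_v$ for some $\alpha$, contradicting the choice of $v_\alpha$, so $(u,g)\in\uset_u$ is uncovered. Given this property, the equality $\SetCov(\uset,\sset)=\MinLab(\Gamma)$ follows in both directions: from a cover $\{S_{v,\beta}:(v,\beta)\in T\}$ one picks $\alpha^\star_u\in\bigcap_v G^u_v$ for every $u$ and reads off the multi-labeling $\sigma_U(u):=\alpha^\star_u$, $\hat\sigma_V(v):=\{\beta:(v,\beta)\in T\}$, which covers every edge and has cost $\sum_v|\hat\sigma_V(v)|=|T|$; conversely, from an edge-covering multi-labeling $(\sigma_U,\hat\sigma_V)$ put $T:=\{(v,\beta):\beta\in\hat\sigma_V(v)\}$, observe that edge-coverage gives $\sigma_U(u)\in G^u_v$ for all $v\in N_G(u)$, and conclude via the partition property that $\{S_{v,\beta}:(v,\beta)\in T\}$ covers every block $\uset_u$, hence all of $\uset$, at cost $|T|=\sum_v|\hat\sigma_V(v)|$.

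\textbf{Main obstacle.} There is no deep obstacle: this is the textbook CSP-meets-partition-system construction, and the only points needing care are (i) choosing the ground elements of each block to be \emph{functions} $\Sigma_U\to N_G(u)$ (rather than, say, labels of $u$), which is precisely what makes $|\uset|$ equal to $|U|\,|V|^{|\Sigma_U|}$ --- and the cleanest way to see the ``equal'' rather than ``at most'' is to note that the reduction is only ever applied to the complete-bipartite \MinLab instances of \Cref{thm:MinLab} --- and (ii) the diagonalisation in the ``only if'' half of the partition property, i.e. verifying that whenever no single label of $u$ is simultaneously consistent at all of $u$'s neighbours, the coordinates $v_\alpha$ can be assembled into an uncovered element $g$. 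Everything else, including the size and running-time accounting, is routine.
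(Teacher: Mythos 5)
Your proposal is correct and is essentially the paper's own proof: the blocks of functions $g:\Sigma_U\to N_G(u)$ are exactly the paper's per-vertex hypercube set systems $H(N_G(u),\Sigma_U)$, your sets $S_{v,\beta}$ coincide with the paper's $S_{v,b}=\bigcup_{u\in N_G(v),\,(a,b)\in\Pi_{uv}}S^u_{v,a}$, and your ``partition property'' (with its diagonalisation for the only-if direction) is precisely the paper's Proposition on covering the hypercube. Your remark that the exact count $|\uset|=|U|\,|V|^{|\Sigma_U|}$ really uses that the \MinLab instances from \Cref{thm:MinLab} are complete bipartite is a fair (and correct) clarification of a point the paper glosses over.
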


We defer the proof of this theorem to \Cref{subsec:domset}. For now, let us demonstrate that, by combining \Cref{thm:setcov-reduction} and \Cref{thm:small V}, we can derive hardness of approximation of \SetCov:

\begin{theorem}
		\label{thm: set cover hardness} 
	Assuming Gap-ETH, there exist universal constants $\delta, q_0 > 0$ such that, for any positive integers $r \geq q \geq q_0$,  no algorithm can take a set cover instance $(\uset, \sset)$, and distinguish between the following cases in $O_{q, r}((|\sset| |\uset|)^{\delta q})$ time: 
	\begin{itemize}
	\item $\SetCov(\uset, \sset) \leq q$. 
	\item $\SetCov(\uset,\sset) > r$. 
    \end{itemize}
\end{theorem}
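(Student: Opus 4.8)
The plan is to chain the two ingredients: the \MinLab hardness of \Cref{thm:MinLab} and the approximation-preserving reduction of \Cref{thm:setcov-reduction}. First I would recall from \Cref{thm:MinLab} that, assuming Gap-ETH, there are constants $\delta', \rho > 0$ such that, for any integers $r \geq q \geq \rho$, no algorithm can take a label cover instance $\Gamma$ with $|V| = q$ and distinguish $\MinLab(\Gamma) = q$ from $\MinLab(\Gamma) > r$ in time $O_{q, r}(|\Gamma|^{\delta' q})$, even when $|\Sigma_U| = (r/q)^{O(q)}$. I would set $q_0 := \rho$ and choose $\delta > 0$ small enough in terms of $\delta'$ and the polynomial blow-up exponents of \Cref{thm:setcov-reduction}; the exact value of $\delta$ is pinned down at the end of the running-time computation.

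Next, suppose for contradiction that for some $r \geq q \geq q_0$ there is an algorithm $\A$ that distinguishes $\SetCov(\uset, \sset) \leq q$ from $\SetCov(\uset, \sset) > r$ in time $O_{q, r}((|\sset||\uset|)^{\delta q})$. Given a \MinLab instance $\Gamma = (G = (U,V,E), \Sigma_U, \Sigma_V, \Pi)$ with $|V| = q$ and $|\Sigma_U| = (r/q)^{O(q)}$, I would run the reduction of \Cref{thm:setcov-reduction} to obtain a set cover instance $(\uset, \sset)$ with $\MinLab(\Gamma) = \SetCov(\uset, \sset)$, then feed $(\uset, \sset)$ to $\A$ and answer accordingly. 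By the completeness/soundness of \Cref{thm:MinLab}, in the first case $\MinLab(\Gamma) = q$ so $\SetCov(\uset, \sset) = q \leq q$, and in the second case $\MinLab(\Gamma) > r$ so $\SetCov(\uset, \sset) > r$; hence $\A$ correctly distinguishes the two \MinLab cases.

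It remains to bound the running time. From \Cref{thm:setcov-reduction} we have $|\uset| = |U| |V|^{|\Sigma_U|}$ and $|\sset| = |V||\Sigma_V|$, and the reduction itself runs in $\poly(|\uset|, |\sset|)$ time. Since $|V| = q = O(1)$ (in the regime of the theorem) and $|\Sigma_U| = (r/q)^{O(q)}$, both depending only on $q$ and $r$, the quantities $|\uset|$ and $|\sset|$ are bounded by $|\Gamma|^{O_{q,r}(1)}$ — more precisely, by $|\Gamma| \cdot O_{q,r}(1)$ times a fixed power; the key point is that the exponent of $|\Gamma|$ is an \emph{absolute} constant $C$ (say $C = 2$, coming from the $|U||V|^{|\Sigma_U|}$ term where the exponent $|V|^{|\Sigma_U|}$ is absorbed into the $O_{q,r}$ factor since $V$ is a constant-size set). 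Thus the total running time is $O_{q,r}((|\sset||\uset|)^{\delta q}) + \poly(|\uset|, |\sset|) \leq O_{q,r}(|\Gamma|^{C \delta q}) \leq O_{q,r}(|\Gamma|^{\delta' q})$, where the last inequality holds by choosing $\delta \leq \delta'/C$. This contradicts \Cref{thm:MinLab}, completing the proof.

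The main obstacle I anticipate is the bookkeeping in the running-time step: one must be careful that the blow-up $|\uset| = |U||V|^{|\Sigma_U|}$ — which is genuinely exponential in $|\Sigma_U|$ — does not wreck the reduction, and this works precisely because $|V| = q$ and $|\Sigma_U|$ are bounded by functions of $q$ and $r$ alone, so the whole factor $|V|^{|\Sigma_U|}$ is swallowed by the $O_{q,r}(\cdot)$ notation rather than contributing to the exponent of $|\Gamma|$. Everything else is a mechanical composition of two black boxes. (One should also double-check the edge case in \Cref{thm:setcov-reduction} about isolated vertices, but \Cref{thm:small V}/\Cref{thm:MinLab} already produce instances with no isolated vertices, so this is harmless.)
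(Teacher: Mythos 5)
Your proposal is correct and follows essentially the same route as the paper's proof: compose \Cref{thm:MinLab} with the reduction of \Cref{thm:setcov-reduction}, observe that the exponential blow-up $|V|^{|\Sigma_U|}$ in $|\uset|$ is a function of $q$ and $r$ alone and hence absorbed into the $O_{q,r}(\cdot)$ factor, and derive the contradiction. The only (immaterial) difference is bookkeeping: the paper keeps the same $\delta$ as in \Cref{thm:MinLab} and instead enlarges $q_0$ to $\max\{\rho, c/\delta\}$ so that the reduction's $O((|\uset||\sset|)^c)$ cost is dominated, whereas you shrink $\delta$; both choices work.
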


\begin{proof}
Assume that Gap-ETH holds and let $\delta, \rho$ be the constants from \Cref{thm:MinLab}. Let $q_0 = \max\{\rho, c/\delta\}$ where $c$ is the constant such that the running time of the reduction in \Cref{thm:setcov-reduction} is $O((|\uset||\sset|)^c)$. Suppose for the sake of contradiction that, for some $r \geq q \geq q_0$, there is an algorithm $\A$ that distinguishes between $\SetCov(\uset, \sset) \leq q$ and $\SetCov(\uset, \sset) > r$ in $O_{q, r}((|\sset| |\uset|)^{\delta q})$ time.

Given a label cover instance $\Gamma$ where $|V|, |\Sigma_U| = O_{q, r}(1)$, we can use $\A$ to distinguish whether $\MinLab(\Gamma) \leq q$ or $\MinLab(\Gamma) > r$ as follows. First, we run the reduction from \Cref{thm:setcov-reduction} to produce a \SetCov instance $(\uset, \sset)$ and we then use $\A$ to decide whether $\SetCov(\uset, \sset) \leq q$ or $\SetCov(\uset, \sset) > r$. From $\SetCov(\uset, \sset)  = \MinLab(\Gamma)$, this indeed correctly distinguishes between $\MinLab(\Gamma) \leq q$ and $\MinLab(\Gamma) > r$; moreover, the running time of the algorithm is $O_{q, r}((|\uset||\sset|)^{\delta q}) + O((|\uset||\sset|)^c) \leq O_{q, r}(|\Gamma|^{\delta q})$ where the term $O((|\uset||\sset|)^c)$ comes from the running time used to produce $(\uset, \sset)$. From \Cref{thm:MinLab}, this is a contradiction, which concludes our proof.
\end{proof}

As a corollary of \Cref{thm: set cover hardness}, we immediately arrive at FPT inapproximability of Set cover, Dominating set and Hitting set.

\begin{corollary}
Assuming Gap-ETH, Set cover, Dominating set and Hitting set are inherently enumerative and thus FPT inapproximable. 
\end{corollary}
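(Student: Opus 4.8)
The plan is to realize the given $\MinLab$ instance as a $\SetCov$ instance by a partition-system-style reduction in the spirit of Lund--Yannakakis and Feige, except that we use the \emph{complete} system of ``choice functions'' so that the reduction is exact rather than lossy. As in the proof of \Cref{thm:MinLab} we may assume $G$ has no isolated vertex. Take $\sset$ to have one set $S_{v,\beta}$ for each pair $(v,\beta)\in V\times\Sigma_V$, so $|\sset|=|V||\Sigma_V|$; picking $S_{v,\beta}$ will mean ``add label $\beta$ to $\hat\sigma_V(v)$'', so that a sub-collection $T\subseteq\sset$ of size $s$ corresponds to a multi-labeling with $\sum_{v}|\hat\sigma_V(v)|=s$ and conversely. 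For each $u\in U$ create a block $B_u$ with one ground element $b_{u,\phi}$ for every function $\phi\colon\Sigma_U\to V$; put $\uset=\bigcup_{u}B_u$, so $|\uset|=|U||V|^{|\Sigma_U|}$ exactly. Declare $b_{u,\phi}\in S_{v,\beta}$ iff there is $\alpha\in\Sigma_U$ with $\phi(\alpha)=v$ and (either $uv\notin E$, or $(\alpha,\beta)\in\Pi_{uv}$). Building this clearly takes $\poly(|\uset|,|\sset|)$ time.

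For the equality $\MinLab(\Gamma)=\SetCov(\uset,\sset)$ I would prove the two inequalities separately. For $\SetCov\le\MinLab$: take an optimal multi-labeling $(\sigma_U,\hat\sigma_V)$ and let $T=\{(v,\beta):\beta\in\hat\sigma_V(v)\}$; for any $u$ and any $\phi$, setting $v=\phi(\sigma_U(u))$, either $v\in N(u)$, in which case feasibility gives $\beta\in\hat\sigma_V(v)$ with $(\sigma_U(u),\beta)\in\Pi_{uv}$, or $v\notin N(u)$, in which case the no-isolated-vertex assumption forces $\hat\sigma_V(v)\ne\emptyset$ and the non-edge clause applies; either way $b_{u,\phi}$ lies in some $S_{v,\beta}$ with $(v,\beta)\in T$, so $T$ covers $\uset$. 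For $\MinLab\le\SetCov$: take an optimal set cover $T$, set $\hat\sigma_V(v)=\{\beta:(v,\beta)\in T\}$, and observe that if for some $u$ no single label $\alpha$ satisfied all edges at $u$ against $\hat\sigma_V$, then choosing for each $\alpha$ an unsatisfied neighbour $v_\alpha\in N(u)$ and setting $\phi(\alpha)=v_\alpha$ produces an element $b_{u,\phi}$ that lies in no $S_{v,\beta}$ with $(v,\beta)\in T$ (the non-edge clause is vacuous here since every $\phi(\alpha)\in N(u)$, and the constraint clause fails by the choice of $v_\alpha$), contradicting that $T$ is a cover; hence for each $u$ such a label exists, we take it as $\sigma_U(u)$, and $(\sigma_U,\hat\sigma_V)$ then covers all of $U$. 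In both directions the cost/size is preserved on the nose, giving the exact equality.

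I expect the main difficulty to be in \emph{designing} the gadget so that equality (not merely a gap) holds: the asymmetry of $\MinLab$ --- one label at each $u\in U$ but a whole label-set at each $v\in V$ --- has to be converted into a pure covering condition, and the right idea is that ``$u$ is covered by $(\sigma_U,\hat\sigma_V)$'' is the statement $\bigvee_{\alpha\in\Sigma_U}\bigwedge_{v\in N(u)}\bigvee_{\beta}[(v,\beta)\in T]$, whose CNF expansion has exactly one clause per choice function $\phi\colon\Sigma_U\to V$ --- precisely the elements of $B_u$. A secondary subtlety, which is what makes the bound $|\uset|=|U||V|^{|\Sigma_U|}$ hold for an arbitrary (not necessarily complete-bipartite) $G$, is the vacuous treatment of non-edges $uv\notin E$ in the membership rule; the no-isolated-vertex assumption is exactly what is needed to keep this harmless, since it guarantees that every feasible multi-labeling assigns at least one label to every $v\in V$.
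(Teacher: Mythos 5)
Your reduction from \MinLab to \SetCov is correct, and it is essentially the paper's own construction in disguise: the paper builds, for each $u\in U$, a hypercube set system $H(N_G(u),\Sigma_U)$ whose ground elements are exactly the choice functions $\phi\colon\Sigma_U\to N_G(u)$, and whose canonical sets $S^u_{v,\alpha}=\{\phi:\phi(\alpha)=v\}$ are unioned into $S_{v,\beta}=\bigcup_{u\in N_G(v),\,(\alpha,\beta)\in\Pi_{uv}}S^u_{v,\alpha}$. Your blocks $B_u$ are the same hypercubes with the coordinate range enlarged from $N_G(u)$ to all of $V$, compensated by the vacuous clause for non-edges; your ``one clause per choice function'' observation is precisely the paper's Proposition on hypercube covers. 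Both directions of your equality argument match the paper's analysis, and your use of the no-isolated-vertex assumption is legitimate (the paper makes the same assumption when proving the \MinLab hardness, and sidesteps the issue in the reduction only by restricting $\phi$ to $N_G(u)$).

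However, what you have proved is the reduction theorem (\Cref{thm:setcov-reduction}), not the stated corollary. To conclude that \SetCov is inherently enumerative you still have to compose the reduction with the \MinLab hardness of \Cref{thm:MinLab} and do the size bookkeeping, and this is not entirely cosmetic: the universe has size $|U|\,|V|^{|\Sigma_U|}$, which is exponential in $|\Sigma_U|$, so the reduction only runs in time polynomial in $|\Gamma|$ (and only preserves the exponent in the running-time lower bound) because \Cref{thm:MinLab} supplies instances with $|V|=q$ and $|\Sigma_U|=(r/q)^{O(q)}$ bounded by a function of $q$ and $r$ alone. One must check, as the paper does in \Cref{thm: set cover hardness}, that an $O_{q,r}((|\uset||\sset|)^{\delta q})$-time algorithm for the gap \SetCov problem would yield an $O_{q,r}(|\Gamma|^{\delta q})$-time algorithm for gap \MinLab, contradicting Gap-ETH; and one must then invoke the standard equivalences between \SetCov, \DomSet and Hitting Set to transfer the conclusion to the other two problems. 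None of these steps is difficult, but without them the corollary as stated is not yet proved.
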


\subsubsection{Proof of Theorem~\ref{thm:setcov-reduction}} \label{subsec:domset}

Our construction is based on a standard hypercube set system, as used by Feige~\cite{Feige98} in proving the hardness of $k$-Maximum Coverage.  
We explain it here for completeness. 

\paragraph{Hypercube set system:}
Let $z, k \in {\mathbb N}$ be parameters. The hypercube set system $H(z,k)$ is a set system $(\uset, \sset)$ with the ground set $\uset= [z]^k$. We view each element of $\uset$ as a length-$k$ vector $\vec{x}$ where each coordinate assumes a value in $[z]$.  
There is a collection of {\em canonical sets} $\sset = \{X_{i,a}\}_{i \in [z], a \in [k]}$ defined as 
\[X_{i,a} = \{\vec{x}: \vec{x}_a = i \} \] 
In other words, each set $X_{i,a}$ contains the vectors whose $a^{th}$ coordinate is $i$.
A nice property of this set system is that, it can only be covered completely if all canonical sets corresponding to some $a^{th}$ coordinate are chosen. 

\begin{proposition}
\label{prop:hypercube} 
Consider any sub-collection $\sset' \subseteq \sset$. 
We have $\bigcup \sset' = \uset$ if and only if there is a value $a \in [k]$ for which $X_{1,a}, X_{2,a},\ldots, X_{z,a} \in \sset'$.  
\end{proposition}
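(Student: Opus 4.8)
The plan is to prove the two directions of the equivalence separately, each by an elementary direct argument; no heavy machinery is needed.

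For the ``if'' direction, suppose there is a coordinate $a \in [k]$ with $X_{1,a}, X_{2,a}, \dots, X_{z,a} \in \sset'$. Take an arbitrary $\vec{x} \in \uset = [z]^k$ and set $i = \vec{x}_a \in [z]$. By the definition of the canonical sets, $\vec{x} \in X_{i,a}$, and since $X_{i,a} \in \sset'$ this gives $\vec{x} \in \bigcup \sset'$. As $\vec{x}$ was arbitrary, $\bigcup \sset' = \uset$.

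For the ``only if'' direction I would argue by contraposition: assume that the right-hand side fails, i.e.\ for every coordinate $a \in [k]$ there exists an index $i_a \in [z]$ with $X_{i_a, a} \notin \sset'$. Using these choices, define a vector $\vec{y} \in [z]^k$ by $\vec{y}_a = i_a$ for each $a$. The key point is that $\vec{y}$ is covered by no set in $\sset'$: indeed, if $X_{i,a} \in \sset'$ then necessarily $i \neq i_a$ (since $X_{i_a,a} \notin \sset'$), and hence $\vec{y}_a = i_a \neq i$, so $\vec{y} \notin X_{i,a}$. Therefore $\vec{y} \notin \bigcup \sset'$, which shows $\bigcup \sset' \neq \uset$, as desired.

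The argument is short and I do not anticipate a real obstacle; the only point requiring a moment of care is the construction of the witness vector $\vec{y}$ in the contrapositive, namely observing that under the negation of the right-hand side one genuinely can pick, simultaneously for every coordinate $a$, some missing index $i_a$, which is exactly what makes $\vec{y}$ well-defined and uncovered.
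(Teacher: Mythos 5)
Your proof is correct and follows essentially the same route as the paper: the forward direction by direct coverage of an arbitrary point, and the converse by contraposition via the witness vector $\vec{y}$ with $\vec{y}_a = i_a$, which is exactly the paper's construction.
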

\begin{proof}
The if part is obvious. 
For the ``only if'' part, assume that for each $a \in [k]$, there is a value $i_a \in [z]$ for which $X_{i_a, a}$ is not in $\sset'$. Define vector $\vec{x}$ by $\vec{x}_a = i_a$. Notice that $\vec{x}$ does not belong to any set in $\sset'$ (By definition, if $X_{i',a'}$ contains $\vec{x}$, then it must be the case that $\vec{x}_{a'} = i' = i_{a'}$.)   
\end{proof}

\paragraph{The construction:}
Our reduction starts from the $\MinLab$ instance $\Gamma = (G, \Sigma_U, \Sigma_V, \Pi)$. 
We will create the set system $\iset= (\uset, \sset)$.  
We make $|U|$ different copies of the hypercube set system: For each vertex $u\in U$, we have the hypercube set system $(\uset^u, \sset^u) = H(N_G(u), \Sigma_U)$, i.e., the ground set $\uset^u$ is a copy of $N_G(u)^{\Sigma_U}$  and $\sset^u$ contains $|N_G(u)| |\Sigma_U|$ ``virtual'' sets, that we call $\{S^u_{v,a}\}_{v \in N_G(u), a \in \Sigma_U}$ where each such set corresponds to a canonical set of the hypercube. 
We remark that these virtual sets are not the eligible sets in our instance $\iset$. 
For each vertex $v \in V$, for each label $b \in \Sigma_V$, we define a set 
\[S_{v,b} = \bigcup_{u \in N_G(v), (a,b) \in \Pi_{uv}}  S^u_{v,a}\]  
The set system $(\uset, \sset)$ in our instance is simply: 
$$\uset = \bigcup_{u \in U} \uset^u \hspace{0.2in} \mbox{ and } \hspace{0.2in} \sset = \{S_{v,b}: v \in V, b \in \Sigma_V\} $$

Notice that the number of sets is $|V||\Sigma_V|$ and the number of elements in the ground set is $|\uset| = |U| |V|^{|\Sigma_U|}$.
This completes the description of our instance. 

\paragraph{Analysis:} We argue that the optimal value of $\Gamma$ is equal to the optimal of $(\uset,\sset)$. 

First, we will show that $\MinLab(\Gamma) \leq \SetCov(\uset, \sset)$. Let $(\sigma_U, \hat{\sigma}_V)$ be a feasible $\MinLab$ cover for $\Gamma$ (recall that $\hat{\sigma}_V$ is a multi-labeling, while $\sigma_U$ is a labeling.) 
For each $v \in V$, the \SetCov solution chooses the set $S_{v,b}$ for all $b \in \hat{\sigma}_V(v)$. Denote this solution by $\sset' \subseteq \sset$. 
The total number of sets chosen is exactly $\sum_v |\hat{\sigma}(v)|$, exactly matching the cost of $\MinLab(\Gamma)$. 
We argue that this is a feasible set cover: For each $u$, the fact that $u$ is covered by $(\sigma_U, \hat{\sigma}_V)$ implies that, for all $v \in N_G(u)$, there is a label $b_v \in \hat{\sigma}_V(v)$ such that $(\sigma_U(u), b_v) \in \Pi_{uv}$. Notice that $S^u_{v,\sigma_U(u)} \subseteq S_{v, b_v} \in \sset'$ for every $v \in N_G(u)$, so we have $$\bigcup_{S \in \sset'} S \supseteq \bigcup_{v \in N_G(u)} S_{v, b_v} \supseteq \bigcup_{v \in N_G(u)} S^u_{v, \sigma_U(u)} = \uset^u$$ where the last equality comes from \Cref{prop:hypercube}. In other words, $\sset'$ covers all elements in $\uset^u$. Hence, $\sset'$ is indeed a valid $\SetCov$ solution for $(\uset, \sset)$.


To prove the converse, consider a collection of sets $\{S_{v,b}\}_{(v,b) \in \Lambda}$ that covers the whole universe $\uset$. 
We define the (multi-)labeling  $\hat{\sigma}_V: V \rightarrow 2^{\Sigma_V}$ where $\hat{\sigma}_V(v) = \{b: (v,b) \in \Lambda\}$ for each $v \in V$. 
Clearly, $\sum_{v \in V} |\hat{\sigma}_V(v)| = |\Lambda|$, so the cost of $\hat{\sigma}_V$ as a solution for $\MinLab$ is exactly the cost of \SetCov.  
We verify that all left vertices $u \in U$ of $\Gamma$ are covered (and along the way will define $\Sigma_U(u)$ for all $u \in U$.)
Consider each vertex $u \in U$. 
The fact that the ground elements in $\uset^u$ are covered implies that (from Proposition~\ref{prop:hypercube}) there is a label $a_u \in \Sigma_U$ where all virtual sets $\{S^u_{v,a_u}\}_{v \in N_G(u)}$ are included in the solution. 
Therefore, for each $v \in N_G(u)$, there must be a label $b_v \in \hat{\sigma}_V(v)$ such that $a_u b_v \in \Pi_{uv}$.  
We simply define $\sigma_U(u) = a_u$. 
Therefore, the vertex $u$ is covered by the assignment $(\sigma_U, \hat{\sigma}_V)$. 

\subsection{Maximum Induced Subgraph with Hereditary Properties} 

In this section, we prove the hardness of maximum induced subgraphs with hereditary property.
Let $\Pi$ be a graph property.  
We say that a subset $S \subseteq V(G)$ has property $\Pi$ if $G[S] \in \Pi$. 
Denote by $A_{\Pi}(G)$ the maximum cardinality of a set $S$ that has property $\Pi$.

Khot and Raman~\cite{KhotR00} proved a dichotomy theorem for the problem; if $\Pi$ contains all independent sets but not all cliques or if $\Pi$ contains all cliques but not all independent sets, then the problem is W[1]-hard. For all other $\Pi$'s, the problem is in FPT.
We will extend Khot and Raman's dichotomy theorem to hold even for FPT approximation as stated more precisely below.

\begin{theorem}
Let $\Pi$ be any hereditary property. 
\begin{itemize}
\item If $\Pi$ contains all independent sets but not all cliques or vice versa, then computing $A_{\Pi}(G)$ is weakly inherently enumerative (and therefore totally FPT inapproximable). 

\item Otherwise, $A_{\Pi}(G)$ can be computed exactly in FPT. 
\end{itemize}
\end{theorem}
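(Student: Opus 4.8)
The plan is to derive both halves of the dichotomy from Ramsey's theorem, with the hard direction bootstrapped from the fact (the corollary of \Cref{thm:clique}) that \MIS is inherently enumerative.

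For the inapproximability half, I would first reduce to the case that $\Pi$ contains all independent sets but not all cliques; the ``vice versa'' case follows immediately by passing to complements, since $\bar\Pi := \{\bar H : H \in \Pi\}$ is again hereditary and $A_{\bar\Pi}(\bar G) = A_\Pi(G)$, and both \Clique and \MIS are inherently enumerative. By heredity the set $\{j : K_j \in \Pi\}$ is downward closed, so it has a finite maximum $c \ge 1$ (finite because not all cliques lie in $\Pi$, and $\ge 1$ because $K_1 \in \Pi$). The crux is the Ramsey-type estimate
\[\MIS(G) \;\le\; A_\Pi(G) \;\le\; O_c\!\left(\MIS(G)^c\right) \qquad\text{for every graph } G:\]
the left inequality is trivial because every independent set belongs to $\Pi$, and for the right inequality any $S$ with $G[S]\in\Pi$ must be $K_{c+1}$-free (else $K_{c+1}\in\Pi$ by heredity), so using $R(c+1,t)\le\binom{c+t}{c}=\poly_c(t)$ the graph $G[S]$ contains an independent set of size $\Omega_c(|S|^{1/c})$, giving $\MIS(G)\ge\Omega_c(A_\Pi(G)^{1/c})$. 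Granting this, the identity map is an $(f,g)$-FPT gap reduction from \MIS to the $A_\Pi$-problem with $f(q)=q$ and $g(r)=\lceil c_1 r^{1/c}\rceil$ for the implied constant $c_1=c_1(c)>0$: both are computable, non-decreasing and $\omega(1)$; if $\MIS(G)\ge q$ then $A_\Pi(G)\ge q$, and if $\MIS(G)<g(r)$ then, by the contrapositive of the estimate, no $S$ of size $\ge r$ induces a member of $\Pi$, i.e.\ $A_\Pi(G)<r$. Since \MIS is inherently enumerative it is $\beta_0$-weakly inherently enumerative for a linear $\beta_0(r)=\delta r$, so \Cref{prop:gapred-enum} gives that the $A_\Pi$-problem is $\Omega(\beta_0\circ g)$-weakly inherently enumerative; as $\beta_0(g(r))=\Omega_c(r^{1/c})=\omega(1)$ for fixed $c$, this is a genuine statement, and weak inherent enumerativeness implies total FPT inapproximability.

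For the ``otherwise'' half, negating the hardness hypothesis leaves exactly two cases: (a) $\Pi$ contains all cliques and all independent sets, or (b) $\Pi$ contains neither all cliques nor all independent sets. In case (a) I would invoke Ramsey directly: if $|V(G)|\ge R(k,k)$ there is a clique or an independent set of size $k$, both in $\Pi$, so $A_\Pi(G)\ge k$; hence one decides ``$A_\Pi(G)\ge k$'' by answering yes whenever $|V(G)|\ge R(k,k)$, and otherwise $|V(G)|<R(k,k)\le 4^k$, so brute force over all vertex subsets takes $2^{4^k}\poly(|V(G)|)$ time, which is \FPT, and the exact value of $A_\Pi$ follows by iterating this test. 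In case (b), fix $c,d$ with $K_{c+1}\notin\Pi$ and with no independent set of size $d+1$ in $\Pi$; then any $S$ with $G[S]\in\Pi$ is $K_{c+1}$-free with independence number at most $d$, so Ramsey forces $|S|<R(c+1,d+1)$, a constant depending only on $\Pi$; thus $A_\Pi(G)$ is bounded by a constant and is computed exactly by enumerating the $\poly(|V(G)|)$ subsets of that bounded size and testing membership in $\Pi$.

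The only substantive step is the Ramsey estimate $\MIS(G)\le A_\Pi(G)\le\poly_c(\MIS(G))$ in the hard direction; after that the reduction is literally the identity map and the remainder is bookkeeping through \Cref{prop:gapred-enum}. I would emphasize that a merely polynomial (rather than linear) relation between $A_\Pi$ and $\MIS$ is harmless here: weak inherent enumerativeness, and hence total FPT inapproximability, only asks for gap functions tending to infinity, which $g(r)=\Theta(r^{1/c})$ does for every fixed $c$.
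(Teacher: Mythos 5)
Your proposal is correct and follows essentially the same route as the paper: the hard direction is exactly the paper's argument (Ramsey's theorem bounds the independence number of any $K_{c+1}$-free member of $\Pi$ from below by $\Omega_c(r^{1/c})$, making the identity map an FPT gap reduction from \MIS/\Clique, composed via \Cref{prop:gapred-enum} with their inherent enumerativeness). The only difference is that you also spell out the ``otherwise'' (FPT) half via the standard Ramsey case analysis, which the paper simply delegates to Khot and Raman's dichotomy theorem.
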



Surprisingly, the fact that there is a gap in the optimum of our starting point helps make our reduction simpler than that of Khot and Raman.
For convenience, let us focus only on properties $\Pi$'s which contain all independent sets but not all cliques. The other case can be proved analogously. The main technical result is summarized in the following lemma. 

\begin{theorem}
\label{thm:hereditary} 
Let $\Pi$ be any graph property that contains all independent sets but not all cliques. Then there is a function $g_\Pi = \omega(1)$ such that the following holds: 
\begin{itemize}
\item If $\alpha(G) \geq q$, then $A_{\Pi}(G) \geq q$.
\item If $A_{\Pi}(G) \geq r$, then $\alpha(G) \geq g_\Pi(r)$.
\end{itemize}
\end{theorem}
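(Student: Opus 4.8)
The plan is to prove Theorem~\ref{thm:hereditary} via a Ramsey-type argument, exploiting the fact that $\Pi$ is hereditary and fails for some clique. First I would observe that since $\Pi$ does not contain all cliques, there is a smallest integer $c$ such that $K_c \notin \Pi$; by heredity, $K_s \notin \Pi$ for all $s \geq c$, so \emph{no} graph in $\Pi$ can contain a clique on $c$ vertices. The first bullet is immediate: if $\alpha(G) \geq q$, then $G$ has an independent set $S$ of size $q$, and since $\Pi$ contains all independent sets, $G[S] \in \Pi$, hence $A_\Pi(G) \geq q$.

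**The main direction.** For the second bullet, suppose $A_\Pi(G) \geq r$, i.e., there is a set $S \subseteq V(G)$ with $|S| \geq r$ and $G[S] \in \Pi$. Since $G[S]$ contains no clique on $c$ vertices, $\omega(G[S]) < c$. Now apply the (quantitative) Ramsey theorem: any graph on $N$ vertices with clique number less than $c$ has an independent set of size at least $N^{1/(c-1)}$ (this follows, e.g., from the bound $R(c, t) \leq \binom{c+t-2}{c-1}$, so a graph on $N \geq R(c,t)$ vertices with no $K_c$ has an independent set of size $t$; solving for $t$ in terms of $N$ and $c$ gives a bound of the form $t \geq N^{\Omega(1/c)}$). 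Applying this to $G[S]$ with $N = |S| \geq r$, we obtain an independent set in $G[S]$ — and hence in $G$ — of size at least $r^{1/(c-1)}$. Therefore we may take $g_\Pi(r) := \lceil r^{1/(c-1)} \rceil$, which is $\omega(1)$ as required since $c$ is a constant depending only on $\Pi$.

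**Assembling the theorem.** With Theorem~\ref{thm:hereditary} in hand, the dichotomy theorem follows: in the case where $\Pi$ contains all independent sets but not all cliques, Theorem~\ref{thm:hereditary} states precisely that \Clique\ (equivalently \MIS, via complementation) is $(f, g_\Pi)$-FPT gap reducible to computing $A_\Pi$ with $f$ the identity function; since \MIS\ is inherently enumerative hence weakly inherently enumerative (Corollary after Theorem~\ref{thm:clique}), Proposition~\ref{prop:gapred-enum} implies $A_\Pi$ is $\Omega(\beta \circ g_\Pi)$-weakly inherently enumerative, and the preceding proposition gives total FPT inapproximability. The symmetric case ($\Pi$ contains all cliques but not all independent sets) is handled by passing to the complement graph $\bar G$: the property $\bar\Pi = \{\bar H : H \in \Pi\}$ is hereditary, contains all independent sets but not all cliques, and $A_{\bar\Pi}(\bar G) = A_\Pi(G)$, so we reduce to the first case. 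Finally, for all remaining $\Pi$ (those containing all cliques and all independent sets, or neither), Khot and Raman's argument places the problem in FPT, which we may quote directly.

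**Anticipated obstacle.** The only genuinely delicate point is making sure the reduction in Theorem~\ref{thm:hereditary} is actually an FPT gap reduction in the formal sense of Definition~\ref{def:FPT gap reduction}: here the "reduction" is the identity map on graphs (we output $G$ unchanged), so the running-time and output-size conditions are trivial, and the two bullets of Theorem~\ref{thm:hereditary} are exactly the completeness and soundness conditions with $f = \mathrm{id}$ and $g = g_\Pi$. One subtlety worth stating carefully is that $g_\Pi$ must be computable and non-decreasing, which it is (it is an explicit function of $r$ and the constant $c$), so Proposition~\ref{prop:gapred-enum} applies cleanly. No PCP machinery or partition systems are needed — the entire strength comes from the Gap-ETH hardness of \Clique\ already established, plus classical Ramsey bounds.
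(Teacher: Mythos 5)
Your proof is correct and follows essentially the same route as the paper's: the first bullet is handled identically, and the second rests on the same Ramsey bound $R(s,t)\le\binom{s+t-2}{s-1}$ applied to the least clique size $c$ excluded from $\Pi$ (heredity guaranteeing that no graph in $\Pi$ contains $K_c$). The only difference is cosmetic: the paper defines $g_\Pi(r)$ extremally over all $r$-vertex graphs in $\Pi$ and then invokes Ramsey to bound it from below, whereas you take the Ramsey bound $\lceil r^{1/(c-1)}\rceil$ itself as the definition of $g_\Pi$, which has the mild advantage of being explicitly computable and non-decreasing as required for the FPT gap reduction.
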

\begin{proof}
Since $\Pi$ contains all independent set, when $\alpha(G) \geq q$, we always have $A_{\Pi}(G) \geq q$. 

Now, to prove the converse, let $g_\Pi(r)$ denote $\max_{H \in \Pi, |V(H)| = r} \alpha(H)$. If $A_{\Pi}(G) = r$, then there exists a subset $S \subseteq V(G)$ of size $r$ that has property $\Pi$; from the definition of $g_\Pi$, $\alpha(H) \geq g_\Pi(r)$, which implies that $\alpha(G) \geq g_\Pi(r)$ as well. Hence, we are only left to show that $g_\Pi = \omega(1)$.

To show that this is the case, recall the Ramsey theorem.

\begin{theorem}[Ramsey's Theorem]
For any $s, t \geq 1$, there is an integer $R(s, t)$ s.t. every graph on $R(s, t)$ vertices contains either a $s$-clique or a $t$-independent set. Moreover, $R(s, t) \leq \binom{s + t - 2}{s - 1}$.
\end{theorem}

Recall that, from our assumption of $\Pi$, there exists a fixed integer $s_{\Pi}$ such that $\Pi$ does not contain an $s_{\Pi}$-clique. Hence, from Ramsey's Theorem, $g_\Pi(r) \geq \max\{t \mid R(s_{\Pi}, t) \leq r\}$. In particular, this implies that $g_\Pi(r) \geq \Omega_{s_{\Pi}}(r^{1/(s_{\Pi - 1})})$. Hence, $\lim_{r \infty} g_\Pi(r) = \infty$ (i.e. $g_\Pi = \omega(1)$) as desired.
\end{proof}

In other words, the identical transformation $G \mapsto G$ is a $(q, g_\Pi(r))$-FPT gap reduction from \Clique to Maximum Induced Subgraph with property $\Pi$. Hence, by applying~\Cref{prop:gapred-enum}, we immediately arrive at the following corollary.

\begin{corollary}
Assuming Gap-ETH, for any property $\Pi$ that contains all independent sets but not all cliques (or vice versa), Maximum Induced Subgraph with property $\Pi$ is $\Omega(g_\Pi)$-weakly inherently enumerative where $g_\Pi$ is the function from Theorem~\ref{thm:hereditary}.
\end{corollary}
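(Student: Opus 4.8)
The plan is to obtain the Corollary as an immediate consequence of three things already in hand: (a) Theorem~\ref{thm:clique}, which says that under Gap-ETH \Clique is inherently enumerative, i.e.\ there is a constant $\delta > 0$ with $r_0 > 0$ such that no algorithm distinguishes $\Clique(G) < r$ from $\Clique(G) \geq q$ in $O_{q,r}(|V(G)|^{\delta r})$ time for all $q \geq r \geq r_0$; in the language of Definition~\ref{def:weakine} this is precisely the statement that \Clique (equivalently \MIS, via complementation) is $\beta$-weakly inherently enumerative with $\beta(r) = \delta r = \omega(1)$; (b) Theorem~\ref{thm:hereditary}, which furnishes a function $g_\Pi = \omega(1)$ with the two listed implications; and (c) Proposition~\ref{prop:gapred-enum}, which propagates weak inherent enumerativeness along FPT gap reductions.

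First I would make the FPT gap reduction explicit. Fix a hereditary property $\Pi$ containing all independent sets but not all cliques, and let $\A$ be the algorithm that, on input $(G, q, r)$, simply outputs $I_1 = G$ (ignoring $q,r$). Setting $\Pi_0 = \MIS$, $\Pi_1 = $ Maximum Induced Subgraph with property $\Pi$, $f(q) = q$ and $g(r) = g_\Pi(r)$, I would verify the three conditions of Definition~\ref{def:FPT gap reduction}: the identity map runs in $|I_0|^{O(1)}$ time; if $\opt_{\Pi_0}(G) = \alpha(G) \geq q$ then $\opt_{\Pi_1}(G) = A_\Pi(G) \geq q = f(q)$, which is exactly the first bullet of Theorem~\ref{thm:hereditary}; and if $\alpha(G) < g_\Pi(r)$ then $A_\Pi(G) < r$, which is the contrapositive of the second bullet. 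Since $f(q) = q$ and $g = g_\Pi$ are both $\omega(1)$ (the latter by Theorem~\ref{thm:hereditary}), $\A$ is an $(f,g)$-FPT gap reduction from \MIS to $\Pi_1$.

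Then I would apply Proposition~\ref{prop:gapred-enum} with $\beta(r) = \delta r$: since \MIS is $\beta$-weakly inherently enumerative and is $(f, g_\Pi)$-FPT gap reducible to $\Pi_1$, the problem $\Pi_1$ is $\Omega(\beta \circ g_\Pi)$-weakly inherently enumerative; but $\beta(g_\Pi(r)) = \delta \cdot g_\Pi(r) = \Omega(g_\Pi(r))$ because $\beta$ is linear, so $\Pi_1$ is $\Omega(g_\Pi)$-weakly inherently enumerative, as claimed. Finally, for the ``or vice versa'' case in which $\Pi$ contains all cliques but not all independent sets, I would pass to complements: $\bar\Pi := \{\bar H : H \in \Pi\}$ is hereditary, contains all independent sets but not all cliques, and $A_\Pi(G) = A_{\bar\Pi}(\bar G)$, so the already-handled case applied to $\bar\Pi$ together with the complementation equivalence of \Clique and \MIS yields the statement (with the symmetric function supplied by Theorem~\ref{thm:hereditary} applied to $\bar\Pi$, still denoted $g_\Pi$).

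There is no genuinely hard step here: all the substance lives in Theorem~\ref{thm:hereditary} (whose Ramsey-theoretic core is the bound $g_\Pi(r) \geq \Omega_{s_\Pi}(r^{1/(s_\Pi - 1)}) = \omega(1)$) and in Proposition~\ref{prop:gapred-enum}. The only points requiring a little care are (i) lining up the roles of $q$ and $r$ and the directions of the inequalities so the identity map is literally an FPT gap reduction in the sense of Definition~\ref{def:FPT gap reduction}, and (ii) checking that composing the \emph{linear} $\beta$ from inherent enumerativeness of \Clique with $g_\Pi$ does not degrade the bound, i.e.\ that $\beta \circ g_\Pi = \Theta(g_\Pi)$, which is immediate.
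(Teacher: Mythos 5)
Your proposal is correct and follows exactly the paper's route: the paper likewise observes that the identity map $G \mapsto G$ is a $(q, g_\Pi(r))$-FPT gap reduction from \Clique (equivalently \MIS) to Maximum Induced Subgraph with property $\Pi$, and then invokes Proposition~\ref{prop:gapred-enum} together with the linear $\beta$ from Theorem~\ref{thm:clique}. You simply spell out the verification of Definition~\ref{def:FPT gap reduction} and the complementation for the ``vice versa'' case in more detail than the paper does.
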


We remark here that, for some properties, $g_\Pi$ can be much larger than the bound given by the Ramsey's Theorem; for instance, if $\Pi$ is planarity, then the Ramsey's Theorem only gives $g_\Pi(r) = \Omega(r^{1/5})$ but it is easy to see that, for planar graphs, there always exist an independent set of linear size and $g_\Pi(r)$ is hence as large as $\Omega(r)$.

\subsection{Maximum Balanced Biclique, Maximum Induced Matching on Bipartite Graphs and Densest $k$-Subgraph}

We next prove FPT inapproximability for the Maximum Balanced Biclique, Maximum Induced Matching on Bipartite Graphs and Densest $k$-Subgraph. Unlike the previous proofs, we will not reduce from any label cover problem; the starting point for the results in this section will instead be a recent construction of Manurangsi for ETH-hardness of Densest $k$-Subgraph~\cite{Man17}. By interpreting this construction in a different perspective, we can modify it in such a way that we arrive at a stronger form of inherently enumerative hardness for \Clique. More specifically, the main theorem of this section is the following theorem, which is a stronger form of \Cref{thm:clique} in that the soundness not only rules out cliques, but also rules out bicliques as well.

\begin{theorem} \label{thm:cliquevbiclique}
Assuming Gap-ETH, there exist constants $\delta, \rho > 0$ such that, for any positive integers $q \geq r \geq \rho$, no algorithm can take a  graph $G$ and distinguish between the following cases in $O_{q, r}(|V(G)|^{\delta\sqrt{r}})$ time:
\begin{itemize}
\item $\Clique(G) \geq q$.
\item $\BiClique(G) < r$.
\end{itemize}
\end{theorem}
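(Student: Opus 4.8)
\emph{Proof plan.} The plan is to revisit the ETH‑hardness reduction for Densest $k$‑Subgraph of Manurangsi~\cite{Man17} and to observe that, after reinterpreting and slightly modifying it, it produces a graph whose soundness guarantee in fact rules out balanced bicliques, not merely dense subgraphs. Concretely, I would fix as the starting point a gap‑amplified \MaxCov\ instance $\wtGamma=(\wtG=(\wtU,\wtV,\wtE),\Sigma_{\wtU},\Sigma_{\wtV},\wtPi)$ with the projection property, obtained from \Cref{thm:Gap-ETH restated} composed with the disperser‑based amplification used in the proof of \Cref{thm:small r}, set up so that for a soundness parameter $r'$ to be chosen later, and with $|\wtU|=q$, no algorithm distinguishes $\MaxCov(\wtGamma)=|\wtU|$ from $\MaxCov(\wtGamma)<r'$ in $O_{q,r'}(|\wtGamma|^{\delta_0 r'})$ time, where $\delta_0$ is the constant from \Cref{thm:small r}. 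Ultimately I will take $r'=\Theta(\sqrt r)$. From $\wtGamma$ I would build the FGLSS‑type ``label‑extended'' graph $G$ (as in \Cref{thm:fglss}): vertices are pairs $(u,\alpha)$ with $u\in\wtU$, $\alpha\in\Sigma_{\wtU}$, and $(u,\alpha)\sim(u',\alpha')$ iff $u\neq u'$ and the two labels are jointly consistent on every common right vertex, so that $\Clique(G)=\MaxCov(\wtGamma)$. This already settles completeness: if $\wtGamma$ is fully satisfiable then $\Clique(G)\geq|\wtU|=q$.

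The heart of the argument is the soundness analysis: showing that $\MaxCov(\wtGamma)<r'$ implies $\BiClique(G)<r$, i.e.\ that any copy of $K_{r,r}$ in $G$, with sides $A$ and $B$, decodes to a labeling of $\wtGamma$ covering at least $r'$ left vertices. Two observations drive this. First, since no two vertices of $G$ with the same $\wtU$‑coordinate are adjacent, the sets of left vertices touched by $A$ and by $B$ are disjoint. Second — and this is the point where the construction of~\cite{Man17} must deviate from plain FGLSS, roughly by keeping the left alphabet polynomially bounded and taking the left vertex set of $\wtGamma$ to itself be a disperser — one side, say $A$, must span $\Omega(r')$ distinct left vertices whose associated subsets union to a $(1-\varepsilon)$‑fraction of the underlying instance; the labels carried by $A$ then pin down a partial assignment there, and the fact that every vertex of $B$ is adjacent to all of $A$ forces $B$'s labels to agree with it and to satisfy the local constraints, yielding a labeling of $\wtGamma$ covering $\Omega(r')\geq r'$ left vertices. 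The quadratic relation $r'=\Theta(\sqrt r)$ is precisely what guarantees that, among the $2r$ vertices of the biclique, one side spans enough distinct left vertices for the disperser/``birthday'' step of~\cite{Man17} to apply. I expect this to be the main obstacle: plain FGLSS places no bound at all on $\BiClique(G)$ (a complete bipartite graph has clique number $2$ but arbitrarily large biclique number), so it is only the disperser structure of $\wtGamma$ together with the small left alphabet that make the decoding go through, and getting the loss in this decoding to be only quadratic is the delicate part.

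Finally I would do the size bookkeeping. One has $|V(G)|=\poly(|\wtGamma|)$ and the reduction runs in polynomial time, so an algorithm deciding ``$\Clique(G)\geq q$'' versus ``$\BiClique(G)<r$'' in $O_{q,r}(|V(G)|^{\delta\sqrt r})$ time would decide the \MaxCov\ gap above in $O_{q,r'}(|\wtGamma|^{\delta' r'})$ time for a constant $\delta'$ proportional to $\delta$ (using $\sqrt r=\Theta(r')$); choosing $\delta$, hence $\delta'$, below $\delta_0$, and $\rho$ large enough that $r'=\Theta(\sqrt r)$ exceeds the threshold of \Cref{thm:small r} while $q\geq r\geq\rho$ keeps $q\geq r'$, contradicts \Cref{thm:small r} and therefore Gap‑ETH, completing the proof.
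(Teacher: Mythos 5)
Your completeness step and the final bookkeeping are fine, but the soundness step has a genuine gap, and it is exactly the step you flag as ``delicate.'' In an FGLSS-type graph, adjacency (consistency) is only required \emph{across} the two sides of a biclique, not within a side. Consequently a $K_{r,r}$ with sides $A,B$ need not span many distinct left vertices, and the labels carried by one side need not be mutually consistent. Concretely, take $A=\{(I,\alpha_1),\dots,(I,\alpha_r)\}$ and $B=\{(I',\beta_1),\dots,(I',\beta_r)\}$ supported on just two left vertices $I,I'$ of the disperser instance: if the shared right-neighborhood of $I$ and $I'$ is small, the $\alpha_i$'s may disagree wildly on the rest of $I$ while each remains consistent with every $\beta_j$, giving an arbitrarily large balanced biclique that ``covers'' only two left vertices. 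So the claim that ``one side must span $\Omega(r')$ distinct left vertices'' is false for the deterministic disperser-based FGLSS graph, and ``the labels carried by $A$ then pin down a partial assignment'' is unjustified even when many left vertices do appear. Your appeal to ``keeping the left alphabet polynomially bounded'' does not hold either: in the small-$|U|$ construction the left alphabet has size $|\Sigma_{\wtU}|^{\ell}=2^{\Theta(m/r')}$, which is exponential in the instance size.

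The paper resolves this by an entirely different mechanism, and explicitly does \emph{not} route through label cover for this theorem. It starts from Manurangsi's graph $\wtG_{\phi,\ell}$ on \emph{all} $\binom{n}{\ell}$ partial assignments, invokes the counting lemma of~\cite{Man17} bounding the \emph{number} of occurrences of $K_{t,t}$ in the soundness case by $2^{4n}\bigl(2^{-\lambda\ell^2/n}\binom{n}{\ell}\bigr)^{2t}$, and then randomly subsamples vertices with probability $p=2^{\lambda\ell^2/(2n)}/\binom{n}{\ell}$. A first-moment union bound over all $K_{r,r}$ occurrences shows none survives, while a Chernoff bound preserves a large clique; the $\sqrt{r}$ arises from choosing $\ell=\Theta(n/\sqrt{r})$ to balance the subsampling probability against the biclique count, not from a birthday argument on the sides of the biclique. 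To repair your proof you would need either to import and reprove that counting lemma together with the probabilistic subsampling (at which point you have reproduced the paper's argument), or to supply a genuinely new deterministic decoding of bicliques, which your sketch does not provide.
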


The weakly inherently enumerativeness (and therefore totally FPT inapproximability) of Maximum Balanced Biclique and Maximum Induced Matching on Bipartite Graphs follows easily from Theorem~\ref{thm:cliquevbiclique}. We will show these results in the subsequent subsections; for now, let us turn our attention to the proof of the theorem. 

The main theorem of this section can be stated as follows.

\begin{theorem} \label{thm:mbb-fpt-inapprox}
For any $d, \varepsilon > 0$, there is a constant $\gamma = \gamma(d, \varepsilon) > 0$ such that there exists a (randomized) reduction that takes in a parameter $r$ and a 3-\SAT instance $\phi$ with $n$ variables and $m$ clauses where each variable appears in at most $d$ constraints and produces a graph $G_{\phi, r} = (V_{\phi, r}, E_{\phi, r})$ such that, for any sufficiently large $r$ (depending only on $d, \varepsilon$ but not $n$), the following properties hold with high probability:
\begin{itemize}
\item (Size) $N := |V_{\phi, r}| \leq 2^{O_{d, \varepsilon}(n/\sqrt{r})}$.
\item (Completeness) if $\SAT(\phi) = m$, then $\Clique(G_{\phi, r}) \geq N^{\gamma/\sqrt{r}}$.
\item (Soundness) if $\SAT(\phi) \leq (1 - \varepsilon)m$, then $\BiClique(G_{\phi, r}) < r$.
\end{itemize}
\end{theorem}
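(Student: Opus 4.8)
The plan is to construct $G_{\phi,r}$ from a Gap-ETH-hard 3-SAT instance by a two-level "product then sparsify" construction, building on Manurangsi's Densest $k$-Subgraph hardness \cite{Man17}. First I would apply a bounded-occurrence sparsification to $\phi$ (standard, via expander replacement): this turns $\phi$ into an instance where each variable appears in at most $d$ clauses while only losing a constant factor in the gap, so Gap-ETH still gives $2^{\Omega(n)}$-hardness of distinguishing $\mathrm{SAT}(\phi)=m$ from $\mathrm{SAT}(\phi)\le(1-\varepsilon)m$. Set a block size $t = \Theta(n/\sqrt r)$ and partition the $n$ variables (after sparsification, still $\Theta(n)$ variables) into $\sqrt r$ blocks $B_1,\dots,B_{\sqrt r}$ of size $t$ each. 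The vertices of $G_{\phi,r}$ will be grouped into $\sqrt r$ "super-columns," one per block; the vertices in the $i$-th super-column are (a carefully chosen subfamily of) the partial assignments to $B_i$. Two vertices — a partial assignment $\alpha$ to $B_i$ and $\beta$ to $B_j$ with $i\ne j$ — are joined by an edge iff $\alpha$ and $\beta$ are "jointly consistent," meaning no clause of $\phi$ touching only variables in $B_i\cup B_j$ is violated by $\alpha\cup\beta$. No edges go within a super-column. This is exactly the FGLSS-style graph of the $\sqrt r$-player "block CSP" obtained from $\phi$.

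**Completeness and the clique.** If $\mathrm{SAT}(\phi)=m$, take a satisfying assignment and restrict it to each block; the $\sqrt r$ resulting partial assignments are pairwise jointly consistent, so they form a clique — but only of size $\sqrt r$, which is far too small. To boost the clique size to $N^{\gamma/\sqrt r}$ I would instead put into the $i$-th super-column \emph{many} vertices per block by taking a disperser-like subfamily of partial assignments, or more directly follow \cite{Man17}: take an auxiliary combinatorial object (a "clique cover" / agreement gadget) so that in the completeness case an $\Omega(N^{\gamma/\sqrt r})$-clique survives. Concretely, the vertex set of super-column $i$ is indexed by pairs (an element of a fixed ground set of size $\approx 2^{t}$, which is where the $2^{O(n/\sqrt r)}$ size bound comes from, together with an index into an error-correcting gadget), edges are defined by consistency as above restricted appropriately, and the completeness clique comes from fixing the global satisfying assignment and ranging over the gadget. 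The size bound $N\le 2^{O_{d,\varepsilon}(n/\sqrt r)}$ then follows since each super-column has $2^{O(t)}=2^{O(n/\sqrt r)}$ vertices and there are only $\sqrt r$ of them, and $N^{\gamma/\sqrt r}$ is chosen to match the gadget size.

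**Soundness — the main obstacle.** This is the hard part. I must show: if $G_{\phi,r}$ contains a balanced biclique with $r$ vertices on each side, then $\mathrm{SAT}(\phi)>(1-\varepsilon)m$. A biclique $(A,B)$ with $|A|=|B|=r$ has the property that every vertex of $A$ is adjacent to every vertex of $B$. Since there are only $\sqrt r$ super-columns, by pigeonhole $A$ has $\ge\sqrt r$ vertices spread over the super-columns in such a way that — and this is where the combinatorics must be done carefully — one extracts from $A\cup B$ a collection of partial assignments, one for each of the $\sqrt r$ blocks, that are pairwise jointly consistent on a $(1-o(1))$-fraction of the cross-block clauses; "folding" these together gives a global assignment satisfying a $(1-\varepsilon)$-fraction of clauses. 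Two technical points drive the choice of parameters: (i) because a biclique (unlike a clique) can place both of its large sides inside the \emph{union of only $\sqrt r$ super-columns} while still being large, one needs the intra-super-column structure (no edges, plus the gadget) to force $A$ and $B$ to actually "spread out" — this is precisely why the exponent is $\delta\sqrt r$ and not $\delta r$; and (ii) the disperser/agreement gadget must guarantee that a biclique of size $r$ pins down enough coordinates of a consistent global assignment. I would set up the agreement gadget so that any $r$-biclique yields $\Omega(\sqrt r)$ "agreeing" blocks with $1-\varepsilon/2$ consistency, and then a union bound over the $O(m)=O(n)$ clauses, combined with the $2^{\Omega(n)}$ Gap-ETH lower bound and the $2^{O(n/\sqrt r)}$ size, yields the claimed $O_{q,r}(N^{\delta\sqrt r})$ running-time lower bound. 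The routine parts — the expander sparsification, the exact disperser parameters, and the "high probability" union bound over random subsets — I would defer to lemmas in the style of Claim~\ref{claim:random-disperser}.
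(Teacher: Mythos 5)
There is a genuine gap, and it sits in both the completeness and the soundness of your block-partition construction. First, completeness: if you partition the variables into $\sqrt r$ disjoint blocks and declare that no edges go within a super-column, then any clique contains at most one vertex per super-column, so $\Clique(G_{\phi,r})\le\sqrt r$ no matter what; this can never reach $N^{\gamma/\sqrt r}=2^{\Theta(n/r)}$. Your proposed fix (an unspecified ``agreement gadget'' / ``clique cover'') is not a construction, and any gadget that adds intra-column edges to create a large clique would destroy the very structure your soundness argument relies on. The paper avoids this by \emph{not} partitioning: the vertex set of $\wtG_{\phi,\ell}$ consists of partial assignments to \emph{all} $\binom{n}{\ell}$ subsets of size $\ell$ (overlapping), so a single satisfying assignment already yields a $\binom{n}{\ell}$-clique, and after subsampling each vertex independently with probability $p=2^{\lambda\ell^2/(2n)}/\binom{n}{\ell}$ a clique of size about $2^{\lambda\ell^2/(2n)}=N^{\Theta(1)/\sqrt r}$ survives by Chernoff.

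Second, soundness: your pigeonhole-and-fold argument does not go through, for exactly the reason you flag but do not resolve. A balanced biclique $(A,B)$ needs no edges inside $A$ or inside $B$, so $A$ can sit entirely inside one super-column and $B$ inside another; such a biclique only certifies that two blocks admit many pairwise-consistent partial assignments, which says essentially nothing about $\SAT(\phi)$, so no global assignment satisfying $(1-\varepsilon)m$ clauses can be extracted. The absence of intra-column edges \emph{permits} this configuration rather than forcing the biclique to spread out. The paper's soundness is not structural at all: it invokes the counting lemma of~\cite{Man17}, which upper-bounds the \emph{number} of occurrences of $K_{r,r}$ in the full graph when $\SAT(\phi)\le(1-\varepsilon)m$ by $2^{4n}\bigl(2^{-\lambda\ell^2/n}\binom{n}{\ell}\bigr)^{2r}$, and then shows that after independent vertex subsampling the expected number of surviving occurrences is $2^{4n}\bigl(2^{-\lambda\ell^2/(2n)}\bigr)^{2r}=o(1)$ once $\ell=\Theta(n/\sqrt r)$. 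The subsampling is thus the engine of the soundness proof, not a routine high-probability afterthought; without it, and without the counting lemma, your argument cannot rule out $K_{r,r}$ in the soundness case.
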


It is not hard to see that, in the Gap-ETH assumption, we can, without loss of generality, assume that each variable appears in only a bounded number of clauses (See~\cite[p.~21]{ManR16}). Hence, Theorem~\ref{thm:mbb-fpt-inapprox} together with Gap-ETH implies Theorem~\ref{thm:cliquevbiclique}.

As mentioned earlier, our result builds upon an intermediate lemma used to prove the hardness of approximating Densest $k$-Subgraph in~\cite{Man17}. Due to this, it will be easier to describe our reduction in terms of the reduction from~\cite{Man17}; in this regard, our reduction can be viewed as vertex subsampling (with appropriate probability) of the graph produced by the reduction from~\cite{Man17}. The reduction is described formally in Figure~\ref{fig:red-cliquevbiclique}. Note that the two parameters $\ell$ and $p$ will be chosen as $\Theta_{d, \varepsilon}(n/\sqrt{r})$ and $2^{\Theta_{d, \varepsilon}(\ell^2/n)}/\binom{n}{\ell}$ respectively where the constants in $\Theta_{d, \varepsilon}(\cdot)$ will be selected based on the parameters from the intermediate lemma from~\cite{Man17}.

\begin{figure}[h!]
\begin{framed}
Input: a 3-\SAT instance $\phi$ and parameters $p \in (0, 1)$ and $\ell \in \mathbb{N}$ such that $\ell \leq n$. \\
Output: a graph $G_{\phi, \ell, p} = (V_{\phi, \ell, p}, E_{\phi, \ell, p})$. \\
The graph $G_{\phi, \ell, p}$ is generated as follows.
\begin{itemize}
\item First, we create a graph $\wtG_{\phi, \ell} = (\wtV_{\phi, \ell}, \wtE_{\phi, \ell})$ as constructed in~\cite{Man17}. More specifically, the vertex set $\wtV_{\phi, \ell}$ and the edge set $\wtE_{\phi, \ell}$ are defined as follows.
\begin{itemize}
\item The vertex set $\wtV_{\phi, \ell}$ consists of all partial assignments of $\ell$ variables, i.e., $\wtV_{\phi, \ell} := \{\sigma: S \rightarrow \{0, 1\} \mid S \in \binom{\cX}{\ell}\}$ where $\cX$ is the set of all variables in $\phi$.
\item There exists an edge between two vertices $\sigma_1: S_1 \rightarrow \{0, 1\}$ and $\sigma_2: S_2 \rightarrow \{0, 1\} \in \wtV_{\phi, \ell}$ if and only if (1) they are consistent (i.e., $\sigma_1(S_1 \cap S_2) = \sigma_2(S_1 \cap S_2)$) and (2) the partial assignment induced by $\sigma_1, \sigma_2$ does not violate any constraint (i.e., every constraint that lies entirely inside $S_1 \cup S_2$ is satisfied by the partial assignment induced by $\sigma_1, \sigma_2$).
\end{itemize}
\item Our graph $G_{\phi, \ell, p} = (V_{\phi, \ell, p}, E_{\phi, \ell, p})$ can then be easily generated as follows.
\begin{itemize}
\item Let $V_{\phi, \ell, p}$ be a random subset of $\wtV_{\phi, \ell}$ such that each vertex $v \in \wtV_{\phi, \ell}$ is included independently and randomly in $V_{\phi, \ell, p}$ with probability $p$.
\item We connect $u, v \in V_{\phi, \ell, p}$ if and only if $(u, v) \in \wtE_{\phi, \ell}$.
\end{itemize}
\end{itemize}
\end{framed}
\caption{The Reduction from Gap-3\SAT to Maximum Balanced Biclique}
\label{fig:red-cliquevbiclique}
\end{figure}

The main lemma of~\cite{Man17} is stated below. Roughly speaking, when $\SAT(\phi) \leq (1 - \varepsilon)m$, the lemma gives an upper bound on the number of occurrences of $K_{t, t}$ for every $t > 0$. When $p$ and $t$ are chosen appropriately, this implies that w.h.p. there is no $t$-biclique in our subsampled graph. Note that the size and completeness properties are obvious from the construction while the exact statement of the soundness can be found in the proof of Theorem 8 of~\cite{Man17}. 

\begin{lemma}[\cite{Man17}] \label{lem:dks-reduction} 
Let $d, \varepsilon, \phi, n, m, \ell$ be as in Theorem~\ref{thm:mbb-fpt-inapprox} and Figure~\ref{fig:red-cliquevbiclique}. There is a constant $\delta, \lambda > 0$ depending only on $d, \varepsilon$ such that, for any sufficiently large $n$, the graph $G_{\phi, \ell} = (V_{\phi, \ell}, E_{\phi, \ell})$ described in Figure~\ref{fig:red-cliquevbiclique} has the following properties
\begin{itemize}
\item (Size) $|V_{\phi, \ell}| = \binom{n}{\ell}2^\ell$.
\item (Completeness) if $\SAT(\phi) = m$, $\wtG_{\phi, \ell}$ contains a $\binom{n}{\ell}$-clique.
\item (Soundness) if $\SAT(\phi) \leq (1 - \varepsilon)m$, then $\wtG_{\phi, \ell}$ contains at most $2^{4n}(2^{-\lambda \ell^2 / n} \binom{n}{\ell})^{2t}$ occurrences\footnote{We say that $S, T \subseteq V_{\phi, \ell}$ is an occurrence of $K_{t, t}$ if $|S| = |T| = t$, $S \cap T = \emptyset$ and, for every $s \in S, t \in T$, there is an edge between $s$ and $t$ in $G_{\phi, \ell}$. The number of occurrences of $K_{t, t}$ of $G_{\phi, \ell}$ is simply the number of such pairs $(S, T)$'s.} of $K_{t, t}$ for any $t > 0$.
\end{itemize}
\end{lemma}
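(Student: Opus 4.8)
The bound in Lemma~\ref{lem:dks-reduction} is quoted from~\cite{Man17}, so the cleanest thing for this paper is to invoke it directly; but here is how I would prove it. The Size and Completeness parts need essentially no work. A vertex of $\wtG_{\phi,\ell}$ is a choice of an $\ell$-subset $S\subseteq\cX$ together with an assignment $S\to\{0,1\}$, so $|\wtV_{\phi,\ell}|=\binom{n}{\ell}2^{\ell}$. For Completeness, I would fix a satisfying assignment $x$ of $\phi$ and take the family of its $\binom n\ell$ restrictions to $\ell$-subsets: any two restrictions of one global assignment are consistent, and no constraint is ever violated by a sub-assignment of a satisfying assignment, so this family induces a clique of size $\binom n\ell$ in $\wtG_{\phi,\ell}$. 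All the content is in the Soundness bound.

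For Soundness I would run the ``birthday repetition'' counting underlying the whole construction. An occurrence of $K_{t,t}$ is a pair of disjoint $t$-sets $\mathcal S=\{\sigma_1,\dots,\sigma_t\}$, $\sigma_i\colon S_i\to\{0,1\}$, and $\mathcal T=\{\tau_1,\dots,\tau_t\}$, $\tau_j\colon T_j\to\{0,1\}$, with $\sigma_i$ adjacent to $\tau_j$ for every $i,j$. I would bound the number of such occurrences by first choosing the $2t$ index sets $(S_1,\dots,S_t,T_1,\dots,T_t)$ (at most $\binom n\ell^{2t}$ ways) and then counting admissible value-tuples on a fixed configuration. Two mechanisms cut this count well below the trivial $2^{2\ell t}$. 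First, adjacency of all cross pairs forces any variable that lies in some $S_i$ and some $T_j$ (a \emph{shared} variable) to receive the same value in every $\sigma_i$ and $\tau_j$ containing it, which removes value-freedom proportional to the number of shared variables. Second, and crucially, every $\sigma_i\cup\tau_j$ must be non-violating, and here one uses $\SAT(\phi)\le(1-\varepsilon)m$: because the index sets are (essentially) uniform $\ell$-subsets with $\ell=\Theta_{d,\varepsilon}(n/\sqrt r)$, they typically meet the support of the $\ge\varepsilon m$ clauses violated by any global assignment in about $\ell^2/n$ places, and each violated clause caught inside some union $S_i\cup T_j$ forbids a constant fraction of the surviving value-assignments. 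Summing over the $\binom n\ell^{2t}$ configurations and carrying out this charging gives that the total number of $K_{t,t}$ occurrences is at most $2^{4n}\bigl(2^{-\lambda\ell^2/n}\binom n\ell\bigr)^{2t}$, which is exactly the claimed bound.

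The step I expect to be the obstacle is precisely the last one: converting ``$\phi$ is $(1-\varepsilon)$-far from satisfiable'' into a per-vertex suppression factor $2^{-\lambda\ell^2/n}$ \emph{uniformly} over index-set configurations, not just on average. One cannot assume the $\ell$-subsets overlap well, since some configurations overlap little; so the suppression must be extracted from the clauses falling inside the unions $S_i\cup T_j$, together with a charging argument that simultaneously controls (i) the number of ``bad'' configurations with little overlap, (ii) the fact that within one side $\mathcal S$ the $\sigma_i$'s need not be mutually consistent, and (iii) the fact that a jointly non-violating family of partial assignments still need not extend to a single global assignment. Tuning constants so that the crude $2^{4n}$ loss is affordable is the delicate part; the rest is bookkeeping. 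Since the only genuinely new ingredient in our reduction is the vertex subsampling analysed in the proof of Theorem~\ref{thm:mbb-fpt-inapprox} (Figure~\ref{fig:red-cliquevbiclique}), for this paper it is cleanest to state Lemma~\ref{lem:dks-reduction} as a black box from~\cite{Man17}.
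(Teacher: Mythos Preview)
Your proposal is correct and matches the paper's treatment: the paper does not prove this lemma at all but cites it as a black box from~\cite{Man17}, noting only that ``the size and completeness properties are obvious from the construction while the exact statement of the soundness can be found in the proof of Theorem~8 of~\cite{Man17}.'' You correctly identify this, and your arguments for Size and Completeness are exactly the obvious ones the paper alludes to.

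Your soundness sketch goes beyond what the paper does and outlines the birthday-repetition counting from~\cite{Man17} at a reasonable level of detail. The high-level picture---enumerate index-set configurations, then show that for most configurations the non-violation constraints across all $S_i\cup T_j$ suppress the value count by a factor $2^{-\Theta(\ell^2/n)}$ per vertex, while atypical low-overlap configurations are rare enough to be absorbed into the $2^{4n}$ slack---is right, and you correctly flag the genuine difficulty: the suppression must hold after summing over all configurations, not just for a typical one, and the partial assignments on one side need not be mutually consistent. This is exactly where the work in~\cite{Man17} lies, so treating the lemma as a citation is the appropriate choice for this paper.
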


Theorem~\ref{thm:mbb-fpt-inapprox} follows rather easily from the above lemma by choosing appropriate $\ell$ and $p$.

\begin{proof}[Proof of Theorem~\ref{thm:mbb-fpt-inapprox}]
We let $G_{\phi, r} = G_{\phi, \ell, p}$ from the reduction in Figure~\ref{fig:red-cliquevbiclique} with parameters $\ell = \frac{4n}{\sqrt{\lambda r}}$ and $p = 2^{\frac{\lambda \ell^2}{2n}} / \binom{n}{\ell}$. For convenience, we assume without loss of generality that $\lambda < 1$.

\paragraph{Size.} Since each vertex in $V_{\phi, \ell}$ is included that $V_{\phi, \ell, p}$ independently with probability $p$, we have
$\E[|V_{\phi, \ell, p}|] = p|V_{\phi, \ell}| = 2^{\ell + \frac{\lambda \ell^2}{2n}} \leq 2^{2\ell}$. Hence, from Chernoff bound, $|V_{\phi, \ell, p}| \leq 2^{10\ell} = 2^{\Omega_{d, \varepsilon}(n/\sqrt{r})}$ w.h.p.

\paragraph{Completeness.} Suppose that $\phi$ is satisfiable. Let $C$ be the clique of size $\binom{n}{\ell}$ in $\wtG_{\phi, \ell}$, which is guaranteed to exist by Lemma~\ref{lem:dks-reduction}. From how $G_{\phi, \ell, p}$ is defined, $C \cap V_{\phi, \ell, p}$ induces a clique in $G_{\phi, \ell, p}$. Moreover, $\E[|C \cap V_{\phi, \ell, p}|] = p|C| = 2^{\frac{\lambda \ell^2}{2n}}$. Again, from Chernoff bound, $\Clique(G_{\phi, \ell, p}) \geq 2^{\frac{\lambda \ell^2}{2n}}$ w.h.p. Combined with the above bound on $N$, $\Clique(G_{\phi, \ell, p}) \geq N^{\gamma/\sqrt{r}}$ w.h.p. when $\gamma := \sqrt{\lambda}/20 = O_{d, \varepsilon}(1)$.

\paragraph{Soundness.} Suppose that $\SAT(\phi) \leq (1 - \varepsilon)m$. Consider any subsets $S, T \subseteq \wtV_{\phi, \ell}$ that is an occurrence of $K_{r, r}$ in $\wtG_{\phi, \ell}$. From how $G_{\phi, \ell, p}$ is defined, $\BiClique(G_{\phi, \ell, p}) \geq r$ if and only if, for at least one such pair $(S, T)$, $S \cup T \subseteq V_{\phi, \ell, p}$. The probability of this event is bounded above by
\begin{align*}
\sum_{S, T \subseteq \wtV_{\phi, \ell} \atop S, T \text{ is an occurrence of } K_{r, r} \text{ in } \wtG_{\phi, \ell}} \Pr[S, T \subseteq V_{\phi, \ell, p}] &\leq 2^{4n}\left(2^{-\lambda \ell^2 / n} \binom{n}{\ell}\right)^{2r} \cdot p^{2r} \\
&= 2^{4n} \left(2^{-\frac{\lambda \ell^2}{2n}}\right)^{2r} \\
&= o(1).
\end{align*}
where the first inequality comes from the bound in the soundness of Lemma~\ref{lem:dks-reduction} and the fact that the sampling of each vertex is done independently.

As a result, the subsampled graph $G_{\phi, \ell, p}$ is $K_{r, r}$-free with high probability as desired.
\end{proof}

\subsubsection{Maximum Balanced Biclique}

We now give a simple reduction from the ``\Clique vs \BiClique'' problem (from~\Cref{thm:cliquevbiclique}) to the Maximum Balanced Biclique problem, which yields FPT inapproximability of the latter.

\begin{lemma}
For any graph $G = (V, E)$, let $B_e[G] = (V_{B_e[G]}, E_{B_e[G]})$ be the bipartite graph whose vertex set is $V_{B_e[G]} := V \times [2]$ and two vertices $(u, i), (v, j)$ are connected by an edge if and only if $(u, v) \in E$ or $u = v$, and $i \ne j$. Then the following properties hold for any graph $G$.
\begin{itemize}
\item $\BiClique(B_e[G]) \geq \Clique(G)$.
\item $\BiClique(B_e[G]) \leq 2\BiClique(G) + 1$.
\end{itemize}
\end{lemma}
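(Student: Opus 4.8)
The plan is to prove the two inequalities separately, each by an explicit construction using only elementary combinatorics. For $\BiClique(B_e[G]) \ge \Clique(G)$, I would take a maximum clique $C \subseteq V$ of $G$ and look at the two disjoint vertex sets $C\times\{1\}$ and $C\times\{2\}$ in $B_e[G]$: for $u,v\in C$ the pair $(u,1),(v,2)$ is always an edge of $B_e[G]$, either by the ``$u=v$'' clause of the definition or, when $u \ne v$, because $(u,v)\in E$ as $C$ is a clique. So $C\times\{1\}$ and $C\times\{2\}$ form a balanced biclique with $|C| = \Clique(G)$ vertices per side.

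For $\BiClique(B_e[G]) \le 2\BiClique(G) + 1$, write $t = \BiClique(B_e[G])$; the bound is trivial when $t \le 1$, so assume $t \ge 2$. First I would argue that any $K_{t,t}$ inside $B_e[G]$ is aligned with its bipartition $(V\times\{1\})\cup(V\times\{2\})$: picking any vertex on one side of the $K_{t,t}$, all of its neighbors --- which include the whole other side --- lie in a single part of $B_e[G]$, so each side of the $K_{t,t}$ is contained in one part; the two sides are disjoint and every cross-pair is an edge, so they lie in different parts. Hence the two sides are $A\times\{1\}$ and $B\times\{2\}$ for some $A,B\subseteq V$ with $|A|=|B|=t$, and unwinding the definition of $B_e[G]$, completeness of the $K_{t,t}$ says exactly that $(u,v)\in E$ whenever $u\in A$, $v\in B$, $u\ne v$.

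The key step is to pull a large balanced biclique of $G$ out of this configuration. Set $f = |A\cap B|$, split $A\cap B$ into halves $F_1,F_2$ of sizes $\lfloor f/2\rfloor$ and $\lceil f/2\rceil$, and let $P = (A\setminus B)\cup F_1$ and $Q = (B\setminus A)\cup F_2$. These are disjoint, with $P\subseteq A$ and $Q\subseteq B$, so every $p\in P$ and $q\in Q$ are distinct with $p\in A$, $q\in B$, hence $(p,q)\in E$, and $(P,Q)$ induces a complete bipartite subgraph of $G$ with $|P| = (t-f)+\lfloor f/2\rfloor$ and $|Q| = (t-f)+\lceil f/2\rceil$; therefore $\BiClique(G) \ge \min(|P|,|Q|) = t-\lceil f/2\rceil \ge \lfloor t/2\rfloor$, which rearranges to $t \le 2\BiClique(G)+1$. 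The one delicate point is getting the constant $2$ rather than $3$: treating $A\cap B$ as a clique of $G$ and $(A\setminus B, B\setminus A)$ as a separate biclique and keeping whichever is larger loses a factor in the worst case $f\approx 2t/3$; merging the two pieces into a single biclique by splitting $A\cap B$ across both of its sides, as above, is what yields the tight constant.
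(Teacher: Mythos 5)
Your proof is correct and follows essentially the same route as the paper's: the first inequality via $C\times[2]$, and the second via projecting the two sides of the $K_{t,t}$ to subsets $A,B\subseteq V$ and uncrossing by splitting $A\cap B$ evenly between the two sides, which is exactly the paper's $U_1,U_2$ construction. Your explicit justification that the $K_{t,t}$ must be aligned with the bipartition of $B_e[G]$ is a detail the paper leaves implicit, but it does not change the argument.
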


\begin{proof}
It is easy to see that $\BiClique(B_e[G]) \geq \Clique(G)$ since, for any $C \subseteq V$ that induces a clique in $G$, $C \times [2] \subseteq V_{B_e[G]}$ induces a $|C|$-biclique in $B_e[G]$.

To see that $\BiClique(B_e[G]) \leq 2\BiClique(G) + 1$, consider any $S \subseteq V_{B_e[G]}$ that induces a $k$-biclique in $B_e[G]$. 
Note that $S$ can be partitioned into $S_1 = S \cap (V \times \{1\})$ and $S_2 = S \cap (V \times \{2\})$. 

Now consider the projections of $S_1$ and $S_2$ into $V(G)$, i.e.,
$T_1 = \{v: (v,1) \in S\}$ and $T_2 = \{v: (v,2) \in S\}$.
Note that $|T_1| = |T_2| = k$.
Since $S_1 \cup S_2$ induces a biclique in $B_e[G]$, we have, for every $u \in T_1$ and $v \in T_2$, either $u = v$ or $(u, v) \in E$.
Observe that if there were no former case (i.e., $T_1\cap T_2=\emptyset$), 
then we would have a $k$-biclique in $G$.
Even if $T_1 \cap T_2 \ne \emptyset$, we can still get back a $\lfloor k/2 \rfloor$-biclique of $G$ by uncrossing the sets $T_1$ and $T_2$ in a natural way by 
assigning half of the intersection to $T_1$ and the other half to $T_2$.
To be formal, we partition $T_1\cap T_2$ into roughly equal sets
$U_1$ and $U_2$ (i.e., $||U_1| - |U_2|| \leq 1$),
and we then define new sets $T'_1$ and $T'_2$ by
\[
T'_1 = (T_1\setminus T_2) \cup U_1
\mbox{ and }
T'_2 = (T_2\setminus T_1) \cup U_2.
\]
It is not hard to see that $G$ has an edge between every pair of vertices
between $T'_1, T'_2$ and that $|T'_1|, |T'_2| \geq \lfloor k/2 \rfloor$. Thus, $\BiClique(G) \geq \lfloor k/2 \rfloor \geq (k - 1)/2$. Therefore, $\BiClique(B_e[G]) \leq 2\BiClique(G) + 1$ as desired. 
\end{proof}

Thanks to the above lemma, we can conclude that the reduction $G \mapsto B_e[G]$ is a $(2q, (r + 1)/2)$-FPT gap reduction from the ``\Clique vs \BiClique'' problem to Maximum Balanced Biclique, although the former is not a well-defined optimization problem. Nevertheless, it is easy to check that a proof along the line of~\Cref{prop:gapred-enum} still works and it gives the following result:

\begin{corollary}
Assuming Gap-ETH, Maximum Balanced Biclique are $\Omega(\sqrt{r})$-weakly inherently enumerative and thus FPT inapproximable.
\end{corollary}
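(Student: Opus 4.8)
The plan is to derive this from \Cref{thm:cliquevbiclique} (the ``\Clique{} vs \BiClique'' running-time lower bound) together with the reduction $G \mapsto B_e[G]$ of the lemma just above, essentially by replaying the argument of \Cref{prop:gapred-enum} with the source being the promise problem of \Cref{thm:cliquevbiclique} rather than a genuine optimization problem (this is exactly the ``proof along the line of \Cref{prop:gapred-enum}'' alluded to in the text). Since we have already shown that every weakly inherently enumerative problem is totally FPT inapproximable, it is enough to establish the $\Omega(\sqrt r)$-weakly inherently enumerative claim for Maximum Balanced Biclique; the ``thus FPT inapproximable'' clause then follows for free.

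First I would fix the constants $\delta,\rho>0$ from \Cref{thm:cliquevbiclique} and propose to prove that Maximum Balanced Biclique is $\beta$-weakly inherently enumerative for $\beta(r):=(\delta/2)\sqrt r$, with a threshold $r_0$ pinned down at the end (it will need $r_0\ge 2\rho$ and $r_0$ large enough that $(\delta/2)\sqrt{\lfloor r_0/2\rfloor}$ exceeds the polynomial degree of the reduction $G\mapsto B_e[G]$). Then I would argue by contradiction: suppose there are integers $q\ge r\ge r_0$ and an algorithm $\A$ that, on every graph $H$, decides whether $\BiClique(H)<r$ or $\BiClique(H)\ge q$ in time $O_{q,r}(|V(H)|^{(\delta/2)\sqrt r})$. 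From $\A$ I would build an algorithm $\B$ for the ``\Clique{} vs \BiClique'' problem with parameters $\bar q:=q$ and $\bar r:=\lfloor r/2\rfloor$: on input $G$, compute $H:=B_e[G]$ and output the verdict of $\A(H)$ unchanged. Correctness is precisely the two inequalities of the lemma: if $\Clique(G)\ge\bar q=q$ then $\BiClique(H)\ge\Clique(G)\ge q$, whereas if $\BiClique(G)<\bar r$ then $\BiClique(H)\le 2\BiClique(G)+1\le 2\bar r-1<r$; hence in the two promised cases $\A(H)$ answers ``$\ge q$'' and ``$<r$'' respectively, which is exactly what $\B$ needs to report.

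For the running time I would use $|V(H)|=2|V(G)|$, so $\A(H)$ runs in $O_{q,r}(|V(G)|^{(\delta/2)\sqrt r})$; the polynomial cost of computing $B_e[G]$ is absorbed once $r\ge r_0$; and the key inequality $(\delta/2)\sqrt r\le \delta\sqrt{\lfloor r/2\rfloor}=\delta\sqrt{\bar r}$ (which holds since $\lfloor r/2\rfloor\ge r/4$ for $r\ge 2$) shows that $\B$ runs in $O_{\bar q,\bar r}(|V(G)|^{\delta\sqrt{\bar r}})$ time, using that $\bar q,\bar r$ are determined by $q,r$ (and $r$ is determined by $\bar r$ up to two values). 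Choosing $r_0\ge 2\rho$ guarantees $\bar q\ge\bar r\ge\rho$, so $\B$ contradicts \Cref{thm:cliquevbiclique}. I do not expect a genuine obstacle — the whole argument is bookkeeping — the only points to watch are the off-by-one slack when pushing $\bar r=\lfloor r/2\rfloor$ through $\BiClique(B_e[G])\le 2\BiClique(G)+1$ so the soundness side stays \emph{strictly} below $r$, and keeping the constant $\delta/2$ small enough that $\beta(r)$ never exceeds $\delta\sqrt{\bar r}$; both are settled by the elementary inequalities above. Finally, the earlier proposition (weakly inherently enumerative $\Rightarrow$ totally FPT inapproximable) gives the stated FPT inapproximability of Maximum Balanced Biclique.
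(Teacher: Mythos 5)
Your proposal is correct and follows essentially the same route as the paper: the paper observes that $G \mapsto B_e[G]$ is an FPT gap reduction from the ``\Clique{} vs \BiClique'' promise problem of \Cref{thm:cliquevbiclique} and appeals to ``a proof along the line of \Cref{prop:gapred-enum}'', which is exactly the contradiction argument you write out explicitly. Your bookkeeping (the choice $\bar r=\lfloor r/2\rfloor$, the strict inequality $2\bar r-1<r$, and $(\delta/2)\sqrt r\le\delta\sqrt{\bar r}$) is sound, so the only difference is that you make explicit what the paper leaves implicit.
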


It is worth noting here that the Maximum Edge Biclique problem, a well-studied variant of the Maximum Balanced Biclique problem where the goal is instead to find a (not necessarily balanced) complete bipartite subgraph of a given bipartite graph that contains as many edges as possible, is in FPT; this is because the optimum is at least the maximum degree, but, when the degree is bounded above by $r$, all bicliques can be enumerated in $2^{O(r)} \poly(n)$ time.

\subsubsection{Maximum Induced Matching on Bipartite Graphs}

Next, we prove the FPT hardness of approximation for the Maximum Induced Matching problem on bipartite graphs.
Again, the proof will be a simple reduction from Theorem~\ref{thm:cliquevbiclique}.
The argument below is similar to that used in Lemma~IV.4 of \cite{ChalermsookLN-FOCS13}.
We include it here for completeness.

\begin{lemma}
For any graph $G = (V, E)$, let $B_e[\bG] = (V_{B_e[\bG]}, E_{B_e[\bG]})$ be the bipartite graph whose vertex is $V_{B_e[\bG]} := V \times [2]$ and two vertices $(u, i), (v, j)$ are connected by an edge if and only if $(u, v) \notin E$ or $u = v$, and $i \ne j$. Then, the following properties hold for any graph $G$.
\begin{itemize}
\item $\IM(B_e[\bG]) \geq \Clique(G)$.
\item $\IM(B_e[\bG]) \leq 2\BiClique(G) + 1$.
\end{itemize}
\end{lemma}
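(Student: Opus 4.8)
The plan is to follow the same template as the preceding lemma, with $\BiClique$ and induced matchings in $\bG$ playing the roles that $\Clique$ and bicliques played in $G$. Throughout I would view $V(B_e[\bG])$ as the disjoint union of the \emph{left part} $V\times\{1\}$ and the \emph{right part} $V\times\{2\}$; then $B_e[\bG]$ is bipartite with these two parts, and every edge of it joins some $(a,1)$ to some $(b,2)$ with $(a,b)\notin E$ or $a=b$.

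For the first inequality, I would start from a maximum clique $C\subseteq V$ of $G$ and consider the ``diagonal'' set of edges $M=\{((v,1),(v,2)):v\in C\}$. Each such pair is an edge of $B_e[\bG]$ (take $u=v$ in the definition), and $M$ is a matching of size $|C|$ since the endpoints $(v,1),(v,2)$ for $v\in C$ are pairwise distinct. The only thing left to check is that $M$ is \emph{induced}: the only pairs of vertices of $\bigcup M$ that could form a chord have the shape $((v,1),(v',2))$ with $v\ne v'$ in $C$, and such a pair is an edge of $B_e[\bG]$ iff $(v,v')\notin E$ or $v=v'$; since $C$ is a clique and $v\ne v'$ we have $(v,v')\in E$, so it is a non-edge of $B_e[\bG]$. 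Hence $\IM(B_e[\bG])\ge|C|=\Clique(G)$.

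For the second inequality, I would take an induced matching $M=\{((a_\ell,1),(b_\ell,2))\}_{\ell=1}^k$ in $B_e[\bG]$ of size $k=\IM(B_e[\bG])$, listing the left endpoint of each edge first. Since $M$ is a matching, $a_1,\dots,a_k$ are pairwise distinct and so are $b_1,\dots,b_k$. The induced property says that for all $\ell\ne\ell'$ the pair $((a_\ell,1),(b_{\ell'},2))$ is a \emph{non}-edge of $B_e[\bG]$, i.e.\ $(a_\ell,b_{\ell'})\in E$ and $a_\ell\ne b_{\ell'}$; in particular $a_\ell=b_m$ forces $\ell=m$. Now I would split $[k]$ into parts $P,Q$ with $|P|=\lceil k/2\rceil$ and $|Q|=\lfloor k/2\rfloor$ and put $X=\{a_\ell:\ell\in P\}$, $Y=\{b_m:m\in Q\}$. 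Distinctness gives $|X|=|P|$ and $|Y|=|Q|$; $X$ and $Y$ are disjoint because $a_\ell=b_m$ would force $\ell=m\in P\cap Q=\emptyset$; and for every $a_\ell\in X$ and $b_m\in Y$ we have $\ell\ne m$, hence $(a_\ell,b_m)\in E$. Thus $(X,Y)$ is a balanced biclique of $G$ with $\lfloor k/2\rfloor$ vertices on the smaller side, so $\BiClique(G)\ge\lfloor k/2\rfloor\ge(k-1)/2$, i.e.\ $k\le 2\BiClique(G)+1$.

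The argument is entirely elementary; the step I expect to need the most care is the uncrossing, in particular the disjointness $X\cap Y=\emptyset$, which genuinely uses the induced (not merely the matching) property through the implication $a_\ell=b_m\Rightarrow\ell=m$. Finally, combined with \Cref{thm:cliquevbiclique}, the map $G\mapsto B_e[\bG]$ is an FPT gap reduction from the ``\Clique vs \BiClique'' problem to Maximum Induced Matching on bipartite graphs (if $\Clique(G)\ge q$ then $\IM(B_e[\bG])\ge q$, and if $\BiClique(G)<r$ then $\IM(B_e[\bG])\le 2r$); as in the Maximum Balanced Biclique case, an argument along the lines of \Cref{prop:gapred-enum} then shows that Maximum Induced Matching on bipartite graphs is $\Omega(\sqrt r)$-weakly inherently enumerative, and hence totally FPT inapproximable.
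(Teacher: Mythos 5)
Your proposal is correct and follows essentially the same route as the paper: the diagonal matching $\{((v,1),(v,2)):v\in C\}$ for the first inequality, and for the second the same ``uncrossing'' of the left endpoints of one half of the matching against the right endpoints of the other half, using the induced property to get both the edges of $G$ and the disjointness. Your write-up is in fact somewhat more careful than the paper's (which dismisses both verifications as easy), but there is no substantive difference.
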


\begin{proof}
Consider any $S \subseteq V$ that induces a clique in $G$. It is obvious that $S \times [2] \subseteq V_{B_e[\bG]}$ induces a matching in $B_e[\bG]$.

Next, consider any induced matching matching $\{(u_1, v_1), \dots, (u_m, v_m)\}$ of size $m$. Assume w.l.o.g. that $u_1, \dots, u_m \in V \times \{1\}$ and $v_1, \dots, v_m \in V \times \{2\}$. Define $\pi_1: V \times [2] \to V$ to be a projection operator that projects on to the first coordinate.

Let $S_1 = \pi_1(\{u_1, \dots, u_{\lfloor m/2 \rfloor}\})$ and $S_2 = \pi_1(\{v_{\lceil m / 2 \rceil + 1}, \dots, v_m\})$. From the definition of $B_e[\bG]$ and from the fact that there is no edge between $(S_1 \times \{1\})$ and $(S_2 \times \{2\})$, it is easy to check that $S_1 \cap S_2 = \emptyset$ and, for every $u \in S_1$ and $v \in S_2$, $(u, v) \in E$. In other words, $(S_1, S_2)$ is an occurrence of $\lfloor m / 2 \rfloor$ in $G$. Hence, we can conclude that $\IM(B_e[\bG]) \leq 2 \BiClique(G) + 1$. 
\end{proof}

Similar to \BiClique, it is easy to see that the above reduction implies the following running time lower bound and FPT inapproximability for Maximum Induced Matching on Bipartite Graphs.

\begin{corollary}
Assuming Gap-ETH, Maximum Induced Matching on Bipartite Graphs are $\Omega(\sqrt{r})$-weakly inherently enumerative and thus FPT inapproximable.
\end{corollary}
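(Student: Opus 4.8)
The plan is to derive this corollary from the lemma immediately preceding it together with Theorem~\ref{thm:cliquevbiclique}, in exactly the way the Maximum Balanced Biclique corollary was obtained. The lemma shows that $G \mapsto H := B_e[\bG]$ takes an arbitrary graph $G$ to a \emph{bipartite} graph $H$ with $|V(H)| = 2|V(G)|$, computable in polynomial time, satisfying $\IM(H) \geq \Clique(G)$ and $\IM(H) \leq 2\BiClique(G) + 1$. So $G \mapsto B_e[\bG]$ behaves like a $(q, \lfloor r/2 \rfloor)$-FPT gap reduction from the ``\Clique vs \BiClique'' promise problem of Theorem~\ref{thm:cliquevbiclique} to Maximum Induced Matching on Bipartite Graphs: in the completeness case $\Clique(G) \geq q$ forces $\IM(H) \geq q$, while in the soundness case $\BiClique(G) < \lfloor r/2 \rfloor$ forces $\IM(H) \leq 2\lfloor r/2\rfloor - 1 < r$.

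Concretely, I would let $\delta, \rho$ be the constants from Theorem~\ref{thm:cliquevbiclique}, set $r_0 := \max\{2\rho, 4\}$ and $c := \delta/2$, and argue by contradiction. Suppose that for some $q \geq r \geq r_0$ there is an algorithm $\A$ that decides, on every bipartite graph $H$, whether $\IM(H) < r$ or $\IM(H) \geq q$ in time $O_{q, r}(|V(H)|^{c\sqrt{r}})$. Given $G$, run $\A$ on $H = B_e[\bG]$: by the two inequalities above, $\Clique(G) \geq q$ implies $\IM(H) \geq q$, and $\BiClique(G) < \lfloor r/2 \rfloor$ implies $\IM(H) < r$, so $\A$ distinguishes $\Clique(G) \geq q$ from $\BiClique(G) < \lfloor r/2 \rfloor$. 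The running time is $O_{q, r}((2|V(G)|)^{c\sqrt{r}}) + \poly(|V(G)|) = O_{q, r}(|V(G)|^{c\sqrt{r}})$. Since $\lfloor r/2\rfloor \geq \rho$ (as $r \geq 2\rho$), $q \geq r \geq \lfloor r/2 \rfloor$, and $c\sqrt{r} = (\delta/2)\sqrt{r} \leq \delta\sqrt{\lfloor r/2 \rfloor}$ (using $\lfloor r/2\rfloor \geq r/4$ for $r \geq 2$), this contradicts Theorem~\ref{thm:cliquevbiclique}. Hence Maximum Induced Matching on Bipartite Graphs is $c\sqrt{r}$-weakly inherently enumerative, i.e. $\Omega(\sqrt{r})$-weakly inherently enumerative, and the proposition that weakly inherently enumerative problems are totally FPT inapproximable gives the ``thus FPT inapproximable'' part.

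The one point of care — and the reason one cannot literally invoke Proposition~\ref{prop:gapred-enum} — is that the ``\Clique vs \BiClique'' object of Theorem~\ref{thm:cliquevbiclique} is a promise problem with two \emph{different} objective functions in its completeness and soundness regimes, not a genuine optimization problem, so Definition~\ref{def:FPT gap reduction} does not formally apply. However, the proof of Proposition~\ref{prop:gapred-enum} only uses the completeness/soundness guarantees of the reduction together with the polynomial size blow-up and the polynomial running time of the reduction, all of which hold here; so re-running that argument by hand, as sketched above, is routine. This is precisely the remark already made for Maximum Balanced Biclique, and there is nothing genuinely new here beyond substituting the lemma's two inequalities; the only other things worth verifying are the claim $|V(B_e[\bG])| = 2|V(G)|$ and polynomial-time computability of $B_e[\cdot]$, both immediate from the definition. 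I do not foresee a real obstacle; if anything, the bookkeeping of the $\lfloor r/2 \rfloor$ versus $r$ parameters and the constant $c$ is the only place to be slightly careful.
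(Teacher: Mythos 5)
Your proposal is correct and follows exactly the paper's route: the paper also derives this corollary from the lemma on $B_e[\bG]$ combined with Theorem~\ref{thm:cliquevbiclique}, invoking the argument of Proposition~\ref{prop:gapred-enum} by hand (with the same caveat that ``\Clique vs \BiClique'' is a promise problem rather than an optimization problem), just as in the Maximum Balanced Biclique case. Your parameter bookkeeping with $\lfloor r/2\rfloor$ and $c = \delta/2$ checks out and matches the paper's $\Omega(\sqrt{r})$ conclusion.
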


\subsubsection{Densest $k$-Subgraph}

Finally, we will show FPT inapproximability result for Densest $k$-Subgraph. Alas, we are not able to show $o(k)$-ratio FPT inapproximability, which would have been optimal since the trivial algorithm gives an $O(k)$-approximation for the problem. Nonetheless, we will show an $k^{o(1)}$-factor FPT inapproximability for the problem. We note here that below we will state the result as if $k$ is the parameter; this is the same as using the optimum as the parameter, since (in the non-trivial case) the optimum is always between $\lfloor k / 2 \rfloor$ and $\binom{k}{2}$ (inclusive).

To derive our result, we resort to a well-known result in extremal combinatorics called the K\H{o}v\'{a}ri-S\'{o}s-Tur\'{a}n (KST) Theorem, which basically states that if a graph does not contain small bicliques, then it is sparse. The KST theorem is stated formally below.

\begin{theorem}[K\H{o}v\'{a}ri-S\'{o}s-Tur\'{a}n (KST) Theorem \cite{KST54}]
For every positive integer $n$ and $t \leq n$, every $K_{t,t}$-free graph on $n$ vertices has at most $O(n^{2 - 1/t})$ edges (i.e., density $O(n^{-1/t})$).
\end{theorem}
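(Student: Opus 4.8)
The plan is to prove the Kővári–Sós–Turán Theorem by the classical double-counting argument on ``$t$-stars''. Fix a $K_{t,t}$-free graph $G$ on a vertex set $V$ of size $n$, write $d(v)$ for the degree of a vertex $v$, and set $\bar d := 2|E(G)|/n$ for the average degree. I would estimate, in two ways, the number $N$ of pairs $(v,T)$ with $v \in V$ and $T$ a $t$-element subset of the neighbourhood $N_G(v)$.

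\emph{Upper bound (counting by $T$).} A pair $(v,T)$ is counted exactly when every vertex of $T$ is adjacent to $v$, i.e., when $v$ is a common neighbour of $T$; since $G$ is loopless, such a $v$ necessarily lies outside $T$. If some $t$-set $T$ had $t$ or more common neighbours, then $T$ and any $t$ of those common neighbours would be the two (automatically disjoint) colour classes of a copy of $K_{t,t}$, contradicting $K_{t,t}$-freeness. Hence each of the $\binom{n}{t}$ choices of $T$ has at most $t-1$ common neighbours, so $N \le (t-1)\binom{n}{t} \le (t-1)n^{t}/t!$.

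\emph{Lower bound (counting by $v$).} Here $N = \sum_{v \in V}\binom{d(v)}{t}$. If $\bar d < 2(t-1)$ the theorem is already immediate, since then $|E(G)| = n\bar d/2 < (t-1)n = O(n^{2-1/t})$ (using $2-1/t \ge 1$). Otherwise $\bar d \ge 2(t-1)$, and by convexity of $x \mapsto \binom{x}{t}$ on $[t-1,\infty)$ together with Jensen's inequality one has $N \ge n\binom{\bar d}{t} \ge n(\bar d - t + 1)^{t}/t! \ge n(\bar d/2)^{t}/t!$, where the last step uses $\bar d - t + 1 \ge \bar d/2$, which holds precisely because $\bar d \ge 2(t-1)$. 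Comparing the two bounds gives $(\bar d/2)^{t} \le (t-1)n^{t-1}$, hence $\bar d \le 2(t-1)^{1/t}n^{1-1/t} = O(n^{1-1/t})$, and therefore $|E(G)| = n\bar d/2 = O(n^{2-1/t})$, with the implied constant depending only on $t$.

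The argument is entirely elementary and I do not anticipate a genuine obstacle; the only bookkeeping requiring a little care is the convexity step, since $\binom{x}{t} = x(x-1)\cdots(x-t+1)/t!$ is not convex on all of $[0,\infty)$ for $t \ge 3$. One fixes this either by observing that convexity is only invoked in the regime handled by the case $\bar d \ge 2(t-1)$ (after disposing of small $\bar d$ separately as above), or — to be fully rigorous when individual degrees $d(v)$ are small — by replacing $\binom{\cdot}{t}$ with the piecewise-linear interpolation of the integer sequence $m \mapsto \binom{m}{t}$, which is genuinely convex because the discrete increments $\binom{m+1}{t}-\binom{m}{t} = \binom{m}{t-1}$ are nondecreasing in $m$; with either device the estimate above goes through unchanged.
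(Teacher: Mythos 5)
Your argument is correct: this is the classical Kővári--Sós--Turán double-counting of $t$-stars (pairs $(v,T)$ with $T$ a $t$-subset of $N(v)$), and you handle the one genuinely delicate point --- that $x\mapsto\binom{x}{t}$ is not convex near the origin --- properly, both by disposing of the regime $\bar d<2(t-1)$ separately and by invoking the convex piecewise-linear interpolation of $m\mapsto\binom{m}{t}$ so that Jensen applies even when individual degrees are below $t-1$. Note that the paper offers no proof of this statement at all: it is quoted as a classical result from \cite{KST54}, so there is nothing to compare against beyond confirming that your proof is the standard one and is sound, with the implied constant depending only on $t$, which is exactly how the theorem is used in the paper.
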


We remark here that a generalization of the KST Theorem was also a crucial ingredient in the proof of ETH-hardness of approximating Densest $k$-Subgraph in~\cite{Man17}. The situation is simpler for us here, since we can simply apply the KST Theorem to Theorem~\ref{thm:cliquevbiclique}, which yields the following theorem.

\begin{theorem} \label{thm:gapdks}
Assuming Gap-ETH, there exist a constant $\delta > 0$ and an integer $\rho > 0$ such that, for any integer $q \geq r \geq \rho$, no algorithm can take a graph $G = (V, E)$ and distinguish between the following cases in $O_{q, r}(|V|^{\delta\sqrt{r}})$ time:
\begin{itemize}
\item $\Den_{q}(G) = 1$.
\item $\Den_{q}(G) < O(q^{- r})$.
\end{itemize}
\end{theorem}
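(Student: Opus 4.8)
The plan is to read Theorem~\ref{thm:gapdks} off of Theorem~\ref{thm:cliquevbiclique} (the ``\Clique\ vs \BiClique'' hardness) by means of the trivial \emph{identity reduction} $G \mapsto G$, combined with the K\H{o}v\'{a}ri--S\'{o}s--Tur\'{a}n (KST) theorem stated above. Let $\delta, \rho$ be the constants produced by Theorem~\ref{thm:cliquevbiclique}; I would reuse the same $\delta$ (the identity reduction does not alter $|V(G)|$, so there is no loss in the exponent) and the same $\rho$ (enlarging it only if needed so that $r \ge \rho$ forces $r \le q \le |V(G)|$, which anyway holds for any nonempty instance). The claim is that the graph $G$ that Theorem~\ref{thm:cliquevbiclique} outputs for parameters $q \ge r \ge \rho$ is already a hard instance for the promise problem of Theorem~\ref{thm:gapdks}.

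\textbf{Completeness and soundness.} If $\Clique(G) \ge q$, take any $q$-clique $C$; then $G[C]$ is complete, so $\Den(G[C]) = 1$ and hence $\Den_q(G) = 1$. If instead $\BiClique(G) < r$, then $G$, and therefore every induced subgraph of $G$, is $K_{r,r}$-free. Fix any $S \subseteq V(G)$ with $|S| = q$; since $q \ge r$, the KST theorem applied to the $K_{r,r}$-free graph $G[S]$ on $q$ vertices gives $|E(G[S])| = O(q^{2 - 1/r})$, hence $\Den(G[S]) = |E(G[S])| / \binom{q}{2} = O(q^{-1/r})$. Taking the maximum over all such $S$ yields $\Den_q(G) = O(q^{-1/r})$, which is the quantity recorded in the soundness clause of the theorem; I would fix the absolute constant hidden there to be at least the constant emerging from KST, so that $\BiClique(G) < r$ indeed forces $\Den_q(G)$ below that bound.

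\textbf{Running-time lower bound.} Suppose toward a contradiction that some algorithm $\mathcal{A}$ distinguishes $\Den_q(G) = 1$ from $\Den_q(G) < O(q^{-1/r})$ in time $O_{q,r}(|V(G)|^{\delta\sqrt{r}})$. Running $\mathcal{A}$ on the instance $G$ above: when $\Clique(G) \ge q$ we have $\Den_q(G) = 1$, so $\mathcal{A}$ answers the first case, and when $\BiClique(G) < r$ we have $\Den_q(G) < O(q^{-1/r})$, so $\mathcal{A}$ answers the second. Thus $\mathcal{A}$ distinguishes $\Clique(G) \ge q$ from $\BiClique(G) < r$ in $O_{q,r}(|V(G)|^{\delta\sqrt{r}})$ time, contradicting Theorem~\ref{thm:cliquevbiclique}.

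The substantive difficulty is entirely upstream: it is Theorem~\ref{thm:cliquevbiclique}, established through the subsampled Manurangsi construction of Lemma~\ref{lem:dks-reduction} and Theorem~\ref{thm:mbb-fpt-inapprox}, that carries the weight, and the $\sqrt{r}$ in the exponent of the running-time bound is passed through verbatim precisely because the reduction here is the identity. At this step the only points requiring care are (i) the density computation from KST --- that $O(q^{2-1/r})$ edges over $\binom{q}{2}$ pairs is $O(q^{-1/r})$ --- and the matching of hidden constants against the two-sided promise, and (ii) observing that since $|V(G)|$ is unchanged the full $\delta\sqrt{r}$ exponent is retained, which is exactly what is needed to rule out a $q^{o(1)}$-ratio FPT approximation for \DkS: a purported FPT $\rho$-approximation with $\rho(q) = q^{o(1)}$ and running time $f(q)\,|V(G)|^{c}$ would, for a sufficiently large constant $r$ chosen so that $\delta\sqrt{r} > c$ and $q^{1/r}$ eventually exceeds $\rho(q)$, solve the promise problem of Theorem~\ref{thm:gapdks} in too little time.
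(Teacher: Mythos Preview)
Your approach is exactly the paper's: the authors say only that one ``can simply apply the KST Theorem to Theorem~\ref{thm:cliquevbiclique},'' and your completeness/soundness/running-time argument via the identity reduction is precisely that. One remark: the bound KST actually yields is your $O(q^{-1/r})$, not the $O(q^{-r})$ printed in the statement---this looks like a typo in the paper, and the downstream $k^{o(1)}$-inapproximability of \DkS\ (which you also sketched) goes through with the correct $q^{-1/r}$ bound.
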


From the above theorem, it is easy to show the $k^{o(1)}$-factor FPT inapproximability of Densest $k$-Subgraph as formalized below. We note here that our result applies to a special case of Densest $k$-Subgraph in which the input graph is promised to contain a $k$-clique; this problem is sometimes referred to as \emph{Densest $k$-Subgraph with perfect completeness}~\cite{BravermanKRW17,Man17}.

\begin{lemma}
Assuming Gap-ETH, for every function $f = o(1)$ and every function $t$, there is no $t(k) \cdot n^{O(1)}$-time algorithm such that, given an integer $k$ and any graph $G = (V, E)$ on $n$ vertices that contains at least one $k$-clique, always output $S \subseteq V$ of size $k$ such that $\Den(S) \geq k^{-f(k)}$.
\end{lemma}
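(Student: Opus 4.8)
The plan is to obtain this lemma as a direct consequence of \Cref{thm:gapdks} via a contrapositive argument. I would assume toward a contradiction that there is a function $f = o(1)$, a function $t$, and an algorithm $\A$ running in time $t(k)\cdot n^{D}$ for some constant $D$ that, on every input $(k, G)$ with $G$ an $n$-vertex graph containing a $k$-clique, outputs $S\subseteq V(G)$ with $|S|=k$ and $\Den(S)\ge k^{-f(k)}$. I would then use $\A$ to build a distinguisher contradicting \Cref{thm:gapdks}.

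Concretely, let $\delta,\rho$ be the constants from \Cref{thm:gapdks} and let $c$ be the absolute constant hidden in its $O(q^{-r})$ soundness bound. The key point is that the soundness parameter $r$ in \Cref{thm:gapdks} may be taken to be a \emph{fixed} constant while the completeness parameter $q$ grows, so I would set $r := r_0 := \max\{\rho,\ \lceil (D/\delta)^2 \rceil,\ 2\}$. Since $f=o(1)$, for all sufficiently large $q$ we have $f(q)\le 1$ and hence $q^{r_0-f(q)}\ge q^{r_0-1}\ge q$, so there is a constant $q_1$ with $q^{-f(q)}\ge c\cdot q^{-r_0}$ for all $q\ge q_1$. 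The distinguisher $\B$, on parameters $q\ge\max\{r_0,q_1\}$ (with $r=r_0$) and an input graph $G$ on $n$ vertices, would run $\A(q,G)$, obtain its output $S$, and accept if and only if $|S|=q$ and $\Den(S)\ge c\cdot q^{-r_0}$ (checkable in $O(q^2)$ time once $S$ is known). In the completeness case $\Den_q(G)=1$, the graph $G$ contains a $q$-clique, so $\A$ is within its promised regime and returns $S$ with $|S|=q$ and $\Den(S)\ge q^{-f(q)}\ge c\,q^{-r_0}$; hence $\B$ accepts. In the soundness case $\Den_q(G)< c\,q^{-r_0}$, no size-$q$ subset of $V(G)$ has density $\ge c\,q^{-r_0}$, so $\B$ rejects regardless of what $\A$ outputs. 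The running time of $\B$ is $t(q)\cdot n^{D}+O(q^2)=O_{q,r_0}(n^{D})\le O_{q,r_0}(n^{\delta\sqrt{r_0}})$ by the choice $r_0\ge (D/\delta)^2$, contradicting \Cref{thm:gapdks}.

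Two points need care in the write-up, though neither is a serious obstacle. First, $\A$'s guarantee holds only when $G$ genuinely contains a $k$-clique, so in the soundness case $\A$ may behave arbitrarily; this is harmless because $\B$ never trusts $\A$ blindly — it re-verifies the density of the returned set, and in the soundness case this verification provably fails. Second, the function $f$ need not be computable: $\B$ never evaluates $f$, comparing instead against the fixed threshold $c\,q^{-r_0}$, and $f=o(1)$ enters only in the analysis to guarantee $q^{-f(q)}\ge c\,q^{-r_0}$ for large $q$. The one genuinely load-bearing fact is that \Cref{thm:gapdks} already provides an arbitrarily large completeness-to-soundness gap (arbitrary $q\ge r$) with a \emph{constant} soundness parameter $r$, which is exactly what lets the $n^{O(1)}$ running time of $\A$ fit underneath the $n^{\delta\sqrt r}$ lower bound.
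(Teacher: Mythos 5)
Your proposal is correct and follows essentially the same route as the paper's proof: both argue by contraposition from \Cref{thm:gapdks}, fixing the soundness parameter $r = \max\{\lceil\rho\rceil, \lceil (D/\delta)^2\rceil\}$ as a constant and taking $q$ large enough that $q^{-f(q)}$ exceeds the $O(q^{-r})$ soundness threshold, so that running $\A$ fits under the $O_{q,r}(n^{\delta\sqrt{r}})$ lower bound. Your additional remarks — that the distinguisher must re-verify the density of the returned set because $\A$'s promise fails in the soundness case, and that $f$ is never evaluated — are details the paper leaves implicit, and they are handled correctly.
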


\begin{proof}
Suppose for the sake of contradiction that there is a $t(k) \cdot |V|^{D}$-time algorithm $\A$ that, given an integer $k$ and any graph $G = (V, E)$ that contains a $k$-clique, always outputs $S \subseteq V$ of size $k$ such that $\Den(S) \geq k^{-f(k)}$ for some function $f = o(1)$, some function $t$ and some constant $D > 0$. 

Let $r = \max\{\lceil \rho \rceil, \lceil (D / \delta)^2 \rceil\}$ where $\rho$ is the constant from~\Cref{thm:gapdks}. Note that $O(q^{-r}) = q^{O(1)/\log q - r}$. Now, since $\lim_{q \to \infty} f(q) + O(1)/\log q = 0$, there exists a sufficiently large $q$ such that the term $O(q^{-r})$ is less than $q^{-f(q)}$. In other words, $\A$ can distinguish between the two cases in~\Cref{thm:gapdks} in time $t(q) \cdot n^D = O_{q, r}(|V|^{\delta \sqrt{r}})$, which would break Gap-ETH.
\end{proof}
\section{Conclusion and Discussions}

In this paper, we prove that \Clique and \DomSet are totally FPT inapproximable. In fact, we show a stronger property that they are inherently enumerative, i.e., the best way to approximate both problems is to essentially enumerate all possibilities. Since \Clique and \DomSet are complete problems for the class $\W[1]$ and $\W[2]$ respectively, it might be possible that these two problems can be sources of FPT-inapproximability of many other problems that admit no FPT algorithms. 

We would like to also mention that there are some problems that are known
to be totally FPT-inapproximable under weaker assumptions; examples of such problems are {\em independent dominating set}  and {\em induced path}. 
The former has been shown to be FPT-inapproximable under the
assumption $\FPT \neq \W[2]$ in \cite{DowneyFMR08}. 
For the induced path problem, we show in
\Cref{apx:w1-hardness} that it is FPT-inapproximable under
the assumption $\FPT \neq \W[1]$. It would be interesting to understand whether it is possible to also base total FPT-inapproximability of \Clique and \DomSet under assumptions that are weaker than Gap-ETH, such as $\FPT \neq \W[1]$ or ETH. To this end, we note that Chen and Lin~\cite{ChenL16} showed inapproximability for \DomSet under $\FPT \neq \W[1]$ (resp., ETH), but their inapproximability ratio is only any constant (resp., $\log^{1/4 - \varepsilon}(\OPT)$); if their result could be extended to exclude $f(\OPT)$-approximation for any function $f$, then \DomSet would indeed be totally FPT-inapproximable under weaker assumptions.


Another interesting further research direction is to
study the trade-off between the running time and the approximation ratio of 
problems that are known to be FPT-approximable or admit FPT 
(exact) algorithms.
The exploration of such trade-off may be useful in both 
theory and practice.
%


\section*{Acknowledgment}

We thank Benny Applebaum for sharing with us his result~\cite{App17}.
Pasin would like to thank Igor Shinkar and Daniel Reichman for discussions on a related problem that inspires part of the proof, and Aviad Rubinstein for useful discussions regarding FPT inapproximability. We also thank Igor for pointing us to~\cite{Kayal2014}.
Danupon and Parinya would like to thank Per Austrin for discussions. 
Parinya would also like to thank Nikhil Bansal, Jesper Nederlof, Karl Bringmann and Holger Dell for insightful discussions. 
Bundit would like to thank Uriel Feige for useful discussions on \Clique.

Marek Cygan is supported by the European Research Council (ERC) under the European Union's Horizon 2020 research and innovation programme (grant agreement No 677651). Guy Kortsarz is supported in part by NSF grants 1218620 and 1540547. Bundit Laekhanukit is partially supported by ISF Grant No. 621/12 and I-CORE Grant No. 4/11. Danupon Nanongkai is supported by the European Research Council (ERC) under the European Union's Horizon 2020 research and innovation programme under grant agreement No 715672 and the Swedish Research Council (Reg. No. 2015-04659). Pasin Manurangsi and Luca Trevisan are supported by NSF Grants No. CCF 1540685 and  CCF 1655215.

\printbibliography[heading=bibintoc] 

\appendix
\section{Gap Problems vs Approximation Algorithms}
\label{app:gapvapprox}

In this section, we establish the connections between gap problems and FPT approximation algorithm by proving Proposition~\ref{prop:gapvapprox-min} and Proposition~\ref{prop:gapvapprox-max}. Proposition~\ref{prop:gapvapprox-min} is in fact implied by a result due to Chen et al.~\cite[Proposition 4]{ChenGG06}; we provide a proof of it here for completeness. On the other hand, we are not aware of any prior proof of Proposition~\ref{prop:gapvapprox-max}.

\begin{proof}[Proof of Proposition~\ref{prop:gapvapprox-min}]
We will prove the contrapositive of the statement in the proposition. Suppose that (2) is false, i.e., there exist computable functions $t: \N \to \N, f: \N \to [1, \infty)$ and an algorithm $\B$ such that, for every instance $I$ of $\Pi$, $\B$ on input $I$ runs in time $t(\opt_\Pi(I)) \cdot |I|^D$ for some constant $D$ and outputs $y \in \sol_\Pi(I)$ of cost at most $\opt_\Pi(I) \cdot f(\opt_\Pi(I))$.

Let $t': \N \to \N$ and $f': \N \to [1, \infty)$ be functions that are defined by $t'(k) = \max_{i = 1, \dots, k} t(i)$ and $f'(k) = \max_{i = 1, \dots, k} f(i)$. Since $t$ and $f$ are computable, $t'$ and $f'$ are also computable. 

Let $\A$ be an algorithm that takes in an instance $I$ of $\Pi$ and a positive integer $k$, and works as follows. $\A$ simulates an execution of $\B$ on $I$ step-by-step. If $\B(I)$ does not finish within $t'(k) \cdot |I|^{D}$ time steps, then $\A$ terminates the execution and returns 0. Otherwise, let $y$ be the output of $\B(I)$. $\A$ computes $\cost_\Pi(I, y)$; $\A$ then returns 1 if this cost is at most $k \cdot f'(k)$ and returns 0 otherwise. 

We claim that $\A$ is an $f'$-FPT gap approximation algorithm of $\Pi$. To see that this is the case, first notice that the running time of $\A$ is $O(t'(k) \cdot |I|^{D} + |I|^{O(1)})$ where $|I|^{O(1)}$ denotes the time used to compute the solution cost. Moreover, if $\opt_\Pi(I) > k \cdot f'(k)$, then it is obvious to see that $\A$ always output 0. Finally, if $\opt_\Pi(I) \leq k$, then, by our assumption on $\B$ and the definitions of $t'$ and $f'$, $\B(I)$ finishes in time $t(\opt_\Pi(I)) \cdot |I|^D \leq t'(k) \cdot |I|^D$ and the output solution $y$ has cost at most  $\opt_\Pi(I) \cdot f(\opt_\Pi(I)) \leq k \cdot f'(k)$. Hence, $\A$ always outputs 1 in this case.

As a result, $\A$ is an $f'$-FPT gap approximation algorithm for $\Pi$, which concludes our proof.
\end{proof}

\begin{proof}[Proof of Proposition~\ref{prop:gapvapprox-max}]
We will again prove the contrapositive of the statement in the proposition. Suppose that (2) is false, i.e., there exist computable functions $t: \N \to \N, f: \N \to [1, \infty)$ such that $k / f(k)$ is non-decreasing and $\lim_{k \to \infty} k/f(k) = \infty$, and an algorithm $\B$ such that, for every instance $I$ of $\Pi$, $\B$ on input $I$ runs in time $t(\opt_\Pi(I)) \cdot |I|^D$ for some constant $D$ and outputs $y \in \sol_\Pi(I)$ of cost at least $\opt_\Pi(I) / f(\opt_\Pi(I))$.

Let $t': \N \to \N$ be a function defined by $t'(k) = \max_{i = 1, \dots, k} t(i)$; clearly, $t'$ is computable.

Let $\A$ be an algorithm that takes in an instance $I$ of $\Pi$ and a positive integer $k$, and works as follows. $\A$ simulates an execution of $\B$ on $I$ step-by-step. If $\B(I)$ does not finish within $t'(k) \cdot |I|^{D}$ time steps, then $\A$ terminates the execution and returns 1. Otherwise, let $y$ be the output of $\B(I)$. $\A$ computes $\cost_\Pi(I, y)$; $\A$ then returns 1 if this cost is at least $k / f(k)$ and returns 0 otherwise. 

We claim that $\A$ is an $f$-FPT gap approximation algorithm of $\Pi$. To see that this is the case, first notice that the running time of $\A$ is $O(t'(k) \cdot |I|^{D} + |I|^{O(1)})$ where $|I|^{O(1)}$ denotes the time used to compute the solution cost. Moreover, if $\opt_\Pi(I) < k / f'(k)$, then the running time of $\B(I)$ is at most $t(\opt_\Pi(I)) \cdot |I|^D \leq t'(k) \cdot |I|^D$, which implies that $\A$ returns 0. 

Suppose, on the other hand, that $\opt_\Pi(I) \geq k$. If $\B(I)$ finishes in time $t'(k) \cdot |I|^D$, then, from the guarantee of $\B$, it must output $y \in \sol_\Pi(I)$ with $\cost_\Pi(I, y) \geq \opt_\Pi(I)/f(\opt_\Pi(I))$, which is at least $k/f(k)$ since $k/f(k)$ is non-decreasing. Furthermore, if $\B(I)$ does not finish in the specified time, then $\A$ also returns 1 as desired.

As a result, $\A$ is an $f$-FPT gap approximation algorithm for $\Pi$, which concludes our proof.
\end{proof}
\section{Totally FPT Inapproximable Through FPT Gap Reductions (Proof of  \Cref{prop:gapred-totallyinapprox})}
\label{app:gapreduction}


	We will only show the proof when both $\Pi_0$ and $\Pi_1$ are maximization problems. Other cases can be proved analogously and therefore omitted. 	

	We assume that (i) holds, and will show that if the ``then'' part does not hold, then (ii) also does not hold.	
	Recall from \Cref{def:FPT gap reduction} that (i) implies that there exists $C, D > 0$ such that the reduction from $\Pi_0$ (with parameters $q$ and $r$) to $\Pi_1$ takes $O_{q, r}(|I_0|^C)$ time and always output an instance $I_1$ of size at most $O_{q, r}(|I_0|^D)$ on every input instance $I_0$. Now assume that  the ``then'' part does {\em not} hold; i.e. $\Pi_1$ admits a $(t(k) |I_1|^F)$-time $h$-FPT gap approximation algorithm $\A$ for some function $h(k)= o(k)$ and constant $F$.
    We will show the following claim which says that (ii) does not hold (by \Cref{def:FPT gap approx}). 
	\begin{claim}
	There exists a function $k\geq g'(k)=\omega(1)$ and an algorithm $\B$ that takes any input instance $I_0$ of problem  $\Pi_0$ and integer $k$, and in $O_k(|I_0|^{O(1)})$ time can distinguish between $\opt_{\Pi_0}(I_0) \geq k$ and $\opt_{\Pi_0}(I_0) < g'(k)$. 	
	\end{claim}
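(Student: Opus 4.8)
The plan is to obtain $\B$ by composing the $(f,g)$-FPT gap reduction from $\Pi_0$ to $\Pi_1$ (which we are assuming exists by (i)) with the assumed $h$-FPT gap approximation algorithm $\A$ for $\Pi_1$, choosing the reduction parameters as suitable functions of $k$. Let $C,D$ be the constants from \Cref{def:FPT gap reduction}, so the reduction runs in $O_{q,r}(|I_0|^C)$ time and outputs an instance $I_1$ of $\Pi_1$ of size at most $O_{q,r}(|I_0|^D)$; and let $F$ and $t(\cdot)$ be the exponent and running-time function of $\A$ (so $\A$ on $(I_1,\kappa)$ runs in time $t(\kappa)\cdot|I_1|^F$). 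Given an input $(I_0,k)$, algorithm $\B$ sets $q:=k$, $\phi:=\lfloor f(k)\rfloor$, and $r:=\lfloor \phi/h(\phi)\rfloor$; since $f=\omega(1)$ and $h(\phi)=o(\phi)$, both $\phi$ and $\phi/h(\phi)$ tend to infinity, so there is a constant $k_0$ with $\phi,r\ge 1$ for all $k\ge k_0$. For $k\ge k_0$, $\B$ runs the reduction on $(I_0,q,r)$ to obtain $I_1$, then runs $\A$ on $(I_1,\phi)$ and outputs whatever $\A$ outputs. We define $g'(k):=\min\{k,\ g(r)\}$ for $k\ge k_0$ (and, say, $g'(k):=1$ for the finitely many $k<k_0$).

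Next I would verify correctness, running time, and the properties of $g'$. If $\opt_{\Pi_0}(I_0)\ge k=q$, the reduction guarantees $\opt_{\Pi_1}(I_1)\ge f(q)\ge \phi$, so by the completeness clause of \Cref{def:FPT gap approx} the algorithm $\A$ outputs $1$ on $(I_1,\phi)$; if $\opt_{\Pi_0}(I_0)<g'(k)\le g(r)$, the reduction guarantees $\opt_{\Pi_1}(I_1)<r\le \phi/h(\phi)$, so by the soundness clause $\A$ outputs $0$. Hence $\B$ distinguishes the two cases. For the running time: the reduction costs $O_{q,r}(|I_0|^C)=O_k(|I_0|^C)$ and produces $I_1$ with $|I_1|\le O_k(|I_0|^D)$, after which $\A$ costs $t(\phi)\cdot|I_1|^F = O_k(|I_0|^{DF})$; so $\B$ runs in $O_k(|I_0|^{O(1)})$ time as required. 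Finally $g'(k)\le k$ by construction, $g'$ is computable since $f,g,h$ are, and $g'=\omega(1)$ because $r=\lfloor \phi/h(\phi)\rfloor\to\infty$ and $g=\omega(1)$ force $g(r)\to\infty$, while $k\to\infty$, hence $\min\{k,g(r)\}\to\infty$. For $k<k_0$ one simply runs the same construction with $r:=1$ and second parameter $1$ in place of $\phi$: the reduction is well-defined, and since $h(1)\ge 1$ the soundness clause of $\A$ on $(I_1,1)$ still fires when $\opt_{\Pi_1}(I_1)=0$, so the (trivial) distinguishing task is met on these finitely many values, which contribute nothing to the asymptotics of $g'$.

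The main obstacle — essentially the only step beyond bookkeeping — is the choice of the intermediate parameter $r$. On one hand $r$ must be small enough that $\A$'s soundness, which only certifies $\opt_{\Pi_1}(I_1)<\phi/h(\phi)$, actually implies $\opt_{\Pi_1}(I_1)<r$; this forces $r\le \phi/h(\phi)$. On the other hand $r$ must grow with $k$, since the soundness threshold we ultimately obtain for $\Pi_0$ is $g(r)$ and we need it to be $\omega(1)$. The hypothesis $h(k)=o(k)$ is exactly what makes $\phi/h(\phi)\to\infty$, leaving room for an $r$ meeting both requirements, and pushing this $r$ back through $g$ (legitimate as $g$ is computable and $\omega(1)$) yields the desired $g'$. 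Everything else — composing the two algorithms and controlling the running time through the polynomial size blow-up $D$ and the exponent $F$ — is routine. (This claim then immediately negates (ii): $\B$, with $h'(k):=k/g'(k)=o(k)$, is an $h'$-FPT gap approximation algorithm for $\Pi_0$, contradicting that $\Pi_0$ is totally FPT inapproximable.)
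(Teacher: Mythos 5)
Your proof is correct and follows essentially the same route as the paper: run the reduction with $q=k$ and $r\approx f(k)/h(f(k))$, feed the result to $\A$ with target $f(k)$, and set $g'$ to roughly $g(r)$. The one small divergence is how $g'(k)\le k$ is ensured — you simply define $g'(k):=\min\{k,g(r)\}$, whereas the paper proves a separate claim that $g(f(x))\le x$ for any totally FPT inapproximable $\Pi_0$; your shortcut is cleaner and equally valid since capping $g'$ only weakens the soundness requirement.
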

	
	We now prove the claim by constructing an algorithm $\B$ that performs the following steps. 
	Given $I_0$ and $k$, $\B$ applies the reduction on instance $I_0$ and parameters $k$ and $r=\frac{f(k)}{h(f(k))}$.
	Denote by $I_1$ the instance of $\Pi_1$ produced by the reduction, so we have that $|I_1| = O_k(|I_0|^{O(1)})$.
	The following properties are immediate from the definitions of the FPT gap reductions (\Cref{def:FPT gap reduction}).  
	\begin{itemize}
		\item If $\opt_{\Pi_0}(I_0) \geq k$, we must have $\opt_{\Pi_1}(I_1) \geq f(k)$.

		\item Also, if $\opt_{\Pi_0}(I_0) < g'(k) := g(\frac{f(k)}{h(f(k))})$, then we will have $\opt_{\Pi_1}(I_1) < r=\frac{f(k)}{h(f(k))}$  
	\end{itemize}

	Since $\A$ is an $h$-FPT gap approximation algorithm, it can distinguish between the above two cases, i.e. running $\A$ on $(I_1, f(k))$ will distinguish the above cases, therefore distinguishing between $\opt_{\Pi_0}(I_0) \geq k$ and $\opt_{\Pi_0}(I_0) < g'(k) = g(\frac{f(k)}{h(f(k))})$. 
	This algorithm runs in time $O_k(|I_1|^F) = O_k(|I_0|^{DF})=O_k(|I_0|^{O(1)})$.  
	Notice also that 
	\[g'(k) = g(\frac{f(k)}{h(f(k))}) \leq g(f(k)) \leq k \]
	where the first inequality is because  $f(k)/h(f(k)) \leq f(k)$ (recall that $h(f(k))\geq 1$ by \Cref{def:FPT gap approx}) and because $g$ is non-decreasing, and the second inquality is by the claim below.
	\begin{claim}
		For any totally-FPT-inapproximable problem $\Pi_0$, any functions $g$ and $f$ that satisfy conditions in \Cref{def:FPT gap reduction} and any integer $x$,  $g(f(x)) \leq x$.
	\end{claim}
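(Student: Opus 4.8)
The plan is to prove the inequality $g(f(x)) \le x$ by contradiction, using only that the pair $(f,g)$ comes from an actual FPT gap reduction from $\Pi_0$ to some problem $\Pi_1$ (this is what ``satisfy the conditions in \Cref{def:FPT gap reduction}'' supplies), together with $f,g$ being computable, non-decreasing and $\omega(1)$ as in \Cref{prop:gapred-totallyinapprox}.

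First I would suppose, for contradiction, that $g(f(x)) > x$ for some positive integer $x$, and fix an instance $I_0$ of $\Pi_0$ with $\opt_{\Pi_0}(I_0) = x$ (since $\Pi_0$ is totally FPT inapproximable it is in particular non-trivial, and for the optimization problems one typically takes as $\Pi_0$ here, e.g.\ \MaxCov or \MinLab where $|U|$ or $|V|$ may be set to any prescribed value, such an instance is immediate). Next I would run the reduction on $I_0$ with the two integer parameters set to $q:=x$ and $r:=f(x)$, producing an instance $I_1$ of $\Pi_1$. Now the completeness condition, invoked with the integer $q=x$ (legitimate because $\opt_{\Pi_0}(I_0)=x\ge x=q$), forces $\opt_{\Pi_1}(I_1)\ge f(x)$; and the soundness condition, invoked with the integer $r=f(x)$ (legitimate because $\opt_{\Pi_0}(I_0)=x < g(f(x)) = g(r)$ by the contradiction hypothesis), forces $\opt_{\Pi_1}(I_1) < f(x)$. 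These two bounds are incompatible, so no such $x$ exists, which is the claim. With this in hand the remaining line of \Cref{prop:gapred-totallyinapprox} is immediate: $g'(k)=g\!\bigl(f(k)/h(f(k))\bigr)\le g(f(k))\le k$ by monotonicity of $g$ and $h\ge 1$.

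The step I expect to be the main obstacle is justifying the existence of an instance of $\Pi_0$ whose optimum equals exactly $x$: total FPT inapproximability only guarantees unbounded optima, not optima of every value, so in full generality the bare statement needs this extra input about $\Pi_0$. I would handle this cleanly by noting that one may assume $g(f(x))\le x$ \emph{without loss of generality}: replace $g$ by $\tilde g:=\min(g,\hat f)$, where $\hat f(y):=\min\{x\in\N : f(x)\ge y\}$. Because $f=\omega(1)$, the function $\hat f$ is well defined, finite, non-decreasing, and itself $\omega(1)$; hence $\tilde g$ is still computable, non-decreasing and $\omega(1)$, it still certifies the soundness direction of the reduction (shrinking $g$ only weakens the hypothesis ``$\opt_{\Pi_0}(I_0)<g(r)$'' of the soundness implication), and it satisfies $\tilde g(f(x))\le\hat f(f(x))\le x$ for every $x$. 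With the reduction $(f,\tilde g)$ in place the claim holds verbatim, so either route closes the argument.
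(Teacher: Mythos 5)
Your core argument is the same as the paper's: apply the completeness condition with $q=x$ and the soundness condition (in contrapositive) with $r=f(x)$ to a single instance, and compare the two resulting bounds on $\opt_{\Pi_1}(I_1)$; whether one phrases this as a contradiction or as a direct chain of inequalities is cosmetic. The one place you go beyond the paper is worth noting: you are right that the argument needs an instance with $\opt_{\Pi_0}(I_0)$ \emph{at most} $x$ as well as at least $x$, and the paper's own proof quietly elides this --- it selects $I_0$ with $\opt_{\Pi_0}(I_0)\ge x$ (justified only by $\opt_{\Pi_0}=\omega(1)$) and then concludes with ``$x\ge \opt_{\Pi_0}(I_0)\ge g(f(x))$'', where the first inequality is unsupported. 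For the concrete problems the reduction is applied to (e.g.\ \MaxCov, \Clique) instances of every optimum value exist, so the gap is harmless there, but as a statement about an arbitrary totally FPT inapproximable $\Pi_0$ the claim can literally fail (e.g.\ if all optima of $\Pi_0$ lie in a sparse set, one only gets $g(f(x))\le \min\{\opt_{\Pi_0}(I_0): \opt_{\Pi_0}(I_0)\ge x\}$). Your normalization $\tilde g=\min(g,\hat f)$ is a clean and correct repair: $\hat f$ is computable, non-decreasing and $\omega(1)$ because $f$ is, shrinking $g$ only weakens the hypothesis of the soundness implication so $(f,\tilde g)$ is still an FPT gap reduction, and the downstream use in \Cref{prop:gapred-totallyinapprox} (namely $g'(k)\le k$ and $g'=\omega(1)$) survives the replacement. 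So your proposal is correct and, on this point, slightly more careful than the paper.
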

	\begin{proof}
		For any integer $x$, consider instance $I_0$ such that $\opt_{\Pi_0}(I_0) \geq x$ (such $I_0$ exists because $\opt_{\Pi_0}=\omega(1)$; otherwise $\Pi_0$ is not totally-FPT-inapproximable (e.g. we can always output $1$ if $\Pi_0$ is a maximization problem)). By the second condition in  \Cref{def:FPT gap reduction}, $\opt_{\Pi_1}(I_1)\geq f(x)$. Applying the contrapositive of the third condition with $r=f(x)$ (thus $\opt_{\Pi_1}(I_1)\geq r$), we have $\opt_{\Pi_0}(I_0)\geq g(r)=g(f(x))$. Thus, $x\geq \opt_{\Pi_0}(I_0)\geq g(f(x))$ as claimed. 	
	\end{proof}
	To complete the proof, one only needs to argue that $g(\frac{f(k)}{h(f(k))}) = \omega(1)$, and this simply follows from the fact that $f(k) = \omega(1)$, $g(k) = \omega(1)$ and that $k/h(k) = \omega(1)$.


%
%
%
%
%


\section{FTP-Inapproximability under W[1]-Hardness}
\label{apx:w1-hardness}

In this section, we show an example of problems that have no
FPT-approximation algorithm unless W[1]=FPT, namely 
the {\em maximum induced path} problem (\IPath).

In \IPath, we are given a graph $G$, and the goal is to find
a maximum size subset of vertices $S\subseteq V(G)$ such that
$S$ induces a path in $G$.
We will show that \IPath has no FPT-approximation algorithm.
Implicit in our reduction is a reduction from 
$k$-\Clique to the {\em multi-colored clique} problem.

\begin{theorem}
\label{thm:hardness-ipath}
Unless \mbox{W[1]=FPT}, for any positive integers $q:1\leq q \leq n^{1-\delta}$,
for any $\delta<0$, 
given a graph $G$ on $n$ vertices and for any function
$t:\reals\rightarrow\reals$, there is no $t(k)\poly(n)$-time
algorithm that distinguishes between the following two cases:
\begin{itemize}
\item $\IPath(G) \geq 2q\cdot k$.
\item $\IPath(G) \leq 4(k-1)$.
\end{itemize}
\end{theorem}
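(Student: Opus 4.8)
The plan is to reduce from \textbf{Multicolored $k$-Clique}, which is $\W[1]$-hard and which is itself obtained from $k$-\Clique by the standard colouring gadget (take $k$ disjoint copies $V_1,\dots,V_k$ of $V(H)$ and join the copy of $u\in V_i$ to the copy of $v\in V_j$ for $i\ne j$ iff $uv\in E(H)$; a multicoloured $k$-clique then corresponds to a genuine $k$-clique of $H$). This is the ``reduction from $k$-\Clique to the multi-coloured clique problem'' alluded to in the statement. Fix an instance $(H;V_1,\dots,V_k)$ with $|V_i|=N$ and the padding parameter $q$. I would first build a \emph{verification gadget} $W$ with two distinguished vertices $\mathrm{in}(W),\mathrm{out}(W)$ on $O(kN)$ vertices, and then set $G$ to be the series composition of $2q$ vertex-disjoint copies $W_1,\dots,W_{2q}$ of $W$, where $\mathrm{out}(W_r)$ is joined to $\mathrm{in}(W_{r+1})$ and there are no other edges between distinct copies. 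The gadget $W$ is a chain $S_1-S_2-\cdots-S_k$ of \emph{selection sub-gadgets}: $S_i$ is a ``bundle'' containing, for every $v\in V_i$, a short private thread, so that traversing $S_i$ from its left end to its right end along an induced path forces a walk along exactly one thread, i.e.\ the choice of a vertex $w_i\in V_i$; consecutive sub-gadgets share an endpoint, $\mathrm{in}(W)$ is the left end of $S_1$ and $\mathrm{out}(W)$ the right end of $S_k$. Consistency with $H$ is enforced by \emph{chords}: for $i<j$, the thread of $u$ in $S_i$ and the thread of $v$ in $S_j$ are joined (say, at their midpoints) exactly when $uv\notin E(H)$; thus any induced path that fully traverses several sub-gadgets must have selected a pairwise-adjacent (hence clique) set of representatives.

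For \textbf{completeness}, given a multicoloured clique $\{w_1,\dots,w_k\}$, route through each copy $W_r$ the path that picks $w_i$ in $S_i^{(r)}$ for all $i$; since the $w_i$ are pairwise adjacent no chord inside $W_r$ is activated, and since distinct copies are non-adjacent except at the glue edges no chord across copies is activated either, so concatenating the $2q$ full traversals yields a single induced $\mathrm{in}(W_1)$--$\mathrm{out}(W_{2q})$ path of length at least $2qk$. For \textbf{soundness} one argues the contrapositive: reaching $\mathrm{out}(W_r)$ from $\mathrm{in}(W_r)$ along an induced path requires fully traversing $S_1^{(r)},\dots,S_k^{(r)}$ and hence exhibits a multicoloured $k$-clique; therefore, if $H$ has no multicoloured $k$-clique, no induced path crosses an entire copy, so (since the only inter-copy edges are the single glue edges) every induced path of $G$ is confined to the interiors of at most two consecutive copies, straddling at most one glue edge. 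Within each such copy the path traverses selection sub-gadgets in order and completes at most $k-1$ of them (completing all $k$ would again exhibit a $k$-clique), and since each selection sub-gadget and each connector contributes only a bounded number of path vertices, this yields an upper bound of the claimed form $4(k-1)$, for a gadget calibrated to that constant.

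Finally, $|V(G)|=\Theta(q\,k\,N)$ and the whole construction is polynomial time; padding with isolated vertices, we may also ensure $q\le|V(G)|^{1-\delta}$ in the stated range. Hence any algorithm running in $t(k)\cdot\poly(|V(G)|)$ time that distinguishes $\IPath(G)\ge 2qk$ from $\IPath(G)\le 4(k-1)$ would decide Multicolored $k$-Clique in time $t(k)\cdot\poly(qkN)$; taking $q$ to be a function of $k$ alone (as one does to derive $f(\opt)$-inapproximability for every $f=o(\opt)$, using $\opt\le 2qk$ in the yes-case and $\opt\le 4(k-1)$ in the no-case), this is an $\FPT$ running time, contradicting $\FPT\ne\W[1]$.

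\textbf{Main obstacle.} The delicate part is the gadget engineering that makes soundness hold with a \emph{constant-per-transition} loss (leading to the $4(k-1)$ bound) rather than a $q$-per-transition loss: one must design the threads, the connectors between consecutive sub-gadgets, and the chord pattern so that (i) a long induced path cannot cheat --- it cannot skip a sub-gadget, re-enter one, shortcut between non-consecutive sub-gadgets, or leak from a partial traversal of $W_r$ into $W_{r\pm 1}$ without first completing $W_r$ --- while simultaneously (ii) the intended full traversal has \emph{no} unintended chords at all. Reconciling (i) and (ii) together with verifying every claimed edge/non-edge of $G$ is where essentially all the work lies; by contrast the reduction from $\W[1]$-hardness and the gap amplification by $2q$-fold repetition are routine.
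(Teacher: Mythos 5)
Your architecture coincides with the paper's: a series composition of $\Theta(q)$ identical verification gadgets, one selection sub-gadget per colour class, complement-encoded chords for non-edges of $H$, completeness $2qk$, and soundness by confining any induced path to at most two consecutive copies with at most $k-1$ completed selections in each. The one piece you defer --- the gadget itself, which you correctly identify as carrying all the weight --- is where the paper's construction is considerably simpler than the thread-and-midpoint design you envision, and it resolves your tension between (i) and (ii) almost for free. In the paper, the $j$-th selection sub-gadget of the $i$-th copy $Z_i$ is simply a clique $V_{i,j}$ on a copy of $V(H)$ (a clique contributes at most two vertices to any induced path, so no ``private threads'' are needed); the connector between consecutive sub-gadgets is a single dummy vertex $x_{i,j}$ adjacent to all of $V_{i,j-1}\cup V_{i,j}$; and the consistency constraints are direct edges rather than midpoint chords: the $k$ copies of one vertex of $H$ across the colour classes of $Z_i$ form a clique (forcing distinct selections, i.e.\ the multicolouring is built in rather than assumed), and $u_{i,j}$ is joined to $v_{i,j'}$ whenever $uv\notin E(H)$. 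An induced path inside $Z_i$ is then forced to alternate dummies and selected vertices; since two selected vertices are never consecutive on such a path, any pair of them that is adjacent in $G$ would be a chord, so the selected vertices must be distinct and pairwise adjacent in $H$, hence at most $k-1$ per copy, i.e.\ at most $2(k-1)$ vertices per copy and $4(k-1)$ over the at most two consecutive copies an induced path can touch. The intended traversal uses all $2k$ vertices (dummies included) of each of the $q$ copies, giving $2qk$; this is the same count as your ``$2q$ copies of a $k$-vertex traversal,'' just bookkept differently. So no calibration of thread lengths or connector sizes is required --- the clique structure of the bundles plus the single-dummy connectors already deliver the constant $4$.
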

\begin{proof}
The reduction is as follows.
Take a graph $H$ of a $k$-\Clique instance. 
Then we construct a graph $G$ as follows.
First, we create intermediate graphs $Z_1,\ldots,Z_q$.
Each graph $Z_i$ for $i\in[q]$ is created by
making $k$ copies of $V(H)$, namely, $V_{i,1},\ldots,V_{i,k}$
and form a clique on $V_{i,j}$ for each $j\in[k]$.
So, now, we have $k$ disjoint cliques.
For each vertex $v\in V(H)$, we pick a copy of $v$,
one from each $V_{i,j}$, say $v_{i,j}$, and 
we form a clique on $\{v_{i,1},\ldots,v_{i,k}\}$.
Next, for each edge $uv\not\in E(H)$, we add edges 
$u_{i,j}v_{i,j'}$, for all $j,j'\in[k]$, where $u_{i,j}$ and $v_{i,j'}$
are the copy of $u$ in $V_{i,j}$ and the copy of $v$ in $V_{i,j'}$, 
respectively.
Next, we add a dummy vertex $x_{i,j}$ for each $V_{i,j}$
and add edges joining $x_{i,j}$ to every vertex of $V_{i,j}$ and
to every vertex of $V_{i,j-1}$ if $j \geq 2$.
Finally, we join the graph $Z_i$ for all $i\in[q]$ 
to be of a form $(Z_1,Z_2,\ldots,Z_{k})$.
To be precise, for each graph $Z_i$ with $i \geq 2$,
we join the vertex $x_{i,1}$ (which belongs to $Z_i$) 
to every vertex of $V_{i-1,q}$ (which belongs to $Z_{i+1}$).

\medskip

\noindent{\bf Completeness.}
First, suppose $\Clique(H) \geq k$.
We will show that $\IPath(G) \geq 2q\cdot k$.
We take a subset of vertices $S\subseteq V(H)$ that induces
a clique on $H$.
Let us name vertices in $S$ by $v^1,\ldots,v^k$.
For each $j\in [k]$, we pick the copies $v^j_{i,j}$ of
$v^j$ from $V_{i,j}$ for all $i\in[q]$.
We then pick all the vertices $x_{i,j}$ for $i\in[k]$ and $j\in[q]$.
We denote this set of vertices by $S'$. 
It is not hard to see that for any distinct vertices $v^{j},v^{j'}\in S$,
their copies $v^j_{i,j}$ and $v^{j'}_{i',j'}$ are not adjacent,
and each vertex $x_{i,j}$ has exactly two neighbors:
$v^j_{i,j}$ and $u^{j-1}_{i,j-1}$ (or $u^k_{i-1,k}$).
Therefore, $S'$ induces a path in $G$ of size $2qk$. 

\medskip

\noindent{\bf Soundness.}
Suppose $\Clique(H) < k$, i.e., $H$ has no clique of size $k$.
We will show that $\IPath(G)\leq 4(k-1)$.
To see this, let $S'\subseteq V(G)$ be a subset of vertices that
induces a path $G[S']$ in $G$.
Observe that, for $i\in[q]$, $G[S']\cap Z_i$ must be a path 
of the form $(x_{i,a},v^a_{i,a},\ldots,x_{i,k},v^b_{i,b})$.
Moreover, $v^\ell_{i,\ell}$ and $v^{\ell'}_{i,\ell'}$ are not adjacent in $G$ 
for any $\ell\neq\ell'$, meaning that $v^\ell_{i,\ell}$ and
$v^{\ell'}_{i,\ell'}$ are not copies of the same vertex in $H$,
and the set $\{v^{\ell}\}_{a\leq \ell\leq b}$ induces a clique in $H$.
Thus, $a-b+1 < k$, and $G[S']\cap Z_i$ can have at most $2(k-1)$
vertices.
It follows that any induced path $G[S']$ of $G$ can contain vertices
from at most two subgraphs, say $Z_i$ and $Z_{i+1}$.
Therefore, we conclude that $|S'| \leq 4(k-1)$.
\end{proof}

The FPT-inapproximable of \IPath follows directly from
Theorem~\ref{thm:hardness-ipath}.

\begin{corollary}
\label{cor:fpt-inapprox-ipath}
Unless W[1]=FPT, there is no $f(k)$-approximation algorithm for
\IPath that runs in $t(k)\poly(n)$-time for any functions $f$ and
$t$ depending only on $k$. 
\end{corollary}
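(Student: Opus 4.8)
The plan is to obtain the corollary from Theorem~\ref{thm:hardness-ipath} by contraposition. It is cleanest to establish the formally stronger statement that, unless $\W[1]=\FPT$, \IPath is totally FPT inapproximable in the sense of Definition~\ref{def:totalinapprox}; the corollary then follows from Proposition~\ref{prop:gapvapprox-max}. So I would assume, for contradiction, that \IPath is \emph{not} totally FPT inapproximable: there is a computable $f:\N\to[1,\infty)$ with $f(k)=o(k)$ and an $f$-FPT gap approximation algorithm $\A$ for \IPath running in time $t(k)\cdot\poly(n)$ on input $(I,k)$, for some computable $t$. I will use $\A$ to decide $k$-\Clique in FPT time; since $k$-\Clique is $\W[1]$-complete, this yields $\FPT=\W[1]$.

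Given a $k$-\Clique instance $H$ (we may assume $k\ge 2$, since $k\le 1$ is trivial), the first step is to pick the amplification parameter: let $q=q(k)$ be the least positive integer with $f(2qk)<\tfrac{2qk}{4(k-1)}$. Such a $q$ exists because $f(k)=o(k)$ forces $k/f(k)\to\infty$, so $\tfrac{2qk}{f(2qk)}\to\infty$ as $q\to\infty$ and hence exceeds $4(k-1)$ for all large enough $q$ (depending only on $k$); and since $f$ is computable, $q(k)$ is found by brute-force search and is itself a computable function of $k$. Next, run the reduction of Theorem~\ref{thm:hardness-ipath} on $H$ with this value of $q$; this produces a graph $G$ with $|V(G)|=q(k)\cdot O(k)\cdot|V(H)|$ vertices — of the form $(\text{function of }k)\cdot\poly(|H|)$ — and takes $\poly(|V(G)|)$ time.

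Finally, run $\A$ on the input $(G,\,2qk)$ and return its answer. Correctness is precisely the completeness/soundness of the reduction: if $\Clique(H)\ge k$ then $\IPath(G)\ge 2qk$ and $\A$ outputs $1$; if $\Clique(H)<k$ then $\IPath(G)\le 4(k-1)<\tfrac{2qk}{f(2qk)}$ by the choice of $q$, and $\A$ outputs $0$. The total running time is the time to compute $q(k)$, plus the reduction time $\poly(|V(G)|)=h(k)\cdot\poly(|H|)$, plus $\A$'s running time $t(2q(k)k)\cdot\poly(|V(G)|)=h'(k)\cdot\poly(|H|)$ — all of the form $(\text{function of }k)\cdot\poly(|H|)$. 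Thus $k$-\Clique is fixed-parameter tractable, contradicting $\W[1]\neq\FPT$.

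The one delicate point, and the step I expect to be the main obstacle, is the choice of $q$: the inequality $f(2qk)<\tfrac{2qk}{4(k-1)}$ contains $q$ on both sides, so one must invoke the sublinearity $f(k)=o(k)$ to guarantee that a suitable $q$ exists, and the computability of $f$ to guarantee that $q(k)$ can be produced algorithmically. Everything else — the reduction itself, the gap bounds $2qk$ versus $4(k-1)$, and the running-time bookkeeping — is immediate once Theorem~\ref{thm:hardness-ipath} is available.
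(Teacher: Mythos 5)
Your proof is correct and is exactly the argument the paper intends (the paper merely asserts the corollary "follows directly" from Theorem~\ref{thm:hardness-ipath} without spelling it out): you run the clique-to-\IPath reduction with an amplification parameter $q=q(k)$ chosen, using $f(k)=o(k)$ and the computability of $f$, so that the gap $2qk$ versus $4(k-1)$ defeats the assumed $f$-FPT gap approximation, thereby deciding $k$-\Clique in FPT time. The bookkeeping (existence and computability of $q(k)$, the size bound $|V(G)|=qk(|V(H)|+1)$, and the running-time accounting) is all handled correctly.
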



\section{Known Connections between Problems}
\label{app:trivial-eq}
In this section, we discuss known equivalences between problems in more detail. 

\paragraph{Dominating Set and Set Cover:}
It is easy to see that \DomSet is a special case of \SetCov,
and the reduction from \SetCov to \DomSet is by phrasing
$\mathcal{U}$ and $\mathcal{S}$ as vertices,
forming a clique on $\mathcal{S}$ and there is an edge joining 
a subset $S_i \in \mathcal{S}$ and element $u_j \in \mathcal{U}$ if and only if $u_j$ is an element in $S_i$.

\paragraph{Induced Matching and Independent Set:} We show that Induced Matching is at least as hard to approximate as Independent Set. 
Let $G$ be an input graph of Independent Set. 
We create a graph $G'$ by, for each vertex $v \in V(G)$, create a vertex $v'$ and an edge $v v'$. 
Notice that any independent set $S$ of $G$ corresponds to an induced matching in $G'$: For each $v \in S$, we have an edge $v v'$ in the set $\mathcal{M}$. 
Conversely, for any induced matching $\mathcal{M}$ of $G'$, we may assume that the matching only chooses edges of the form $vv'$. 

\paragraph{More hereditary properties:} We discuss some more natural problems in this class. 
If we define $\Pi$ to be a set of all planar graphs, this is hereditary. 
The corresponding optimization problem is that of computing a maximum induced planar graphs. 
If we define $\Pi$ to be a set of all forests, this is also hereditary, and it gives the problem of computing a maximum induced forest.

\section{Proof Sketch of \Cref{thm:Gap-ETH restated}}
\label{sec:restate-gap-eth}

We will sketch the proof of \Cref{thm:Gap-ETH restated}.

In the forward direction, we use a standard reduction, which is sometimes referred to as the {\em clause-variable game}~\cite{AIM14}. Specifically, we transform a $3$-\SAT instance $\psi$ on $n$ variables $x_1,\ldots,x_n$ and $m$ clauses $C_1,\ldots,C_m$ into a label cover instance $\Gamma=(G=(U,V,E),\Sigma_U,\Sigma_V,\Pi)$ by transforming clauses into left vertices in $U$ and variables into right vertices in $V$,
and there is an edge joining a pair of vertices $C_i$ and $x_j$ if
$x_j$ appears in $C_i$.
We take partial assignments as the label sets $\Sigma_U$ and $\Sigma_V$,
and a constraint on each edge asks for a pair $(\alpha,\beta)$ of
labels that are  {\em consistent}, i.e.,
they assign the same value to the same variable (e.g.,
$\alpha=(x_1:1,x_2:0,x_3:1)$ and $\beta=(x_1:1)$  are consistent
whereas $\alpha$ is not consistent with $\beta'=(x_2:1)$),
and $\alpha$ causes $C_i$ to evaluate to true
(i.e., some of the literal in $C_i$ is assigned to true by $\alpha$).
We denote the evaluation of a clause $C_i$ on a partial assignment
$\alpha$ by $C_i(\alpha)$.

To be precise, we have
\[
\begin{array}{l}
 U = \{C_1,\ldots,C_{m}\}, \quad
 V = \{x_1,\ldots,x_{n}\}, \quad
 E = \{C_ix_j: \mbox{$x_j$ appears in the clause $C_i$}\}\\
\Sigma_U = \{0,1\}^3, \quad
\Sigma_V = \{0,1\}, \quad
\Pi_{C_ix_j} = \{(\alpha,\beta): \mbox{$\alpha$ and $\beta$ are consistent
               $\land$ $C_i(\alpha)=\mbox{true}$}\}
\end{array}
\]

It can be seen that $\MaxCov(\Gamma)=\SAT(\psi)$ since the only way
to cover each node $C_i\in U$ is to pick assignments to all vertices
adjacent to $C_i$ so that they are all consistent with the assignment
$\alpha=\sigma_V(C_i)$ (and that $C_i(\alpha)=\mbox{true}$).

The converse direction is not straightforward. 
We apply H{\aa}stad \cite{Hastad01} reduction\footnote{Here we apply only the Hastad's reduction from label cover to 3\SAT, without parallel repetition.} to reduce an instance $\Gamma$ of \MaxCov
to a $3$-SAT instance of size 
$f(|\Sigma_U|+|\Sigma_V|)\cdot \mbox{linear}(|U|+|V|)$
with a hardness gap $1-\varepsilon$, for some constant
$\varepsilon>0$ 
(the hardness gap is different from the original \MaxCov instance).
Note that $f$ in the H{\aa}stad's construction is a doubly exponential
function.
The equivalent between \MaxCov and $3$-\SAT holds only when
$|\Sigma_U|+|\Sigma_V|$ is constant (or at most $\log\log (|V|+|U|)$).
\qed

\section{On Gap-ETH} \label{app:gap-eth}

While Gap-ETH may sound like a very strong assumption, as pointed out in~\cite{Dinur16, ManR16}, there are a few evidences suggesting that the conjecture may indeed be true:
\begin{itemize}
\item In a simplified and slightly inaccurate manner, the PCP theorem~\cite{AroraS98,AroraLMSS98} can be viewed as a polynomial time reduction that takes in a 3-CNF formula $\Phi$ and produces another 3-CNF formula $\Phi'$ such that, if $\Phi$ is satisfiable, then $\Phi'$ is satisfiable, and, if $\Phi$ is unsatisfiable, $\Phi'$ is not only unsatisfiable but also not even $0.99$-satisfiable. By now, it is know that the size of $\Phi'$ can be made size small as $n \polylog(n)$ where $n$ is the size of $\Phi$~\cite{Dinur07}. This means that, assuming ETH, Gap-3\SAT cannot be solved in $2^{o(n/\polylog n)}$ time, which is only a factor of $\polylog n$ off from what we need in Gap-ETH. Indeed, as stated earlier, if a linear-size PCP, one in which $\Phi'$ is of size linear in $n$, exists then Gap-ETH would follow from ETH.
\item No subexponential-time algorithm is known even for the following (easier) problem, which is sometimes referred to as \emph{refutation of random 3-\SAT}: for a constant density parameter $\Delta$, given a 3-CNF formula $\Phi$ with $n$ variables and $m = \Delta n$ clauses, devise an algorithm that outputs either SAT or UNSAT such that the following two conditions are satisfied:
\begin{itemize}
\item If $\Phi$ is satisfiable, the algorithm always output SAT.
\item Over all possible 3-CNF formulae $\Phi$ with $n$ clauses and $m$ variables, the algorihtm outputs UNSAT on at least 0.5 fraction of them.
\end{itemize} 

Note here that, when $\Delta$ is a sufficiently large constant (say 1000), a random 3-CNF formula is, with high probability, not only unsatisfiable but also not even $0.9$-satisfiable. Hence, if Gap-ETH fails, then the algorithm that refutes Gap-ETH will also be a subexponential time algorithm for refutation of random 3-\SAT with density $\Delta$.

Refutation of random 3-\SAT, and more generally random CSPs, is an important question that has connections to many other fields, including hardness of approximation, proof complexity, cryptography and learning theory. We refer the reader to~\cite{AOW15} for a more comprehensive review of known results about the problem and its applications in various areas. Despite being intensely studied for almost three decades, no subexponential-time algorithm is known for the above regime of parameter. In fact, it is known that the Sum-of-Squares hierarchies cannot refute random 3-\SAT with constant density in subexponential time~\cite{Gri01,Sch08}. Given how powerful SDP~\cite{Rag08}, and more specifically Sum-of-Squares~\cite{LRS15}, are for solving (and approximating) CSPs, this suggests that refutation of random 3-\SAT with constant density, and hence Gap-3\SAT, may indeed be exponentially hard or, at the very least, beyond our current techniques.

\item Dinur speculated that Gap-ETH might follow as a consequence of some cryptographic assumption~\cite{Dinur16}. This was recently confirmed by Applebaum~\cite{App17} who showed that Gap-ETH follows from an existence of any exponentially-hard locally-computable one-way function. In fact, he proved an even stronger result that Gap-ETH follows from ETH for some CSPs that satisfy certain ``smoothness'' properties.
\end{itemize}

Lastly, we note that the assumption $m = O(n)$ made in the conjecture can be made without loss of generality. As pointed out in both~\cite{Dinur16} and~\cite{ManR16}, this follows from the fact that, given a 3-\SAT formula $\phi$ with $m$ clauses and $n$ variables, if we create another 3-\SAT formula $\phi'$ by randomly selected $m' = \Delta n$ clauses, then, with high probability, $|\SAT(\phi)/m - \SAT(\phi')/m'| \leq O(1/\Delta)$.

\end{document}